\newcommand{\dbtilde}[1]{\accentset{\approx}{#1}}
\theoremstyle{plain}
\newtheorem{theorem}{Theorem}[section]
\newtheorem{lemma}[theorem]{Lemma}
\newtheorem{proposition}[theorem]{Proposition}
\newtheorem{corollary}[theorem]{Corollary}
\theoremstyle{definition}
\newtheorem{definition}[theorem]{Definition}
\newtheorem{remark}[theorem]{Remark}
\begin{document}

\title{Power spectrum of the circular unitary ensemble}

\author{\hspace{-2cm}Roman Riser$^{1,2}$ and Eugene Kanzieper$^{1,3}$~\footnote{Corresponding author.}}

\address{\hspace{-2cm}$^1$~Department of Mathematics, Holon Institute of Technology, Holon 5810201, Israel\\
         \hspace{-2cm}$^2$~Department of Physics and Research Center for Theoretical Physics\\
         \hspace{-1.85cm} and Astrophysics, University of Haifa, Haifa 3498838, Israel\\
         \hspace{-2cm}$^3$~Department of Physics of Complex Systems, Weizmann Institute of Science,\\
         \hspace{-1.85cm} Rehovot 7610001, Israel
        }
\noindent\newline\newline
\begin{abstract}
We study the power spectrum of eigen-angles of random matrices drawn from the circular unitary ensemble ${\rm CUE}(N)$ and show that it can be evaluated in terms of
either a Fredholm determinant, or a Toeplitz determinant, or a sixth Painlev\'e function. In the limit of infinite-dimensional matrices, $N\rightarrow\infty$, we derive a {\it concise}  parameter-free formula for the power spectrum which involves a fifth Painlev\'e transcendent and interpret it in terms of the ${\rm Sine}_2$ determinantal random point field. Further, we discuss a universality of the predicted power spectrum law and tabulate it \href{http://eugenekanzieper.faculty.hit.ac.il/data.html}{(\textcolor{blue}{follow this url})} for easy use by random-matrix-theory and quantum chaos practitioners.
\noindent\newline\newline\newline\newline\newline\newline
Published in: \href{https://doi.org/10.1016/j.physd.2022.133599}{\textcolor{blue}{Physica D {\bf 444}, 133599 (2023)}}
\end{abstract}
\newpage 
\tableofcontents

\newpage

\section{Introduction} \label{intro}

\subsection{Brief overview and motivation} \label{intro-overview}
The power spectrum characterization of irregular spectra was introduced long ago by A.~Odlyzko in his famous study~\cite{Od-1987} on the spacing distribution between nontrivial zeros of the Riemann zeta function. Yet, this statistical indicator did not receive much attention until it was re-invented by A.~Rela\~{n}o and collaborators~\cite{RGMRF-2002} fifteen years later. Ever since, the power spectrum has emerged as an effective tool for studying both system-specific and universal properties of complex wave and quantum systems on both local and global energy scales (see discussion below). In the context of quantum systems, it reveals~\cite{RGMRF-2002} whether the corresponding classical dynamics is regular or chaotic, or a mixture~\cite{GRRFSVR-2005,SB-2005} of both, and encodes a `degree of chaoticity'~\cite{PRPAG-2018}. In combination with other long- and short-range spectral fluctuation measures, it provides an effective way to identify system symmetries, determine a degree of incompleteness of experimentally measured spectra, and get the clues about systems’ internal dynamics~\cite{GKKMRR-2011}. A comparative analysis of the power spectrum with other statistical measures of spectral fluctuations as well as a brief overview of a power spectrum characterization of real physical systems can be found in Section~1 of Ref.~\cite{ROK-2020}.

A notion of the power spectrum arises naturally when a sequence of ordered discrete energy levels of a quantum system is interpreted as a discrete time random process, with the indices of ordered eigenlevels playing the r\^ole of time. On the physics side, one is normally interested in the power spectrum of {\it unfolded} energy levels in the limit of infinitely long spectral sequences, when emergence of universal laws is anticipated~\cite{GMGW-1998}. To approach this limit, it makes sense to first define and study the power spectrum for arbitrary, {\it not necessarily unfolded}, eigenlevel sequences of finite length.

\begin{definition}\label{d-01}
  Let $\{\theta_1 \le \dots \le \theta_N\}$ be a sequence of {\it ordered} eigenlevels, $N \in {\mathbb N}$, and let
  $\langle \delta\theta_\ell \delta\theta_m \rangle$ be the covariance matrix of level displacements $\delta\theta_\ell = \theta_\ell - \langle \theta_\ell\rangle$ from their mean $\langle \theta_\ell\rangle$. The Fourier transform of the covariance matrix
\begin{eqnarray}\label{ps-def}
    S_N(\omega) = \frac{1}{N\Delta_N^2} \sum_{\ell=1}^N \sum_{m=1}^N \langle \delta \theta_\ell \delta\theta_m \rangle\, e^{i\omega (\ell-m)}, \quad \omega \in {\mathbb R}
\end{eqnarray}
is called the {\it power spectrum of eigenlevels}. Here, $\Delta_N$ is a constant (that may depend on $N$) which acquires a meaning of the mean level spacing for unfolded spectrum. The angular brackets stand for an average over an ensemble of eigenlevel sequences.
\hfill $\blacksquare$
\end{definition}
Since the power spectrum is $2\pi$-periodic, real and even function of $\omega$, it is sufficient to study it in the interval $0 \le \omega \le \omega_{\rm Ny}$, where $\omega_{\rm Ny} = \pi$ is the Nyquist frequency.

Notice that, by tuning the frequency $\omega$ in the power spectrum, one may attend to spectral correlations between either adjacent or distant eigenlevels. Indeed, (i)~at `large' frequencies, $\omega={\mathcal O}(N^0)$, yet below the Nyquist frequency $\omega_{\rm Ny}=\pi$, the distant eigenlevels barely contribute to the power spectrum; characterized by large values of $|\ell-m|$, they produce strongly oscillating terms in Eq.~(\ref{ps-def}) which effectively cancel each other. As the result, $S_N(\omega)$ is mainly shaped by correlations between the nearby levels. (ii)~At low frequencies, $\omega \ll 1$, these oscillations are by far less pronounced thus making a contribution of distant eigenlevels increasingly important. Since their fluctuations are system specific, an infrared part of the power spectrum can be used to uncover nonuniversal aspects of quantum dynamics, see e.g. Ref.~\cite{CLKT-2022}.

Early numerical simulations \cite{RGMRF-2002}, followed by a heuristic description \cite{FGMMRR-2004} of the power spectrum, have revealed that the power spectrum of long eigenlevel sequences discriminates sharply between quantum systems with chaotic and integrable classical dynamics. At low frequencies, two simple power laws, $\sim 1/\omega$ and $\sim 1/\omega^2$, were argued to describe the power spectrum of quantum systems with chaotic and integrable classical dynamics, respectively.

A nonperturbative theory of the power spectrum was developed in Refs.~\cite{ROK-2020,ROK-2017,RK-2021}, where {\it stationarity} of level {\it spacings}~\footnote[1]{A sequence $\{s_1,s_2,\cdots\}$ of level spacings of ordered eigenlevels is said to be stationary, if the average $\langle s_\ell \rangle$ does not depend on the position $\ell$ of the corresponding eigenlevel and the covariance matrix $\langle s_\ell s_m\rangle$ is of the Toeplitz type. For a necessary and sufficient condition of {\it spacings} stationarity, formulated in terms of ordered sequence of {\it eigenlevels}, see Lemma~3.1 in Ref.~\cite{ROK-2020}.} was  the only, but central, assumption. This approach has further been employed to determine the power spectrum for (i) {\it fully chaotic} quantum systems with broken time-reversal symmetry, mimicked by the ``tuned'' circular unitary ensemble~\cite{ROK-2020,ROK-2017}, abbreviated as ${\rm TCUE}(N)$ throughout the paper, and (ii) generic quantum systems with {\it integrable} classical dynamics, described by {\it truncated} random diagonal matrices~\cite{RK-2021}.

While the assumption of stationarity of level spacings can rigorously be justified for a number of spectral models~\footnote[2]{These include: (i) eigenlevels drawn from the ``tuned'' circular ensembles~\cite{ROK-2020,ROK-2017} of random matrices appearing in the context of quantum systems with completely chaotic classical dynamics, (ii) a set of uncorrelated identically distributed eigenlevels~\cite{ROK-2020} and (iii) unfolded  spectra of diagonal random matrices~\cite{RK-2021}; the two latter models are used to account for quantum systems with integrable classical dynamics.} describing quantum systems with chaotic and integrable classical dynamics, generically, a finite-$N$ stationarity of level spacings is the {\it exception} rather than the rule. For one, level spacings sequences in the celebrated triad~\cite{D-1962} of Dyson's circular ensembles do {\it not} possess the required stationarity at finite $N$. Hence, an extension of the theory~\cite{ROK-2020,ROK-2017,RK-2021} to a wider class of spectral models is very much called for.

\subsection{Main results and discussion}\label{m-res-dis}
\noindent
{\bf First result.}---In Section~\ref{exact-theory}, we formulate a nonperturbative theory of the power spectrum for generic, random spectral sequences of finite length $N$ {\it without} assuming stationarity of eigenlevel spacings. Its outcome is summarized in Theorem~\ref{main-technical-theorem} which establishes an exact relation between the power spectrum $S_N(\omega)$ and the moment generating function of the eigenvalue counting function, see Remark~\ref{MGF-CF-remark}. Let us stress that Theorem~\ref{main-technical-theorem} holds well beyond the random-matrix-theory paradigm.
\noindent\newline\newline
{\bf Second result.}---In Section~\ref{power-spectrum-CUE}, we apply this framework to determine the power spectrum in the Dyson's circular unitary ensemble ${\rm CUE}(N)$. In particular, we show that flatness of the Haar measure on $U(N)$ group  reduces our general Theorem~\ref{main-technical-theorem} to its simplified version given by Theorem~\ref{circular-theorem}. Further, we establish three alternative, albeit equivalent, representations of the ${\rm CUE}(N)$ power spectrum: Theorem~\ref{painleve-vi} expresses the power spectrum in terms of a sixth Painlev\'e function. Proposition~\ref{prop-fred} and Proposition~\ref{toep-prop} provide representations in terms of Fredholm and Toeplitz determinants, respectively.
\noindent\newline\newline
{\bf Third (central) result.}---In Sections~\ref{CK-section} and \ref{cue-pv-proof}, we analyze the exact ${\rm CUE}(N)$ solution, given by Theorem~\ref{circular-theorem} and Proposition~\ref{toep-prop}, in the limit of infinite-dimensional unitary matrices to derive a {\it concise} parameter-free formula for the power spectrum which involves a fifth Painlev\'e transcendent. Theorem~\ref{cue-pv} is the {\it central result} of our study.
\begin{theorem}\label{cue-pv}
Let $S_N(\omega)$ denote the power spectrum of the ${\rm CUE}(N)$. For all $0<\omega< \pi$, the limit $\lim_{N\rightarrow\infty} S_N(\omega)$ exists and equals
  \begin{eqnarray} \label{ps-cue-inf}\fl\qquad
    S_\infty(\omega) = \frac{1}{4\pi \sin^2(\omega/2)} {\rm Re\,}\int_{0}^{\infty} d\lambda \, \exp\left(
        \int_{0}^{\lambda}\frac{dt}{t}\, \sigma_0(t; \zeta=1-e^{i\omega})
    \right),
\end{eqnarray}
where $\sigma_0(t;\zeta)$ denotes a family of one-parameter solutions to the fifth Painlev\'e equation ($\sigma$-PV)
\begin{equation} \label{PV-eq-0}
    (t \sigma_0^{\prime\prime})^2 + (t\sigma_0^\prime -\sigma_0) \left(
            t\sigma_0^\prime -\sigma_0 + 4 (\sigma_0^\prime)^2
        \right)= 0,
\end{equation}
which are analytic at $t=0$ and satisfy the boundary condition
\begin{eqnarray}\label{bc-zero-0}
\sigma_0(t; \zeta)=  -\frac{t}{2\pi}\zeta -\left(\frac{t}{2\pi}\right)^2\zeta^2 + {\mathcal O}(t^3) \quad {\rm as} \quad t\rightarrow 0.
\end{eqnarray}
\end{theorem}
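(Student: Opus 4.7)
The plan is to start from the exact finite-$N$ CUE representations of $S_N(\omega)$ furnished by Theorem~\ref{circular-theorem} and Proposition~\ref{toep-prop}, and to pass to the bulk limit $N\to\infty$ in two coordinated steps: (i) identify the limiting objects in that formula as sine-kernel Fredholm determinants with complex fugacity $\zeta=1-e^{i\omega}$, and (ii) invoke the Jimbo--Miwa--M\^ori--Sato characterization of these determinants via the $\sigma$-form of Painlev\'e~V to rewrite the answer in the compact form~(\ref{ps-cue-inf}).

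\textbf{Step 1: rewriting the exact formula.} Using Proposition~\ref{toep-prop}, the sums over $\ell$ and $m$ in the definition~(\ref{ps-def}) collapse, after a discrete Abel/summation-by-parts with respect to the index $\ell-m$, into a single discrete sum $\frac{1}{N}\sum_{n=1}^{N-1}\mathcal{D}_n(\omega)$, where each $\mathcal{D}_n(\omega)$ is a Toeplitz determinant equal to the moment generating function $\mathbb{E}_{{\rm CUE}(N)}[\exp(i\omega\,\mathcal{N}_N(I_n))]$ of the number of eigenangles $\mathcal{N}_N(I_n)$ in an arc $I_n$ of length $2\pi n/N$. The combinatorial prefactor arising from the Abel transformation is precisely the Fej\'er-type weight $1/(4\sin^2(\omega/2))$, which will survive into the $N\to\infty$ formula.

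\textbf{Step 2: bulk scaling limit.} With the microscopic variable $\lambda=n$ held fixed, each Toeplitz determinant admits the classical bulk limit
\begin{equation*}
\lim_{N\to\infty}\mathcal{D}_n(\omega)=\det\!\bigl(1-\zeta\,K_{\rm sine}\big|_{(0,\lambda)}\bigr),\qquad \zeta=1-e^{i\omega},
\end{equation*}
where $K_{\rm sine}(x,y)=\sin\pi(x-y)/[\pi(x-y)]$. The Riemann sum $\tfrac{1}{N}\sum_{n=1}^{N-1}$ then becomes $\tfrac{1}{2\pi}\int_0^\infty d\lambda$ once the overall normalisation by $\Delta_N=2\pi/N$ is accounted for, yielding the prefactor $1/(4\pi\sin^2(\omega/2))$ displayed in~(\ref{ps-cue-inf}). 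Taking $\mathrm{Re}$ follows from pairing the contributions of $e^{i\omega(\ell-m)}$ and $e^{-i\omega(\ell-m)}$, equivalent to $\zeta\leftrightarrow\bar\zeta$.

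\textbf{Step 3: Painlev\'e~V reduction.} Invoke the Jimbo--Miwa--M\^ori--Sato identity
\begin{equation*}
\det\!\bigl(1-\zeta\,K_{\rm sine}\big|_{(0,\lambda)}\bigr)=\exp\!\left(\int_0^\lambda\frac{\sigma_0(t;\zeta)}{t}\,dt\right),
\end{equation*}
with $\sigma_0(t;\zeta)$ the particular analytic-at-zero solution of~(\ref{PV-eq-0}). The boundary condition~(\ref{bc-zero-0}) is recovered from the Neumann expansion: $\mathrm{tr}\,K_{\rm sine}|_{(0,\lambda)}=\lambda$ gives the linear-in-$t$ coefficient $-\zeta/(2\pi)$ after identifying $\sigma_0 = t\,\tfrac{d}{dt}\log\det$ and normalising $\lambda = t/(2\pi)$; the quadratic coefficient follows from $\mathrm{tr}\,K_{\rm sine}^2|_{(0,\lambda)}$. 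Substituting this representation into the limiting integral produces exactly~(\ref{ps-cue-inf}).

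\textbf{Main obstacle.} The delicate point is not the algebraic reduction but the analytic justification of exchanging the $N\to\infty$ limit with the discrete sum over $n$. One needs locally uniform convergence of $\mathcal{D}_n(\omega)\to \det(1-\zeta K_{\rm sine}|_{(0,n)})$ together with $N$-uniform tail bounds that tame the regime $n\asymp N$, where the bulk approximation fails. Since the symbol lives on the unit circle, Szeg\H{o}-type Fisher--Hartwig asymptotics for symbols with a jump of argument $\omega$ must be invoked; the resulting algebraic decay of $|\mathcal{D}_n(\omega)|$, combined with the known large-$t$ expansion $\sigma_0(t;\zeta)\sim -(i\omega/\pi)\,t + (\omega/2\pi)^2 + \mathcal{O}(\log t/t)$, both ensures absolute convergence of the $\lambda$-integral and controls the Riemann-sum error. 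Treating this tail region rigorously (likely carried out in Sections~\ref{CK-section} and \ref{cue-pv-proof}) is the main technical burden of the proof.
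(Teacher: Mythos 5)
Your high-level plan matches the paper's: rewrite $S_N(\omega)$ via the moment generating function of the counting function, represent that MGF as a Toeplitz determinant, pass to the sine-kernel Fredholm determinant in the bulk, and invoke the Jimbo--Miwa--M\^ori--Sato $\sigma$-PV characterization, all while controlling the tail regime $n\asymp N$ via uniform Fisher--Hartwig asymptotics. Steps 2--3 and your diagnosis of the analytic obstacle are essentially what the paper does (Theorem~\ref{thm:CK} from Claeys--Krasovsky, together with the estimates in Lemmas~\ref{lemma-tech} and \ref{lemma-tech-2}).

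However, Step~1 has a genuine gap. You assert that the double sum over covariances $\langle\delta\theta_\ell\,\delta\theta_m\rangle$ of \emph{ordered} eigen-angles ``collapses after a discrete Abel/summation-by-parts'' into the single sum $\tfrac{1}{N}\sum_{n}\mathcal{D}_n(\omega)$ of Toeplitz determinants. That reduction does not exist in this simple form. Passing from the covariance matrix of ordered levels to counting-function generating functions is the nontrivial order-statistics bookkeeping of Lemma~\ref{Lemma-3GF} and Theorem~\ref{main-technical-theorem}: one must relate the marginals $p_\ell$, $p_{\ell m}$ of the $\ell$-th and $m$-th ordered eigenvalues to $E_N(\ell;\cdot)$, integrate by parts, and then exploit the translation invariance of the CUE measure (Theorem~\ref{circular-theorem}). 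The resulting exact finite-$N$ formula, Eq.~(\ref{ps-circular}), is \emph{not} a single weighted sum: it contains a quadratic term $\bigl|\int_0^{2\pi}\Phi_N\bigr|^2$ and a weight $\bigl(1-\varphi/2\pi\bigr)$ in the linear term, and even at the level of Corollary~\ref{ps-master-symmetry} one still has the extra pieces $I_{N,1}(z)$ and $\bigl({\rm Re}\,z^{-N/2}I_{N,0}\bigr)^2/N$. Showing that these vanish as $N\to\infty$ — so that only the $I_{N,0}$ piece survives and one lands on Eq.~(\ref{SN-anticipation}) — requires the detailed estimates of Lemma~\ref{lemma-tech} (which removes the weight $\varphi/2\pi$) and Lemma~\ref{lemma-tech-2} (which shows $I_{N,0}=\mathcal{O}(1)$). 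Your proposal skips this and, by taking $\tfrac{1}{N}\sum_n\mathcal D_n$ as the \emph{exact} finite-$N$ formula, silently assumes the conclusion of those lemmas. (Also note your stated large-$t$ asymptotics of $\sigma_0$ — leading slope and constant — are off by factors; the correct form, in the paper's normalization, is $\sigma_0(t)\sim i\tilde\omega t - 2\tilde\omega^2 + \cdots$ with $\tilde\omega=\omega/2\pi$, cf.\ Eq.~(\ref{eq:bc-inf-Th}).)
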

\begin{remark}\label{LSD-vs-PS}
It is a remarkable fact that both the power spectrum [Eq.~(\ref{ps-cue-inf})] and the level spacing distribution~\cite{JMMS-1980}
\begin{eqnarray}\label{LSD-PV}
    P_\infty(s) = \frac{d^2}{ds^2} \exp \left(
            \int_{0}^{s} \frac{\sigma_0(t;\zeta=1)}{t} dt
        \right)
\end{eqnarray}
in the infinite-dimensional ${\rm CUE}$ ensembles are governed by fifth Painlev\'e transcendents belonging to the {\it same family} of one-parameter solutions~\cite{JMMS-1980,MCT-1986} to the $\sigma$-PV equation specified in Theorem~\ref{cue-pv}. Yet, these two spectral statistics are {\it profoundly different}: while the level spacing distribution $P_\infty(s)$ is merely determined by the {\it gap formation probability} $E_{1/2\pi,\infty}(0;\lambda)$, the power spectrum $S_\infty(\omega)$ is shaped by the {\it entire set of probabilities} $\{E_{1/2\pi,\infty}(\ell;\lambda)\}_{\ell=0}^\infty$ to find exactly $\ell$ unfolded eigen-angles in a spectral interval of a given length, see Remark~\ref{PV-GFP} for the notation. On a formal level, this conclusion emerges out of the expansion Eq.~(\ref{pv-prob-expansion}) which reduces to the gap formation probability at $\zeta=1$, see Eq.~(\ref{LSD-PV}), but remains an (infinite) series when $\zeta= 1 - e^{i\omega}$, see Eq.~(\ref{ps-cue-inf}), for all $0<\omega<\pi$.
\hfill $\blacksquare$
\end{remark}

To illustrate our main result, we turn to Figs.~\ref{Fig-1-cue} and \ref{Fig-2-cue}, where the parameter-free prediction for the power spectrum, Theorem~\ref{cue-pv}, is confronted with the results of a numerical experiment conducted for the large-$N$ circular unitary ensemble ${\rm CUE}(N)$. Let us stress, that the limiting (red) curve $S_\infty(\omega)$ in both figures displays a {\it genuine} $N\rightarrow\infty$ result obtained by a direct numerical evaluation~\footnote[8]{In Refs.~\cite{ROK-2020, ROK-2017}, computation of the power spectrum was based on the ${\rm dPV}$ representation~\cite{FW-2005} of the exact, finite-$N$ Painlev\'e VI solution (akin to Eq.~(\ref{phin-cue-painleve-6}) of Theorem~\ref{Th-PVI}) taken at $N=10^4$, see Figs.~4 and 5 in Ref.~\cite{ROK-2020}.} of the exponent in Eq.~(\ref{ps-cue-inf}). The agreement between our analytical prediction Eq.~(\ref{ps-cue-inf}) and the results of a numerical experiment is nearly perfect. (Seemingly large absolute fluctuations clearly visible at very small frequencies in the inset in Fig.~\ref{Fig-2-cue}, where the {\it difference} $\delta S_\infty(\omega)$ between the power spectrum and its large singular component $1/2\pi\omega$ is displayed, do not have any significant influence on the relative error in the power spectrum $S_\infty(\omega)$ itself, see Fig.~\ref{Fig-1-cue}.)

\begin{figure}
\includegraphics[width=\textwidth]{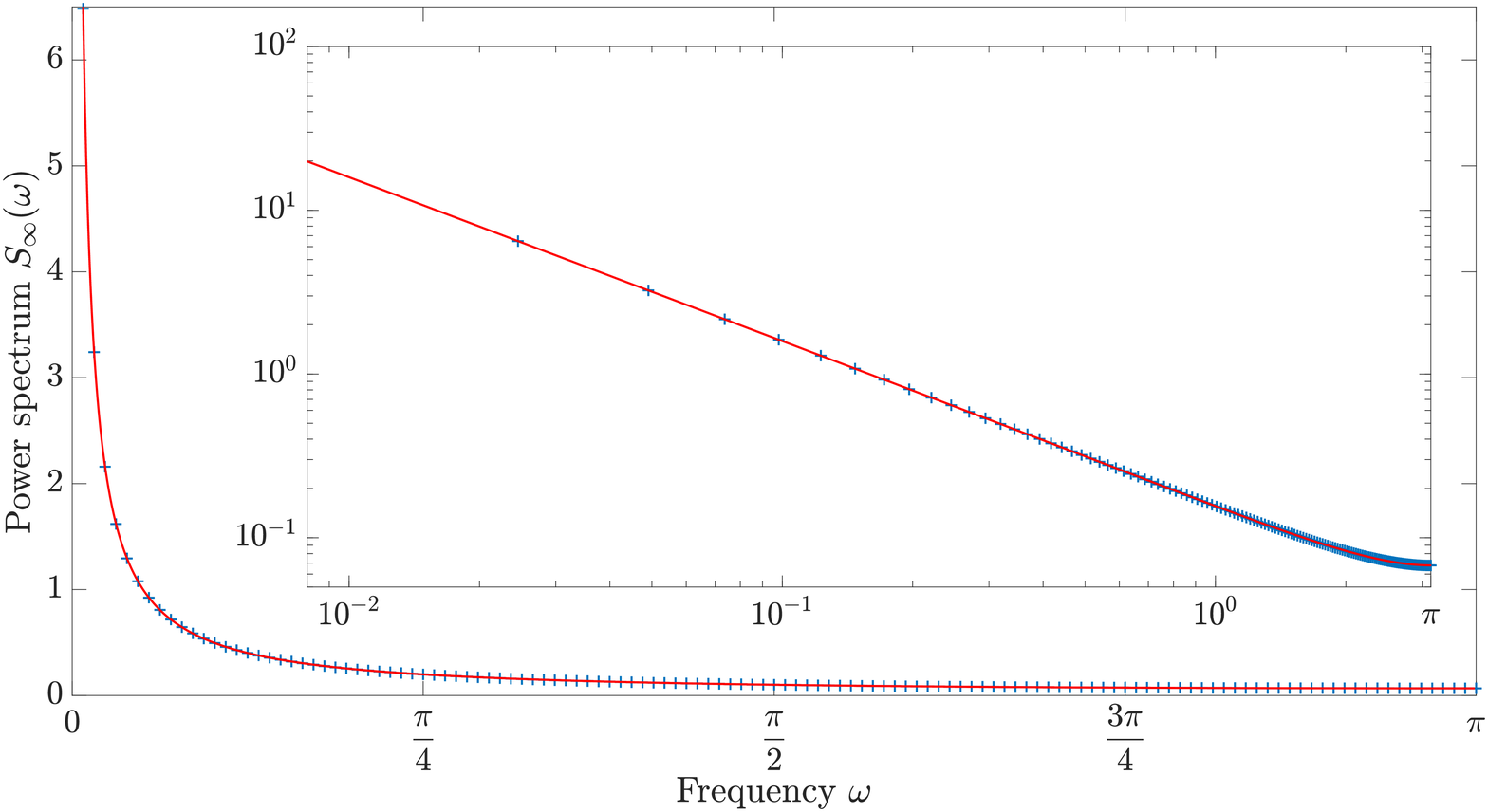}
\caption{A graph for the power spectrum $S_\infty(\omega)$ as a function of frequency. Red curve corresponds to the power spectrum calculated numerically using Eq.~(\ref{ps-cue-inf}), thus representing a genuine $N\rightarrow\infty$ result. Blue crosses show the power spectrum calculated for sequences of $256$ unfolded ${\rm CUE}$ eigen-angles averaged over $10^7$ realizations, borrowed from Ref.~\cite{ROK-2020}. Inset: a log-log plot for the same graph.
\label{Fig-1-cue}}
\end{figure}

\begin{figure}
\includegraphics[width=\textwidth]{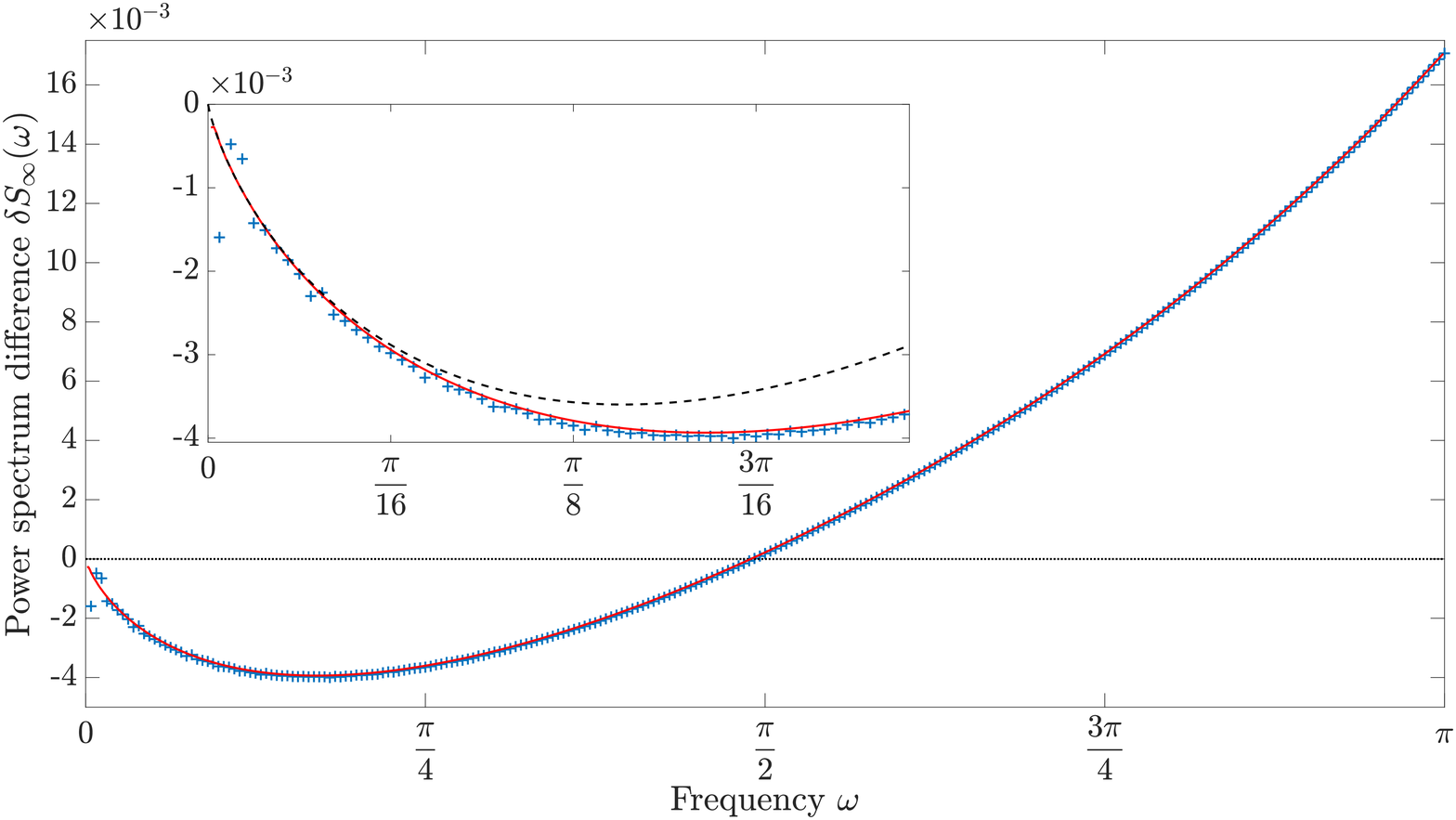}
\caption{Difference $\delta S_\infty(\omega)$ between the power spectrum $S_\infty(\omega)$ and its singular part $1/2\pi\omega$ described by the first term in Eq.~(\ref{S-res-0}).
The singular part of the power spectrum corresponds to $\delta S_\infty(\omega)=0$ as represented by a gray dotted line. Red curve: analytical prediction computed as explained in Fig.~\ref{Fig-1-cue}. Blue crosses: numerical experiment based on $4 \times 10^8$ sequences of $512$ unfolded CUE eigenvalues, borrowed from Ref.~\cite{ROK-2020}. Inset: magnified portion of the same graph for $0\le \omega \le \pi/4$; additional black dashed line displays the difference $\delta S_\infty(\omega)$ calculated using the small-$\omega$ expansion Eq.~(\ref{S-res-0}).
\label{Fig-2-cue}}
\end{figure}
\noindent\newline
{\bf Fourth result.}---In Section~\ref{Sine-2-DPP}, we show that our main result, Theorem~\ref{cue-pv}, can naturally be interpreted in the language of the ${\rm Sine}_2$ determinantal point process~\cite{D-1962-CUE,KS-2009,VV-2020} since the latter describes the bulk scaling limit of unfolded ${\rm CUE}(N)$ eigen-angles as $N\rightarrow\infty$.

\begin{lemma}\label{Sine-2-Lemma}
    Let $n_\rho(\lambda)$ be a random counting function of the $Sine_2$ determinantal point process with the mean local density $\rho=1/2\pi$. For all $0<\omega<\pi$, it holds:
\begin{eqnarray} \label{Sine-2-Lemma-eq}
    S_\infty(\omega) = \frac{1}{4\pi \sin^2(\omega/2)} {\rm Re\,}\int_{0}^{\infty} d\lambda \,\langle
            e^{i\omega n_{1/2\pi}(\lambda)}
    \rangle.
\end{eqnarray}
\end{lemma}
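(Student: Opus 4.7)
The plan is to identify the exponential appearing in the integrand of Eq.~(\ref{ps-cue-inf}) with the characteristic function $\langle e^{i\omega n_{1/2\pi}(\lambda)}\rangle$ of the counting function of the ${\rm Sine}_2$ determinantal point process with intensity $\rho = 1/(2\pi)$; the lemma then follows by direct substitution into Theorem~\ref{cue-pv}. The argument has two ingredients, neither new, but both classical.

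First, for any determinantal point process with correlation kernel $K$, and any Borel set $I\subset\mathbb{R}$ of finite measure, the probability generating function of the counting random variable $n(I):=\#\{\text{points in }I\}$ admits the Fredholm-determinant representation
$$\langle z^{n(I)}\rangle = \det\bigl(I - (1-z)\, K|_I\bigr), \qquad z\in\mathbb{C}.$$
Applied to the ${\rm Sine}_2$ kernel $K_{\rm sine}(x,y) = \sin((x-y)/2)/(\pi(x-y))$ -- which has constant diagonal $K_{\rm sine}(x,x) = 1/(2\pi)$, consistent with the chosen intensity -- on the interval $I = [0,\lambda]$, the specialisation $z = e^{i\omega}$ together with the abbreviation $\zeta := 1-e^{i\omega}$ yields
$$\langle e^{i\omega n_{1/2\pi}(\lambda)}\rangle = \det\bigl(I - \zeta\, K_{\rm sine}|_{[0,\lambda]}\bigr).$$

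Second, invoke the Jimbo--Miwa--M\^{o}ri--Sato sigma-form identity~\cite{JMMS-1980}
$$\det\bigl(I - \zeta\, K_{\rm sine}|_{[0,\lambda]}\bigr) = \exp\!\left(\int_{0}^{\lambda} \frac{dt}{t}\,\sigma_0(t;\zeta)\right),$$
in which $\sigma_0(t;\zeta)$ is precisely the one-parameter family of solutions of the $\sigma$-PV equation Eq.~(\ref{PV-eq-0}) singled out by the boundary condition Eq.~(\ref{bc-zero-0}). This is the same identity that, upon specialisation to $\zeta=1$ (the gap formation probability $E_{1/2\pi,\infty}(0;\lambda)$), produces the level spacing distribution Eq.~(\ref{LSD-PV}). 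Compatibility of the two sides with Eq.~(\ref{bc-zero-0}) is a routine matching of the two small-$t$ expansions: the $O(t)$ coefficient of both sides equals $-\zeta/(2\pi)$, by virtue of $K_{\rm sine}(x,x)=1/(2\pi)$, and the $O(t^2)$ coefficient reproduces the $-(t/2\pi)^2\zeta^2$ term of Eq.~(\ref{bc-zero-0}) via the standard Fredholm expansion.

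Inserting this exponential representation into Eq.~(\ref{ps-cue-inf}) with $\zeta = 1-e^{i\omega}$ delivers Eq.~(\ref{Sine-2-Lemma-eq}) immediately. There is no real obstacle: the analytic heavy lifting is already packaged into Theorem~\ref{cue-pv} and into the classical JMMS formula, and the lemma amounts to a change of perspective -- a reinterpretation of the Painlev\'{e} transcendent as the characteristic function of the bulk sine-process counting statistic. The only point that requires care is a bookkeeping check of the normalisation of $\sigma_0$ dictated by Eq.~(\ref{bc-zero-0}) against the intensity $\rho=1/(2\pi)$ of the ${\rm Sine}_2$ kernel used here, which is precisely the small-$t$ expansion described above.
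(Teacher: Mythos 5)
Your argument is correct and runs along essentially the same lines as the paper's own proof: both identify the exponential in Eq.~(\ref{ps-cue-inf}) as a Fredholm determinant of the sine kernel, invoke the Jimbo--Miwa--M\^{o}ri--Sato identity to equate it with $\langle e^{i\omega n_{1/2\pi}(\lambda)}\rangle$, and then substitute back into Theorem~\ref{cue-pv}. The only difference is cosmetic: the paper takes a detour through the $N\to\infty$ scaling limit of the ${\rm CUE}(N)$ generating function to re-derive the Fredholm-determinant formula for the ${\rm Sine}_2$ counting statistic, whereas you quote it directly as a standard fact about determinantal point processes -- which is perfectly adequate, especially since you verify the normalisation against the boundary condition Eq.~(\ref{bc-zero-0}).
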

We have used this interpretation in Remark~\ref{LSD-vs-PS} to highlight a profound difference between the two spectral fluctuation measures -- level spacing distribution and the power spectrum. It will further be adopted to extend our knowledge about the ${\rm Sine}_2$ process itself (see the ``sixth result'' below).
\noindent\newline\newline
{\bf Fifth result.}---In Section~\ref{small-omega-section}, we derive a small-$\omega$ expansion of the ${\rm CUE}(\infty)$ power spectrum. The reader is referred to
Proposition~\ref{Th-small-omega} for explicit formula.
\noindent\newline\newline
{\bf Sixth result.}---In Appendix~\ref{A-3}, we combine an interpretation of the power spectrum in terms of the ${\rm Sine}_2$ process (``fourth result'') with a small-$\omega$ analysis of the power spectrum (``fifth result''), to derive an explicit formula [Eqs.~(\ref{kappa-3-new}) and (\ref{f3-integrated})] for the {\it third cumulant} $\kappa_3^{(\rho)}(\lambda)=\langle\!\langle n_\rho^3(\lambda)\rangle\!\rangle$ of the ${\rm Sine}_2$ counting function. This result, which we have not managed to find in the random-matrix-theory literature, adds up to the well-known exact formula~\cite{M-2004} for second cumulant $\kappa_2^{(\rho)}(\lambda)=\langle\!\langle n_\rho^2(\lambda)\rangle\!\rangle$ calculated in the early days of the random matrix theory. The third cumulant, albeit expressed in terms of hypergeometric and Meijer $G$-function, exhibits rather simple asymptotic behavior:
\begin{eqnarray} \fl\quad \label{3rd-cum-as-intro}
    \kappa_3^{(\rho)}(\lambda)= \left\{
                                         \begin{array}{ll}
                                           \displaystyle \rho\lambda +{\mathcal O}(\lambda^2), & \hbox{$\lambda \rightarrow 0$;} \\
                                          \displaystyle \frac{2}{\pi \rho \lambda}\left\{ 1 + \frac{\sin(2\pi\rho\lambda)}{\pi\rho\lambda} \big(
                                            \log(2\pi\rho\lambda)+ \gamma \big)
                                            \right\} +{\mathcal O}(\lambda^{-3}), & \hbox{$\lambda \rightarrow\infty$.}
                                         \end{array}
                                       \right.
\end{eqnarray}
\begin{figure}
\includegraphics[width=\textwidth]{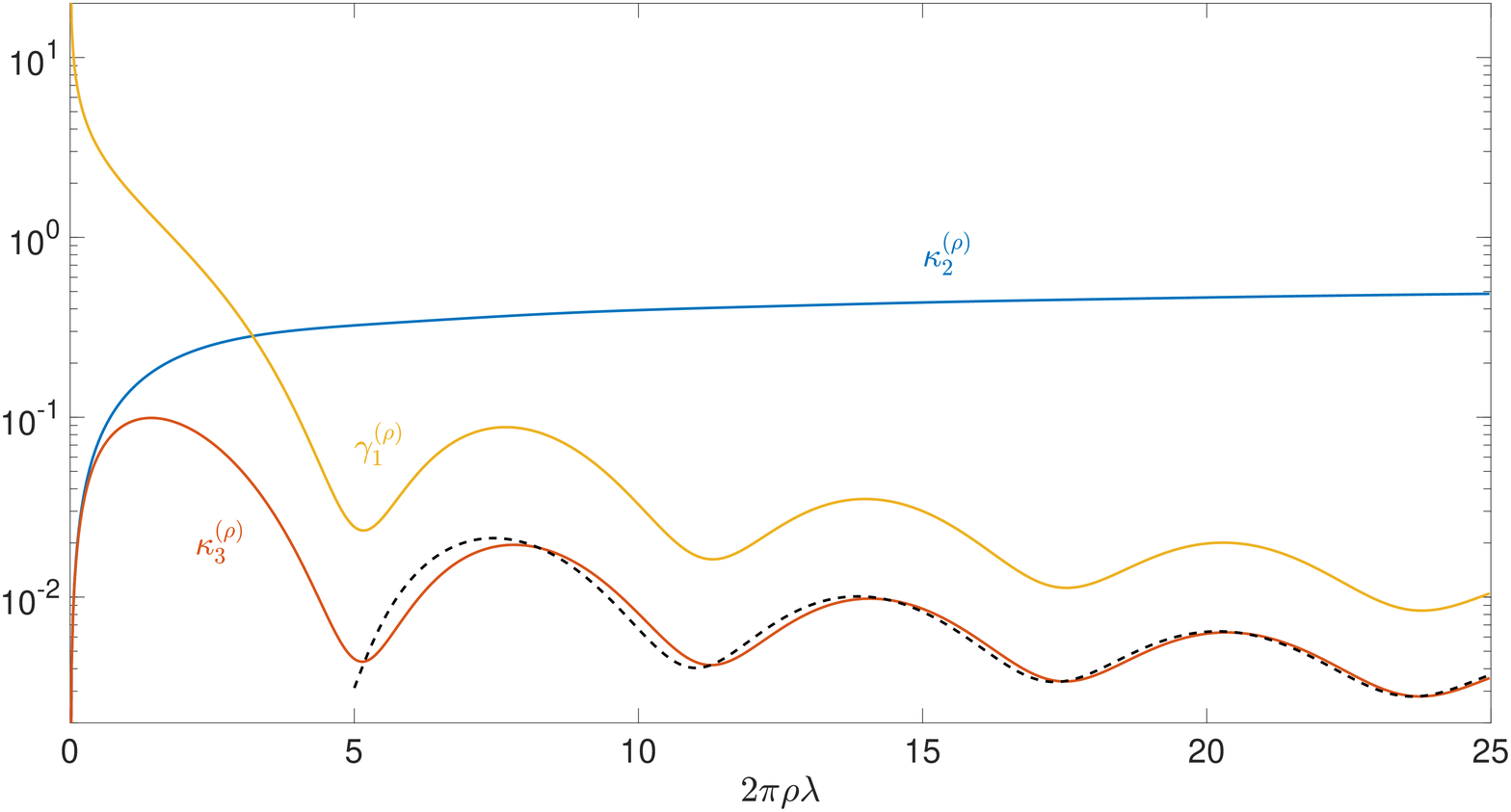}
\caption{A semi-log plot of the second $\kappa_2^{(\rho)}(\lambda)$ and third $\kappa_3^{(\rho)}(\lambda)$ cumulants, and the skewness $\gamma_1^{(\rho)}(\lambda) = \kappa_3^{(\rho)}/(\kappa_2^{(\rho)})^{3/2}$ of the eigenvalue counting function $n_{\rho}(\lambda)$ for the ${\rm Sine}_2$ determinantal random point field with the mean local density $\rho$. Solid blue, red and yellow curves correspond to $\kappa_2^{(\rho)}$ [Eq.~(\ref{2nd-cum})], $\kappa_3^{(\rho)}$ [Eqs.~(\ref{kappa-3-new}) and (\ref{f3-integrated})] and $\gamma_1^{(\rho)}$, respectively. Black dashed line represents a large-$\lambda$ asymptotic behavior of the third cumulant, see Eq.~(\ref{3rd-cum-as-intro}).
\label{Fig-3-cumulants}}
\end{figure}

In Figure~\ref{Fig-3-cumulants}, we plot both the second and the third cumulants alongside the skewness of the ${\rm Sine}_2$ counting function. Remarkably, while the second cumulant (aka number variance) is a monotonically increasing function of the interval length $\lambda$, the third cumulant shows a somewhat unusual oscillatory behavior. The skewness, defined as the ratio $\gamma_1(\lambda) = \kappa_3^{(\rho)}(\lambda)/\kappa_2^{(\rho)}(\lambda)^{3/2}$, displays similar oscillations. Long ago, they were observed in numerical studies of higher-order correlations between nontrivial zeros of the Riemann zeta function, see Ref.~\cite{Od-1987} and Fig.~16.14(a) in Ref.~\cite{M-2004}.
\noindent\newline\newline
{\bf Seventh result.}---Finally, we return to the issue of anticipated universality~\cite{ROK-2020,ROK-2017} of the power spectrum law [Eq.~(\ref{ps-cue-inf})] in the context of random matrix models belonging to the $\beta=2$ Dyson universality class. In Refs.~\cite{ROK-2020,ROK-2017}, a {\it numerical} evidence was presented that the power spectra in large-dimensional ${\rm CUE}$ and ${\rm TCUE}$ ensembles coincide with each other. In Section~\ref{CUE-TCUE-equiv}, we show {\it analytically} that the power spectra in ${\rm CUE}(\infty)$ and ${\rm TCUE}(\infty)$ are indeed described by the same universal law, see Theorem~\ref{tcue-pv}.
\begin{figure}
\includegraphics[width=\textwidth]{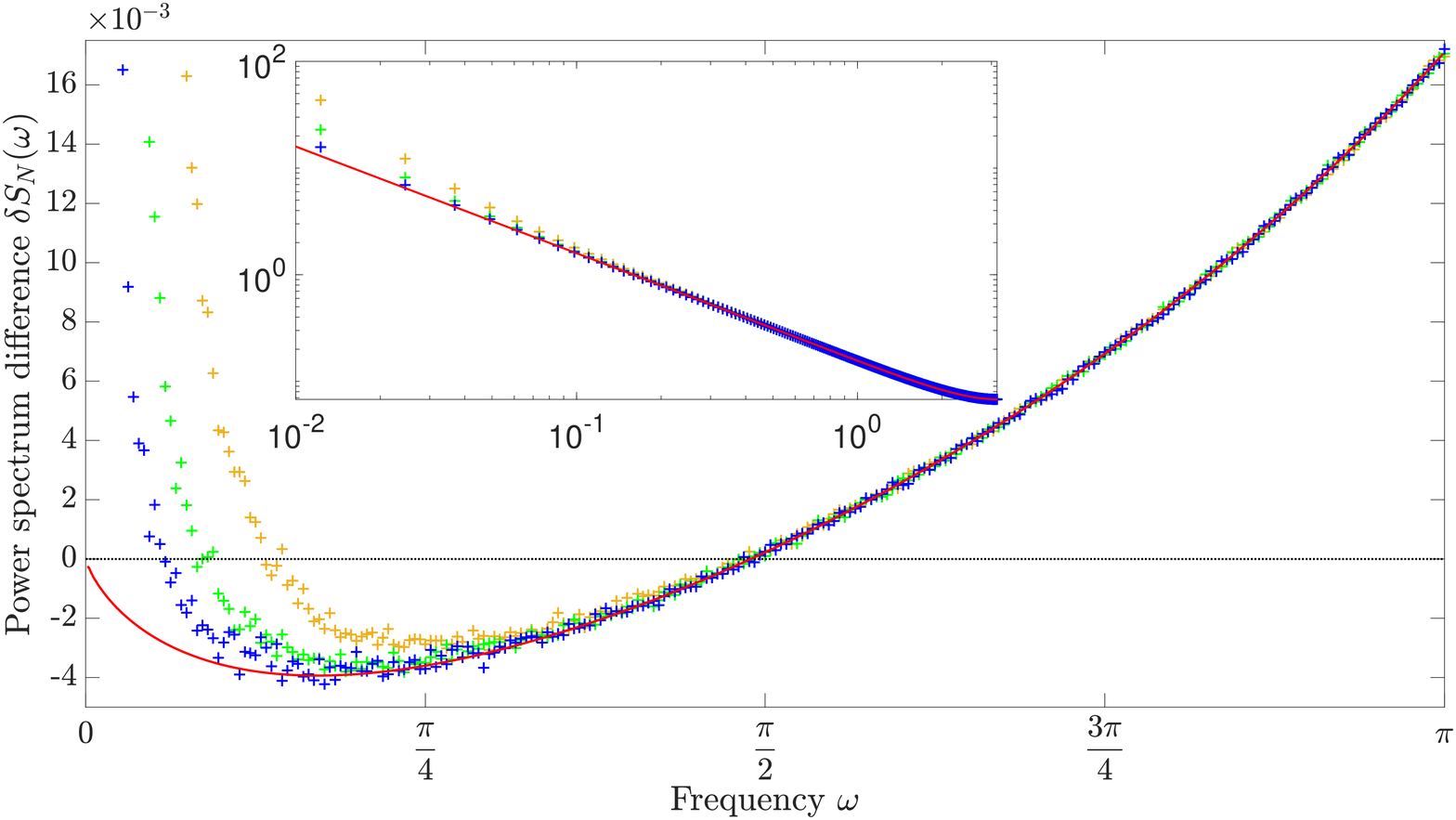}
\caption{Difference $\delta S_N(\omega)$ between the ${\rm GUE}(N)$ power spectrum and the singular part $1/2\pi\omega$ described by the first term in Eq.~(\ref{S-res-0}).
Red curve: analytical prediction computed as explained in Figs.~\ref{Fig-1-cue} and~\ref{Fig-2-cue}. Yellow, green and blue crosses represent a numerical experiment based on $10^6$ sequences of $N=512$ (yellow), $N=2048$ (green) and $N=8192$ (blue) unfolded ${\rm GUE}(N)$ eigenvalues. Inset: a log-log plot for the same data but without subtraction of  the singular part $1/2\pi\omega$.
\label{Fig-4-gue}}
\end{figure}

In addition, we perform an extensive numerical experiment for large-dimensional random matrices drawn from the Gaussian unitary ensemble ${\rm GUE}(N)$. Experimentally produced ${\rm GUE}(N)$ eigenvalues were unfolded with regard to the mean level density computed from all generated samples. Figure~\ref{Fig-4-gue} shows that the power spectrum of unfolded ${\rm GUE}(N)$ eigenvalues converges to the parameter-free ${\rm CUE}(\infty)$ law as the matrix dimension $N$ grows, in concert with the universality conjecture. The discrepancy between experimental data and the theoretical curve is more pronounced at small frequencies, which hints that the convergence of the finite-$N$ ${\rm GUE}$ power spectrum to the universal law is not uniform near $\omega=0$.

Analytical proof of the power spectrum universality remains an open question.
\noindent\newline\newline
{\bf Eighth result.}---We perform a numerical evaluation of the power spectrum law based on Eq.~(\ref{ps-cue-inf}) and tabulate it for easy use by random-matrix-theory and quantum chaos practitioners. The computation of the power spectrum close to the Nyquist frequency was based on the Bornemann code~\cite{B-2010}. The online data available in Ref.~\cite{PS-data-2022} provide the power spectrum $S_\infty(\omega)$ calculated to an absolute and relative errors better than $\delta_{{\rm abs}} =10^{-6}$ and $\delta_{{\rm rel}} =10^{-5}$, respectively.

\section{Power spectrum for general spectral sequences} \label{exact-theory}

\subsection{Preliminaries} \label{preliminaries}
To set the stage, let us consider a sequence $\{0\le \theta_1 \le \dots \le \theta_N < 2\pi\}$ of $N \in {\mathbb N}$ {\it ordered} eigenlevels such that $\{e^{i\theta_j}\}_{j=1}^N$ belong to the unit circle~\footnote[3]{A circular setup is chosen for further convenience. Reformulation for the case of random sequence supported on the real axis is straightforward, see, e.g., Ref.~\cite{ROK-2020}.} and let $P_N(\theta_1,\dots, \theta_N)$ denote their {\it symmetrized} joint probability density function (JPDF)~\footnote[4]{
The normalization is fixed by
$$
\prod_{j=1}^{N}\int_0^{2\pi} \frac{d\theta_j}{2\pi}\, P_N(\theta_1,\dots,\theta_N)=1.
$$} which stays invariant under arbitrary permutation of its arguments.

As soon as the power spectrum is a particular example of the {\it order statistics} (the indices $(\ell,m)$ in Definition~\ref{d-01} label {\it ordered} eigenlevels), it is beneficial to introduce three spectral fluctuation measures:

(i) the probability density function of the $\ell$-th ordered eigenvalue
\begin{eqnarray} \label{PLN} \fl
\qquad
     p_{\ell}(\varphi) =
     \frac{N!}{(N-\ell)!}\frac{1}{(\ell-1)!}
    \left(\int_{0}^{\varphi}\right)^{\ell-1} \left(\int_{\varphi}^{2\pi}\right)^{N-\ell} \, \prod_{j=2}^{N} \frac{d\theta_j}{2\pi}\, P_N(\varphi, \theta_2,\dots,\theta_N);
\end{eqnarray}

(ii) the joint probability density function of the $\ell$-th and $m$-th ordered eigenvalues $(\ell<m)$
\begin{eqnarray} \label{PLMN} \fl
    p_{\ell m}(\varphi,\varphi^\prime) = \frac{N!}{(N-m)!}\frac{1}{(\ell-1)! (m-\ell-1)!}
    \left(\int_{0}^{\varphi}\right)^{\ell-1}
    \left(
        \int_{\varphi}^{\varphi^\prime}
    \right)^{m-\ell-1}
    \left(\int_{\varphi^\prime}^{2\pi}
    \right)^{N-m} \nonumber\\
    \times\,
    \prod_{j=3}^{N} \frac{d\theta_j}{2\pi}\, P_N(\varphi,\varphi^\prime, \theta_3,\dots,\theta_N);
\end{eqnarray}
and

(iii) the probability to find exactly $\ell$ eigenvalues within the interval $(\varphi,\varphi^\prime)$:
\begin{eqnarray} \label{ELN} \fl
 E_{N}(\ell; (\varphi,\varphi^\prime)) = \frac{N!}{\ell!(N-\ell)!}
    \left(\int_{\varphi}^{\varphi^\prime}\right)^\ell \left(\int_{0}^{2\pi} - \int_{\varphi}^{\varphi^\prime}\right)^{N-\ell}
    \prod_{j=1}^{N} \frac{d\theta_j}{2\pi}\, P_N(\theta_1,\dots,\theta_N).
\end{eqnarray}
In both Eqs.~(\ref{PLMN}) and (\ref{ELN}) it is assumed that $\varphi <\varphi^\prime$.

\begin{lemma}\label{Lemma-3GF}
Define the three generating functions
  \begin{eqnarray} \label{GF0-d}
    \Phi_N((\varphi,\varphi^\prime); 1-z) &=& \sum_{\ell=0}^N z^\ell  E_{N}(\ell; (\varphi,\varphi^\prime)), \\
    \label{GF1-d}
    \Phi_N^{(1)}(\varphi;1-z) &=& \sum_{\ell=1}^N z^\ell  p_\ell(\varphi), \\
    \label{GF2-d}
    \Phi_N^{(2)}(\varphi,\varphi^\prime;1-z) &=& \sum_{\ell<m}^N z^{m-\ell}  p_{\ell m}(\varphi,\varphi^\prime),
\end{eqnarray}
where $z \in {\mathbb C}$. The following relations hold:
\begin{eqnarray}
    \label{GF0-rel} \fl \qquad
    \Phi_N((\varphi,\varphi^\prime); 1-z) &=&
    \left( \int_{0}^{2\pi} -(1-z)  \int_{\varphi}^{\varphi^\prime}
    \right)^{N}  \prod_{j=1}^{N} \frac{d\theta_j}{2\pi}\, P_N(\theta_1,\dots,\theta_N), \\
    \label{GF1-rel} \fl \qquad
    \Phi_N^{(1)}(\varphi;1-z) &=&  -z \frac{2\pi}{1-z} \frac{d}{d\varphi} \Bigg(
        \Phi_N((0,\varphi);1-z) - z^N \Bigg), \\
    \label{GF2-rel} \fl \qquad
    \Phi_N^{(2)}(\varphi,\varphi^\prime;1-z) &=&
    -z \left( \frac{2\pi}{1-z}  \right)^2 \frac{\partial^2}{\partial\varphi \partial\varphi^\prime} \Phi_N((\varphi,\varphi^\prime); 1-z).
\end{eqnarray}
\end{lemma}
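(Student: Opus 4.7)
The plan is to establish the three identities~(\ref{GF0-rel}), (\ref{GF1-rel}), (\ref{GF2-rel}) in turn, deriving the first directly from the definitions and then leveraging it for the remaining two.

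For~(\ref{GF0-rel}), I would substitute the explicit formula~(\ref{ELN}) for $E_N(\ell;(\varphi,\varphi^\prime))$ into the defining sum~(\ref{GF0-d}) and apply the binomial theorem in $\ell$. Symmetry of $P_N$ is what makes the argument work: it lets one pool the $\ell$ integrations over $[\varphi,\varphi^\prime]$ with the $N-\ell$ integrations over its complement into one symmetric per-variable operator, $\int_0^{2\pi}-(1-z)\int_\varphi^{\varphi^\prime}$ (with the $d\theta/(2\pi)$ measure implicit), raised to the $N$-th power and applied to $P_N$, which is exactly the right-hand side of~(\ref{GF0-rel}).

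For~(\ref{GF1-rel}), I would use the order-statistics identity $E_N(\ell;(0,\varphi)) = F_\ell(\varphi) - F_{\ell+1}(\varphi)$, where $F_\ell(\varphi) = {\rm Prob}(\theta_\ell \le \varphi)$, with boundary conventions $F_0 \equiv 1$ and $F_{N+1}\equiv 0$. Since the normalization adopted in~(\ref{PLN}) makes $p_\ell$ a density with respect to $d\varphi/(2\pi)$, one has $F_\ell^\prime(\varphi) = p_\ell(\varphi)/(2\pi)$. Differentiating $\Phi_N((0,\varphi);1-z)$ term by term and telescoping the sum $\sum_\ell z^\ell(p_\ell - p_{\ell+1})$ collapses everything to $((z-1)/(2\pi z))\,\Phi_N^{(1)}(\varphi;1-z)$; solving for $\Phi_N^{(1)}$ yields~(\ref{GF1-rel}). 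The subtracted $z^N$ inside the derivative plays no r\^ole in the differentiation and is included only so that $\Phi_N((0,\varphi);1-z)-z^N$ vanishes at $\varphi=2\pi$.

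For~(\ref{GF2-rel}), I would differentiate the operator representation produced in part~(1) once in $\varphi^\prime$ and once in $\varphi$. The endpoint derivatives contribute evaluations at $\varphi^\prime$ and $\varphi$ with factors $\pm 1/(2\pi)$; by the Leibniz rule, combined with the $N(N-1)$ choices of which two integration variables to single out, this gives
\[
\frac{\partial^2}{\partial\varphi\,\partial\varphi^\prime}\Phi_N((\varphi,\varphi^\prime);1-z)
= -\frac{N(N-1)(1-z)^2}{(2\pi)^2}\, K^{N-2}\, P_N(\varphi,\varphi^\prime,\theta_3,\dots,\theta_N)\prod_{j=3}^N\frac{d\theta_j}{2\pi},
\]
where $K=\int_0^{2\pi}-(1-z)\int_\varphi^{\varphi^\prime}$. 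Independently, substituting~(\ref{PLMN}) into~(\ref{GF2-d}) and re-indexing by $(a,b,c)=(\ell-1,m-\ell-1,N-m)$ converts the double sum over $(\ell,m)$ into a trinomial expansion whose combined kernel $\int_0^\varphi+z\int_\varphi^{\varphi^\prime}+\int_{\varphi^\prime}^{2\pi}$ is exactly $K$, yielding $\Phi_N^{(2)}(\varphi,\varphi^\prime;1-z) = N(N-1)z\, K^{N-2}\, P_N(\varphi,\varphi^\prime,\dots)\prod_j d\theta_j/(2\pi)$. Eliminating $K^{N-2}P_N\prod d\theta_j/(2\pi)$ between the two expressions produces~(\ref{GF2-rel}).

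All three calculations are elementary; the only bit requiring real attention is bookkeeping of the $1/(2\pi)$ normalization factors---so that the single $2\pi$ in~(\ref{GF1-rel}) and the $(2\pi)^2$ in~(\ref{GF2-rel}) appear with the correct power---together with the sign produced by differentiating the lower versus upper endpoint of $\int_\varphi^{\varphi^\prime}$.
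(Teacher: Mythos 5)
Your proof is correct and follows essentially the same route as the paper's: the same binomial/multinomial expansions for parts~(i) and (iii), and for part~(ii) the same key identity $\frac{d}{d\varphi}E_N(\ell;(0,\varphi))=\frac{1}{2\pi}(p_\ell-p_{\ell+1})$ (phrased via the cumulative distribution functions $F_\ell$), which you exploit by differentiating $\Phi_N$ and telescoping while the paper inverts the relation and re-indexes a double sum---an equivalent manipulation. The only minor nuance you glossed over is verifying that the end result of the trinomial re-indexing in part~(iii), namely $\Phi_N^{(2)}=zN(N-1)K^{N-2}P_N(\varphi,\varphi',\dots)$, carries the correct prefactor, but your stated result agrees with the paper's.
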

\begin{proof}
(i) To prove Eq.~(\ref{GF0-rel}), we substitute Eq.~(\ref{ELN}) into Eq.~(\ref{GF0-d}) and perform formal summation by Newton's binomial formula.

(ii) To prove Eq.~(\ref{GF1-rel}), we notice that Eqs.~(\ref{PLN}) and (\ref{ELN}) imply the relation
\begin{eqnarray} \label{ENL-PL}
    \frac{d}{d\varphi} E_{N}(\ell; (0,\varphi)) = \frac{1}{2\pi} \left(
    p_\ell(\varphi) - p_{\ell+1}(\varphi)
    \right)
\end{eqnarray}
which holds for $\ell=0,\cdots,N$ if we formally set $p_0(\varphi) = p_{N+1}(\varphi)=0$. Equation~(\ref{ENL-PL}) is equivalent to
\begin{eqnarray}\label{PL-ENJ}
   p_\ell(\varphi) = -2\pi \sum_{j=0}^{\ell-1} \frac{d}{d\varphi} E_N(j;(0,\varphi)), \quad \ell=1,\dots,N.
\end{eqnarray}
With Eq.~(\ref{GF1-d}) in mind, we multiply Eq.~(\ref{PL-ENJ}) by $z^\ell$ and interchange summation order to derive
\begin{eqnarray} \fl \qquad
    \Phi_N^{(1)}(\varphi;1-z) = \sum_{\ell=1}^N z^\ell  p_\ell(\varphi) = -z \frac{2\pi}{1-z} \frac{d}{d\varphi} \sum_{j=0}^{N}(z^{j}-z^{N})  E_N(j;(0,\varphi)).
\end{eqnarray}
Invoking the normalization condition $\sum_{j=0}^N E_N(j;(0,\varphi)) =1$ completes the proof of Eq.~(\ref{GF1-rel}).

(iii) To prove Eq.~(\ref{GF2-rel}), we substitute Eq.~(\ref{PLMN}) into Eq.~(\ref{GF2-d}) to write down
\begin{eqnarray}
    \Phi_N^{(2)}(\varphi,\varphi^\prime;1-z) = z \sum_{\ell=1}^{N-1} \sum_{\sigma=0}^{N-\ell-1} z^\sigma
    \, p_{\ell,\ell+1+\sigma} (\varphi,\varphi^\prime).
\end{eqnarray}
Performing the inner summation with the help of Eq.~(\ref{PLMN}),
\begin{eqnarray} \fl
\sum_{\sigma=0}^{N-\ell-1} z^\sigma\, p_{\ell,\ell+1+\sigma} (\varphi,\varphi^\prime) = \frac{N!}{(\ell-1)! (N-\ell-1)!} \nonumber\\
    \fl \qquad\qquad
    \times
   \left(\int_{0}^{\varphi}\right)^{\ell-1}
    \left(
        z \int_{\varphi}^{\varphi^\prime} + \int_{\varphi^\prime}^{2\pi}
    \right)^{N-\ell-1}  \prod_{j=3}^{N} \frac{d\theta_j}{2\pi}\, P_N(\varphi,\varphi^\prime, \theta_3,\dots,\theta_N),
\end{eqnarray}
we realize that the outer summation is equally feasible; straightforward algebra yields:
\begin{eqnarray} \fl
    \qquad \Phi_N^{(2)}(\varphi,\varphi^\prime;1-z) = z N(N-1)  \nonumber\\
    \times
    \left( \int_{0}^{2\pi} -(1-z)  \int_{\varphi}^{\varphi^\prime}
    \right)^{N-2}  \prod_{j=3}^{N} \frac{d\theta_j}{2\pi}\, P_N(\varphi,\varphi^\prime, \theta_3,\dots,\theta_N).
\end{eqnarray}
Further, we spot that Eq.~(\ref{GF0-rel}) implies
\begin{eqnarray} \fl \qquad
    N(N-1) \left( \int_{0}^{2\pi} -(1-z)  \int_{\varphi}^{\varphi^\prime}
    \right)^{N-2}  \prod_{j=3}^{N} \frac{d\theta_j}{2\pi}\, P_N(\varphi,\varphi^\prime, \theta_3,\dots,\theta_N)
    \nonumber\\
    \qquad\qquad =
    -\left( \frac{2\pi}{1-z} \right)^2 \frac{\partial^2}{\partial\varphi \partial\varphi^\prime} \Phi_N((\varphi,\varphi^\prime); 1-z).
\end{eqnarray}
\end{proof}

\begin{remark}
    The symmetry relation for the probabilities
    \begin{eqnarray}  \label{compl-ENL}
        E_{N}(\ell; (0,2\pi)\smallsetminus(\varphi,\varphi^\prime)) = E_{N}(N-\ell; (\varphi,\varphi^\prime))
    \end{eqnarray}
    induces the property
    \begin{eqnarray} \label{comp-PhiN}
            \Phi_N((0,2\pi)\smallsetminus(\varphi,\varphi^\prime); 1-z)= z^N \Phi_N((\varphi,\varphi^\prime); 1-z^{-1}).
    \end{eqnarray}
\hfill $\blacksquare$
\end{remark}
\begin{remark}\label{MGF-CF-remark}
  The generating function $\Phi_N((\varphi,\varphi^\prime); 1-e^{i\omega})$ is essentially the moment generating function
  \begin{eqnarray}\label{mgf}
  \Phi_N((\varphi,\varphi^\prime); 1-e^{i\omega}) = \left<
        e^{i \omega {\mathcal N}_N(\varphi,\varphi^\prime)}
  \right>
  \end{eqnarray}
of the (random) eigenvalue counting function
\begin{eqnarray} \label{NCF}
    {\mathcal N}_N(\varphi,\varphi^\prime) =\sum_{j=1}^N \mathds{1}_{\varphi \le \theta_j \le \varphi^\prime}
\end{eqnarray}
which equals the number of eigenvalues in the spectral interval $(\varphi,\varphi^\prime)$. Here, ${\mathds 1}_X$ is the indicator function; it equals unity if $X$ is true; otherwise it brings zero.
\hfill $\blacksquare$
\end{remark}

\subsection{Main theorem for the power spectrum} \label{power-spectrum-phi}

The theorem below does {\it not} assume the stationarity of eigenlevel spacings and does {\it not} require a random-matrix-theory paradigm for its proof.

\begin{theorem}\label{main-technical-theorem}
  Let $\{0\le \theta_1 \le \dots \le \theta_N < 2\pi\}$ be a sequence of $N \in {\mathbb N}$ {\it ordered} eigenlevels confined to the unit circle. The ensemble averaged power spectrum of eigenlevels (Definition~\ref{d-01}) admits the representation
\begin{eqnarray} \fl \label{ps-main}
    S_N(\omega) = \frac{z}{N\Delta_N^2} \left(\frac{2\pi}{1-z}\right)^2 \Bigg\{
    \left| \int_{0}^{2\pi} \frac{d\varphi}{2\pi}
      \left[
        z^{-N}\Phi_N((0,\varphi);1-z) - 1
    \right] \right|^2 \nonumber\\
    \qquad\qquad - 2 {\rm Re}
    \Bigg(
        \int_{0}^{2\pi} \frac{d\varphi^\prime}{2\pi} \int_{0}^{\varphi^\prime} \frac{d\varphi}{2\pi} \,
         \left[ \Phi_N\left( (\varphi,\varphi^\prime); 1-z   \right) -1\right] \nonumber\\
                   \qquad\qquad\qquad -  \int_{0}^{2\pi} \frac{d\varphi}{2\pi} \,\left[ z^{-N} \Phi_N\left( (0,\varphi); 1-z   \right) - 1 \right]
    \Bigg)
    \Bigg\}.
\end{eqnarray}
Here, $\Phi_N\left( (\varphi,\varphi^\prime); 1-z \right)$ is the generating function [Eqs.~(\ref{GF0-d}) and (\ref{mgf})] of the probabilities [Eq.~(\ref{ELN})] to find a given number of eigenlevels in the interval $(\varphi,\varphi^\prime)$; parameter $z$ is set to $z=e^{i\omega}$, and $\,0 < \omega \le \pi$.
\end{theorem}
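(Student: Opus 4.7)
My plan is to recognise the double sum in Definition~\ref{d-01} as the variance of a single complex random variable. Setting $z=e^{i\omega}$ and $X := \sum_{\ell=1}^N z^\ell\,\theta_\ell$, the identity $|z|=1$ gives $|X|^2 = \sum_{\ell,m}\theta_\ell\theta_m\,e^{i\omega(\ell-m)}$, so that
$$
S_N(\omega) = \frac{1}{N\Delta_N^2}\bigl(\langle|X|^2\rangle - |\langle X\rangle|^2\bigr).
$$
Everything is therefore governed by the complex variance of $X$, and the main task is to express $X$ directly in terms of the eigenvalue counting function ${\mathcal N}_N$ of Eq.~(\ref{NCF}).

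To this end, I would invoke the rank identity $\mathds{1}_{\theta_\ell\le\varphi}=\mathds{1}_{{\mathcal N}_N(0,\varphi)\ge\ell}$, valid for any ordered sample. Writing $\theta_\ell = 2\pi - \int_0^{2\pi}d\varphi\,\mathds{1}_{{\mathcal N}_N(0,\varphi)\ge\ell}$, multiplying by $z^\ell$, and summing two geometric series in $\ell$ (which requires $z\ne 1$, hence the restriction $\omega\ne 0$ built into the theorem) yields the closed form
$$
X = \frac{z}{1-z}\bigl(Y - 2\pi z^N\bigr), \qquad Y := \int_0^{2\pi}d\varphi\, z^{{\mathcal N}_N(0,\varphi)}.
$$
Because the shift $2\pi z^N$ is deterministic, it drops out of the variance and
$$
\langle|X|^2\rangle - |\langle X\rangle|^2 = \frac{1}{|1-z|^2}\bigl[\langle|Y|^2\rangle - |\langle Y\rangle|^2\bigr].
$$

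The final step is to compute $\langle Y\rangle$ and $\langle|Y|^2\rangle$ via the generating function $\Phi_N$. By Remark~\ref{MGF-CF-remark} one has $\langle z^{{\mathcal N}_N(0,\varphi)}\rangle = \Phi_N((0,\varphi);1-z)$; adding and subtracting $z^N$ under the integral gives $\langle Y\rangle = 2\pi z^N(1+A)$ and hence $|\langle Y\rangle|^2 = (2\pi)^2(1 + 2\,{\rm Re}\,A + |A|^2)$, with $A$ the first integral appearing in the theorem. For $\langle|Y|^2\rangle = \int\int d\varphi\,d\varphi'\,\langle z^{{\mathcal N}_N(0,\varphi)-{\mathcal N}_N(0,\varphi')}\rangle$ I would split the square $[0,2\pi]^2$ along the diagonal: on $\varphi<\varphi'$ the exponent difference equals $-{\mathcal N}_N(\varphi,\varphi')$ whose average is $\overline{\Phi_N((\varphi,\varphi');1-z)}$ (using $z^{-1}=\bar z$ for $|z|=1$), while on $\varphi>\varphi'$ the symmetric piece contributes the complex conjugate after relabeling the dummy variables. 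Combining,
$$
\langle|Y|^2\rangle = 2\,{\rm Re}\int_0^{2\pi}d\varphi'\int_0^{\varphi'}d\varphi\,\Phi_N((\varphi,\varphi');1-z) = (2\pi)^2\bigl(1 + 2\,{\rm Re}\,B\bigr),
$$
with $B$ the second integral of the theorem. Subtracting, invoking the elementary real identity $z/(1-z)^2 = -1/|1-z|^2$ valid for $|z|=1$, and reassembling produces precisely the claimed formula.

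The proof is essentially bookkeeping; the main pitfall I anticipate is keeping track of conjugations, of the two constant shifts that produce the $A$- and $B$-subtractions, and of the $\varphi<\varphi'$ vs.\ $\varphi>\varphi'$ split in $\langle|Y|^2\rangle$. A more direct route via Lemma~\ref{Lemma-3GF} -- inserting~(\ref{GF1-rel})--(\ref{GF2-rel}) into $\sum_\ell z^\ell\langle\theta_\ell\rangle$ and $\sum_{\ell<m}z^{m-\ell}\langle\theta_\ell\theta_m\rangle$ and integrating by parts twice on $[0,2\pi]^2$ -- would also work, but one must then cancel the diagonal sum $\sum_\ell\langle\theta_\ell^2\rangle$ against boundary contributions along $\varphi=\varphi'$ and at the edges of the square by hand, which the rank-representation approach sketched above avoids cleanly.
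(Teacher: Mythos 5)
Your proof is correct, and it takes a genuinely different route from the paper's. The paper's argument proceeds through the auxiliary generating functions $\mathcal{S}_1(z)=\sum_\ell\langle\theta_\ell\rangle z^\ell$ and $\mathcal{S}_2(z)=\sum_{\ell,m}\langle\theta_\ell\theta_m\rangle z^{\ell-m}$, then invokes Lemma~\ref{Lemma-3GF} to express the density $p_\ell$ and pair density $p_{\ell m}$ of ordered eigenlevels in terms of $\Phi_N$ via derivatives (Eqs.~(\ref{GF1-rel})--(\ref{GF2-rel})), and finally unwinds everything by repeated integration by parts in $\varphi$ and $\varphi'$. Your approach bypasses Lemma~\ref{Lemma-3GF} entirely: you recognize the power spectrum as the complex variance of $X=\sum_\ell z^\ell\theta_\ell$, then exploit the quantile identity $\mathds{1}_{\theta_\ell\le\varphi}=\mathds{1}_{{\mathcal N}_N(0,\varphi)\ge\ell}$ (true for any ordered sample) together with the layer-cake representation $\theta_\ell=\int_0^{2\pi}\mathds{1}_{\theta_\ell>\varphi}\,d\varphi$ to rewrite $X$ as an affine function of the single random integral $Y=\int_0^{2\pi}z^{{\mathcal N}_N(0,\varphi)}\,d\varphi$. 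Because $\langle z^{{\mathcal N}_N}\rangle=\Phi_N$, the first and second moments of $Y$ produce exactly the two $\Phi_N$-integrals in Eq.~(\ref{ps-main}), with the diagonal split of $[0,2\pi]^2$ supplying the $2\,{\rm Re}$ and the identity $z/(1-z)^2=-1/|1-z|^2$ (valid on $|z|=1$) supplying the sign. I verified the bookkeeping: $\langle Y\rangle=2\pi z^N(1+A)$, $\langle|Y|^2\rangle=(2\pi)^2(1+2\,{\rm Re}\,B)$, and $\langle|Y|^2\rangle-|\langle Y\rangle|^2=(2\pi)^2(2\,{\rm Re}\,B-2\,{\rm Re}\,A-|A|^2)$ combine with the prefactors to give Eq.~(\ref{ps-main}) exactly. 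What you gain is conceptual economy (no order-statistics densities, no pair-density generating function, no double integration by parts, and no need to track the diagonal term $\sum_\ell\langle\theta_\ell^2\rangle$ separately); what the paper's route buys in exchange is the intermediate Lemma~\ref{Lemma-3GF}, which it reuses elsewhere. Both are valid.
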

\begin{proof}
Since the covariance matrix $\langle\delta\theta_\ell \delta\theta_m \rangle$ equals $\langle\theta_\ell \theta_m \rangle - \langle\theta_\ell\rangle \langle \theta_m \rangle$, we rewrite the power spectrum Eq.~(\ref{ps-def}) in the form
  \begin{eqnarray} \label{SN-S1-S2}
    S_N(\omega) = \frac{1}{N \Delta_N^2} \left( \mathcal{S}_2(z) - \left|
        \mathcal{S}_1(z)
    \right|^2\right)
  \end{eqnarray}
which contains two auxiliary generating functions
\begin{eqnarray}
\label{S1-def}
    \mathcal{S}_1(z) &=& \sum_{\ell=1}^{N} \langle\theta_\ell\rangle z^\ell, \\
    \label{S2-def}
    \mathcal{S}_2(z) &=& \sum_{\ell=1}^{N} \sum_{m=1}^{N}\langle\theta_\ell \theta_m\rangle z^{\ell-m}.
  \end{eqnarray}

(i) First, let us focus on the generating function $\mathcal{S}_1(z)$. Making use of Lemma~\ref{Lemma-3GF}, see Eqs.~(\ref{GF1-d}) and (\ref{GF1-rel}), the function $\mathcal{S}_1(z)$ can be transformed as follows:
\begin{eqnarray} \label{S1-aux-1}
\mathcal{S}_1(z) &=& \int_{0}^{2\pi} \frac{d\varphi}{2\pi} \, \varphi
    \, \Phi_N^{(1)}(\varphi;1-z) \nonumber\\
    &=& -z \frac{2\pi}{1-z}\int_{0}^{2\pi} \frac{d\varphi}{2\pi} \, \varphi
    \frac{d}{d\varphi} \left(
        \Phi_N((0,\varphi);1-z) - z^N
    \right).
\end{eqnarray}
Integrating by parts and taking into account that $\Phi_N((0,2\pi);1-z)=z^N$, we derive:
\begin{eqnarray} \label{S1-Phi}
    \mathcal{S}_1(z) = \frac{z}{1-z}\int_{0}^{2\pi} d\varphi
      \left(
        \Phi_N((0,\varphi);1-z) - z^N
    \right).
\end{eqnarray}

(ii) To deal with $\mathcal{S}_2(z)$, we divide it into three parts:
\begin{eqnarray} \label{S2-aux-1}
    \mathcal{S}_2(z) = \int_{0}^{2\pi} \frac{d\varphi}{2\pi} \varphi^2 \varrho_N(\varphi) + Q_N(z) + Q_N(z^{-1}),
\end{eqnarray}
where $\varrho_N(\theta) = \sum_{\ell=1}^{N}\langle \delta(\theta-\theta_\ell)\rangle$ is the mean density of eigenlevels whilst
\begin{eqnarray} \fl \label{QN-aux-1}
    \qquad
   Q_N(z) = \sum_{\ell < m}^N \langle\theta_\ell \theta_m\rangle z^{m-\ell} = \int_{0}^{2\pi}\int_{0}^{2\pi} \frac{d\varphi d\varphi^\prime}{(2\pi)^2}
   \varphi\varphi^\prime \, \Phi_N^{(2)}(\varphi,\varphi^\prime;1-z).
\end{eqnarray}
This is a consequence of Eq.~(\ref{GF2-d}). Notice that the function $\Phi_N^{(2)}(\varphi,\varphi^\prime;1-z)$ vanishes for $\varphi>\varphi^\prime$. To handle the double integral in Eq.~(\ref{QN-aux-1}), we apply the identity Eq.~(\ref{GF2-rel}) to obtain
\begin{eqnarray} \label{QN-aux-2}
   Q_N(z) = -\frac{z}{(1-z)^2} \int_{0}^{2\pi} d\varphi^\prime\, \varphi^\prime\, \mathcal{I}_N(\varphi^\prime),
\end{eqnarray}
where
\begin{eqnarray} \label{QN-aux-1-IN}
            \mathcal{I}_N(\varphi^\prime)=\int_{0}^{\varphi^\prime} d\varphi\, \varphi
   \frac{\partial^2}{\partial\varphi \partial\varphi^\prime} \Phi_N((\varphi,\varphi^\prime); 1-z).
\end{eqnarray}
Integration by parts reduces it to
\begin{eqnarray} \fl \label{inner-int} \quad
    \mathcal{I}_N(\varphi^\prime)
       = \varphi^\prime \left(\frac{\partial}{\partial\varphi^\prime} \Phi_N((\varphi,\varphi^\prime); 1-z)\right) \Bigg|_{\varphi = \varphi^\prime}
       - \int_{0}^{\varphi^\prime} d\varphi\, \left(\frac{\partial}{\partial\varphi^\prime} \Phi_N((\varphi,\varphi^\prime); 1-z)\right) \nonumber\\
{}
\end{eqnarray}
and further to
\begin{eqnarray} \fl \qquad \label{QN-aux-5}
    \mathcal{I}_N(\varphi^\prime)= 1 - \frac{1-z}{2\pi} \varphi^\prime\, \varrho_N(\varphi^\prime) -
    \frac{\partial}{\partial\varphi^\prime}\left( \int_{0}^{\varphi^\prime} d\varphi\,  \Phi_N((\varphi,\varphi^\prime); 1-z)\right)
\end{eqnarray}
after spotting the two identities:
\begin{eqnarray} \label{QN-aux-3}
    \frac{\partial}{\partial\varphi^\prime} \Phi_N((\varphi,\varphi^\prime); 1-z)\Bigg|_{\varphi = \varphi^\prime} = -\frac{1-z}{2\pi} \varrho_N(\varphi^\prime),
\end{eqnarray}
see Eq.~(\ref{GF0-rel}), and
\begin{eqnarray} \label{QN-aux-4} \fl \quad
   \int_{0}^{\varphi^\prime} d\varphi \left( \frac{\partial}{\partial\varphi^\prime} \Phi_N((\varphi,\varphi^\prime); 1-z) \right) =
    \frac{\partial}{\partial\varphi^\prime}\left( \int_{0}^{\varphi^\prime} d\varphi\,  \Phi_N((\varphi,\varphi^\prime); 1-z)\right)
    - 1.
\end{eqnarray}
In order to determine the function $Q_N(z)$, given by Eq.~(\ref{QN-aux-2}), we now focus on the integral
\begin{eqnarray}  \label{QN-aux-6} \fl \qquad
    \int_{0}^{2\pi} d\varphi^\prime \,\varphi^\prime \, \mathcal{I}_N(\varphi^\prime) = 2\pi^2 -(1-z) \int_{0}^{2\pi} \frac{d\varphi^\prime}{2\pi}
    \, {\varphi^\prime}^2 \varrho_N(\varphi^\prime) \nonumber\\
    \qquad
    -
    \int_{0}^{2\pi} d\varphi^\prime \,\varphi^\prime \,
    \frac{\partial}{\partial\varphi^\prime}\left( \int_{0}^{\varphi^\prime} d\varphi\,  \Phi_N((\varphi,\varphi^\prime); 1-z)\right),
\end{eqnarray}
see Eq.~(\ref{QN-aux-5}). To handle the double integral therein, we calculate it by parts
\begin{eqnarray} \label{QN-aux-7}\fl
   \int_{0}^{2\pi} d\varphi^\prime \varphi^\prime \frac{\partial}{\partial\varphi^\prime}\left( \int_{0}^{\varphi^\prime} d\varphi  \Phi_N((\varphi,\varphi^\prime); 1-z)\right)
     = 2\pi \,  \int_{0}^{2\pi} d\varphi\,  \Phi_N((\varphi,2\pi); 1-z) \nonumber\\
     \qquad
     - \int_{0}^{2\pi} d\varphi^\prime \left(
        \int_{0}^{\varphi^\prime} d\varphi \, \Phi_N((\varphi,\varphi^\prime); 1-z)
    \right)
\end{eqnarray}
and consult Eq.~(\ref{comp-PhiN}) to observe the identity
\begin{eqnarray} \label{QN-aux-8}
    \Phi_N((\varphi,2\pi); 1-z) = z^N \Phi_N\left((0,\varphi); 1-z^{-1}\right).
\end{eqnarray}
This brings:
\begin{eqnarray} \fl  \label{outer-int}
  \int_{0}^{2\pi} d\varphi^\prime \,\varphi^\prime \, \mathcal{I}_N(\varphi^\prime) =  2\pi^2 -(1-z) \int_{0}^{2\pi} \frac{d\varphi^\prime}{2\pi}
    \, {\varphi^\prime}^2 \varrho_N(\varphi^\prime) \nonumber\\
    \fl \quad
    - 2\pi z^N \int_{0}^{2\pi} d\varphi \, \Phi_N((0,\varphi); 1-z^{-1}) +
      \int_{0}^{2\pi} d\varphi^\prime \left(
        \int_{0}^{\varphi^\prime} d\varphi \, \Phi_N((\varphi,\varphi^\prime); 1-z)\right).
\end{eqnarray}
Combining this result with Eq.~(\ref{QN-aux-2}), we derive:
\begin{eqnarray}  \label{QN-final} \fl \qquad
    Q_N(z) =  \frac{z}{1-z} \int_{0}^{2\pi} \frac{d\varphi}{2\pi}
    \, \varphi^2 \varrho_N(\varphi) \nonumber\\
    \fl \qquad\qquad\qquad
    +\, z \left( \frac{2\pi}{1-z}  \right)^2 \Bigg\{
    \int_{0}^{2\pi} \frac{d\varphi}{2\pi}\,  \left[ z^N\Phi_N((0,\varphi); 1-z^{-1})-1\right] \nonumber\\
    \qquad
    - \int_{0}^{2\pi} \frac{d\varphi^\prime}{2\pi} \left(
        \int_{0}^{\varphi^\prime} \frac{d\varphi}{2\pi} \, \left[ \Phi_N((\varphi,\varphi^\prime); 1-z)-1\right]
    \right)\Bigg\}.
\end{eqnarray}
Substitution to Eq.~(\ref{S2-aux-1}) yields
\begin{eqnarray} \fl \quad \label{S2-final}
    \mathcal{S}_2(z) = z \left(
    \frac{2\pi}{1-z}
    \right)^2 \Bigg\{ \int_{0}^{2\pi} \frac{d\varphi}{2\pi} \, \left[  z^N\Phi_N((0,\varphi); 1-z^{-1})-1\right]  \nonumber\\
    \qquad\quad
    +   \int_{0}^{2\pi} \frac{d\varphi}{2\pi} \, \left[ z^{-N}\Phi_N((0,\varphi); 1-z)-1\right] \nonumber\\
    \qquad\quad
    - \int_{0}^{2\pi} \frac{d\varphi^\prime}{2\pi} \left(
        \int_{0}^{\varphi^\prime} \frac{d\varphi}{2\pi} \, \big[ \Phi_N((\varphi,\varphi^\prime); 1-z)-1\big]
    \right)\nonumber\\
    \qquad\quad
    - \int_{0}^{2\pi} \frac{d\varphi^\prime}{2\pi} \left(
        \int_{0}^{\varphi^\prime} \frac{d\varphi}{2\pi} \, \big[ \Phi_N((\varphi,\varphi^\prime); 1-z^{-1})-1\big]
    \right)\Bigg\}.
\end{eqnarray}
Confining $z$ to the unit circle $|z|=1$, the above reduces to
\begin{eqnarray} \fl \quad \label{S2-final-circle}
    \mathcal{S}_2(z) = 2 z \left(
    \frac{2\pi}{1-z}
    \right)^2 {\rm Re\,} \Bigg\{  \int_{0}^{2\pi} \frac{d\varphi}{2\pi} \,\left[ z^{-N} \Phi_N((0,\varphi); 1-z)-1\right] \nonumber\\
    \qquad\quad
    - \int_{0}^{2\pi} \frac{d\varphi^\prime}{2\pi} \left(
        \int_{0}^{\varphi^\prime} \frac{d\varphi}{2\pi} \, \big[ \Phi_N((\varphi,\varphi^\prime); 1-z) -1 \big]
    \right)\Bigg\}.
\end{eqnarray}
Substitution of Eqs.~(\ref{S1-Phi}) and (\ref{S2-final-circle}) into Eq.~(\ref{SN-S1-S2}) completes the proof.
\end{proof}

\section{Power spectrum for ${\rm CUE}(N)$} \label{power-spectrum-CUE}

\subsection{Basic definitions}
 Below we apply Theorem~\ref{main-technical-theorem} to describe the power spectra for the circular unitary ensemble ${\rm CUE}(N)$ defined by the JPDF of the eigen-angles
 $\{\theta_j\}_{j=1}^N \in [0,2\pi)$ in the form~\cite{M-2004}
 \begin{eqnarray}\label{jpdf-cue}
    P_N(\theta_1,\dots,\theta_N) = \frac{1}{N!} \prod_{1\le j<k \le N} \left|
        e^{i \theta_j} -  e^{i \theta_k}
    \right|^2.
\end{eqnarray}
This JPDF is invariant under simultaneous, uniform translation of all eigen-angles.

(i) The ${\rm CUE}(N)$ eigen-angles form a determinantal point process since their $\ell$-point correlation function
\begin{eqnarray} \label{RLN-CUE}
    R_{\ell,N}(\theta_1,\dots,\theta_\ell) = \frac{N!}{(N-\ell)!} \left(\prod_{j=\ell+1}^{N} \int_{0}^{2\pi} \frac{d\theta_j}{2\pi}
    \right) \, P_N(\theta_1,\dots,\theta_N)
\end{eqnarray}
admits the determinantal representation~\cite{M-2004}
\begin{eqnarray}\label{RLN-det}
    R_{\ell,N} (\theta_1,\dots,\theta_\ell) = {\det}_{1\le j,k \le \ell} \left[
        K_{N} (\theta_j -\theta_k)
    \right],
\end{eqnarray}
where
\begin{eqnarray} \label{CUEN-Kernel}
    K_{N}(\theta) = \frac{\sin[N(\theta/2)]}{\sin(\theta/2)}
\end{eqnarray}
is the ${\rm CUE}(N)$ scalar kernel satisfying the reproducing property
\begin{eqnarray} \label{rep-prop-CUE}
    \int_{0}^{2\pi} \frac{d\phi}{2\pi} K_N(\theta-\phi)\,
    K_N(\phi-\theta^\prime) = K_N(\theta -\theta^\prime).
\end{eqnarray}

(ii) The moment generating function [Eq.~(\ref{GF0-d})] of the eigen-angle counting function ${\mathcal N}_N(0,\varphi)$ defined in Eq.~(\ref{NCF}) is given by the multiple integral
\begin{eqnarray} \fl\quad \label{mgf-CUEN}
    \Phi_N((0,\varphi); 1-z) = \frac{1}{N!}
    \left( \int_{0}^{2\pi} -(1-z)  \int_{0}^{\varphi}
    \right)^{N}  \prod_{j=1}^{N} \frac{d\theta_j}{2\pi}\, \prod_{1\le k<\ell \le N} \left|
        e^{i \theta_k} -  e^{i \theta_\ell}
    \right|^2,
\end{eqnarray}
see Lemma~\ref{Lemma-3GF}.

\subsection{Master formula}
\begin{theorem}\label{circular-theorem}
  Let $\{0\le \theta_1 \le \dots \le \theta_N < 2\pi\}$ be a sequence of $N \in {\mathbb N}$ {\it ordered} fluctuating eigen-angles drawn from the ${\rm CUE}(N)$. For all $\,0 < \omega \le \pi$, the ensemble averaged power spectrum of eigen-angles admits the representation
\begin{eqnarray} \fl \label{ps-circular}
\qquad
    S_N(\omega) = \frac {z N}{(1-z)^2} \Bigg\{
    \left| \int_{0}^{2\pi} \frac{d\varphi}{2\pi}
      \, \Phi_N((0,\varphi);1-z) \right|^2 \nonumber\\
    \qquad\qquad
    - 2 {\rm Re}\,
        \int_{0}^{2\pi} \frac{d\varphi}{2\pi} \left( 1 -\frac{\varphi}{2\pi}\right) \Phi_N\left( (0,\varphi); 1-z   \right)
    \Bigg\}.
\end{eqnarray}
Here, $\Phi_N\left( (0, \varphi); 1-z \right)$ is the generating function [Eq.~(\ref{GF0-d})] of the probabilities [Eq.~(\ref{ELN})] to find a given number of eigen-angles in the interval $(0,\varphi)$ and the parameter $z$ is set to $z=e^{i\omega}$.
\end{theorem}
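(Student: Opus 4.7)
The plan is to specialize the general master formula of Theorem~\ref{main-technical-theorem} to ${\rm CUE}(N)$, exploiting two special features of the ensemble: (a) the translation invariance of the joint probability density Eq.~(\ref{jpdf-cue}) under $\theta_j \mapsto \theta_j + c \pmod{2\pi}$, and (b) the uniform mean density, which fixes $\Delta_N = 2\pi/N$. Introduce the shorthand $\phi_N(x) := \Phi_N((0,x); 1-z)$. Translation invariance implies that $\Phi_N((\varphi,\varphi'); 1-z)$ depends only on the length $\varphi'-\varphi$, i.e.\ equals $\phi_N(\varphi'-\varphi)$. This is the central simplification.

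Under this identification, the double integral on the right-hand side of Eq.~(\ref{ps-main}) can be processed by the substitution $u = \varphi'-\varphi$ in the inner integral, followed by swapping the order of integration over $u$ and $\varphi'$. A short calculation yields
\begin{equation*}
    \int_{0}^{2\pi}\frac{d\varphi'}{2\pi}\int_0^{\varphi'}\frac{d\varphi}{2\pi}\bigl[\phi_N(\varphi'-\varphi)-1\bigr]
    \;=\; \int_0^{2\pi}\frac{du}{2\pi}\left(1-\frac{u}{2\pi}\right)\phi_N(u) \;-\;\tfrac{1}{2}.
\end{equation*}
At this stage the observation to exploit is that the two remaining single integrals in Eq.~(\ref{ps-main}) are literally identical: denoting
$A := \int_0^{2\pi}\frac{d\varphi}{2\pi}[z^{-N}\phi_N(\varphi)-1]$, the combination that appears in Eq.~(\ref{ps-main}) is $|A|^2 + 2\,{\rm Re}\,A$ (after moving the $C$-integral out of the subtraction). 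Writing $I := \int_0^{2\pi}\frac{d\varphi}{2\pi}\phi_N(\varphi)$, one has $A = z^{-N}I - 1$, and for $|z|=1$ a direct expansion gives $|A|^2 + 2\,{\rm Re}\,A = |I|^2 - 1$. The $-1$ is precisely cancelled by the constant $+1$ produced by the $-2\,{\rm Re}(-1/2)$ coming from the double-integral simplification above.

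Assembling the surviving pieces yields the content of the braces in Eq.~(\ref{ps-circular}):
$|I|^2 - 2\,{\rm Re}\int_0^{2\pi}\frac{d\varphi}{2\pi}(1-\varphi/2\pi)\phi_N(\varphi)$. Finally, setting $\Delta_N = 2\pi/N$ collapses the prefactor in Theorem~\ref{main-technical-theorem} according to
$\frac{1}{N\Delta_N^2}\bigl(\tfrac{2\pi}{1-z}\bigr)^2 = \frac{N}{(1-z)^2}$, which matches the prefactor in Eq.~(\ref{ps-circular}). I do not anticipate any deep obstacle; the only point requiring care is the bookkeeping of the constant terms when invoking $|z|=1$ to combine $|A|^2$ with $2\,{\rm Re}\,A$, ensuring the ``$-1$'' and ``$+1$'' contributions cancel so that the final expression is manifestly a real combination of $I$ and the weighted integral of $\phi_N$.
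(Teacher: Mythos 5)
Your proof is correct and takes the same route as the paper: exploit translation invariance of the ${\rm CUE}(N)$ JPDF to reduce $\Phi_N((\varphi,\varphi');1-z)$ to $\Phi_N((0,\varphi'-\varphi);1-z)$, substitute into the master formula of Theorem~\ref{main-technical-theorem}, and set $\Delta_N=2\pi/N$. Your explicit bookkeeping — the change of variable and Fubini swap giving the double integral equal to $\int_0^{2\pi}\frac{du}{2\pi}(1-u/2\pi)\phi_N(u)-\tfrac12$, and the identity $|A|^2+2\,{\rm Re}\,A=|I|^2-1$ on $|z|=1$ cancelling the constant — correctly fills in the ``some algebra'' the paper leaves implicit (the symbol ``$C$'' in your parenthetical is evidently a slip for the double-integral term).
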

\begin{proof}
Invariance of the JPDF Eq.~(\ref{jpdf-cue}) under uniform translation $\{\theta_j \mapsto \theta_j-\delta\}_{j=1}^N$ ensures the property
\begin{eqnarray}\label{Phi-N-uniform}
    \Phi_N\left( (\varphi, \varphi^\prime); 1-z \right) = \Phi_N\left( (0, \varphi^\prime -\varphi); 1-z \right),
\end{eqnarray}
see Eq.~(\ref{GF0-rel}). Substituting Eq.~(\ref{Phi-N-uniform}) into Eq.~(\ref{ps-main}) and setting $\Delta_N = 2\pi/N$, we reproduce, after some algebra, the desired Eq.~(\ref{ps-circular}).

\end{proof}

\begin{remark}\label{rem-symmetry}
Spectral correlation functions in the ${\rm CUE}(N)$ posses the mirror symmetry: fluctuations of eigen-angles on two arcs $[0,\pi)$ and $[\pi,2\pi)$ are related to each other
\begin{eqnarray}
    R_{\ell,N} (\theta_1,\dots,\theta_\ell) = R_{\ell,N} (2\pi-\theta_1,\dots,2\pi-\theta_\ell).
\end{eqnarray}
It is beneficial to explicitly unveil this symmetry in the master formula Eq.~(\ref{ps-circular}) by reducing integration domains therein to $[0,\pi]$.
\hfill $\blacksquare$
\end{remark}

\begin{corollary} \label{ps-master-symmetry}
In the notation of Theorem~\ref{circular-theorem}, the ${\rm CUE}(N)$ power spectrum admits the representation
 \begin{eqnarray} \fl \qquad \label{ps-lemma}
    S_N(\omega) = \frac {2 z}{(1-z)^2} \Bigg\{
                - {\rm Re}\,\Big(
                I_{N,0}(z)
                - (1-z^{-N})I_{N,1}(z)
                \Big)
                  \nonumber\\
                \fl \qquad\qquad\qquad\qquad\qquad\qquad \qquad
                + \frac{2}{N} \left({\rm Re}\,\big[ z^{-N/2}I_{N,0}(z)\big]\right)^2
                 \Bigg\},
\end{eqnarray}
where
\begin{eqnarray} \label{ps-alter-master}
    I_{N,q}(z) = N \int_{0}^{\pi} \frac{d\varphi}{2\pi} \left(
    \frac{\varphi}{2\pi}
    \right)^q \Phi_N\left( (0,\varphi); 1-z   \right)
\end{eqnarray}
with $q=0,1$.
\end{corollary}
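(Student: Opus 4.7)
The plan is to apply the mirror symmetry highlighted in Remark~\ref{rem-symmetry} directly to the master formula of Theorem~\ref{circular-theorem}, folding the $[0,2\pi]$-integration domain onto $[0,\pi]$. First I would establish the key identity
\[
   \Phi_N\left((0,2\pi-\varphi);1-z\right) = z^N\,\overline{\Phi_N\left((0,\varphi);1-z\right)}\qquad (|z|=1)
\]
by combining two ingredients already available in the paper. The translation invariance Eq.~(\ref{Phi-N-uniform}) (applied to the shift $\varphi'\mapsto 0$) gives $\Phi_N((\varphi',2\pi);1-z)=\Phi_N((0,2\pi-\varphi');1-z)$, while setting $\varphi=0$ in the complementation relation Eq.~(\ref{comp-PhiN}) gives $\Phi_N((\varphi',2\pi);1-z)=z^N\,\Phi_N((0,\varphi');1-z^{-1})$. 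Since $\Phi_N((0,\varphi);1-z)$ is, by Eq.~(\ref{GF0-d}), a polynomial in $z$ with real coefficients, the substitution $z\mapsto z^{-1}=\bar z$ (valid on $|z|=1$) produces complex conjugation, yielding the stated identity.

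Second, I would split each of the two $[0,2\pi]$-integrals in Eq.~(\ref{ps-circular}) at $\varphi=\pi$, and in the half $[\pi,2\pi]$ perform the change of variables $\varphi\mapsto 2\pi-\varphi$, substituting the identity above. For the modulus-squared term, whose integrand carries unit weight, the two halves contribute $N^{-1} I_{N,0}(z)$ and $z^N N^{-1}\overline{I_{N,0}(z)}$, respectively. Factoring out $z^{N/2}$ with the principal branch $z^{1/2}=e^{i\omega/2}$ (so that $\overline{z^{N/2}}=z^{-N/2}$) collapses this sum into $(2 z^{N/2}/N)\,\mathrm{Re}\bigl[z^{-N/2}I_{N,0}(z)\bigr]$, whose squared modulus is the expected $(4/N^2)\bigl(\mathrm{Re}[z^{-N/2}I_{N,0}(z)]\bigr)^2$. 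For the real-part integral the weight $1-\varphi/(2\pi)$ is mapped by $\varphi\mapsto 2\pi-\varphi$ to $\varphi/(2\pi)$, so that the two halves combine into $N^{-1} I_{N,0}(z) - N^{-1} I_{N,1}(z) + z^N N^{-1}\overline{I_{N,1}(z)}$.

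Third, I would exploit the elementary fact that $\mathrm{Re}[z^N\overline{X}]=\mathrm{Re}[z^{-N}X]$ for $|z|=1$ to rewrite $z^N\overline{I_{N,1}(z)}$ inside the $\mathrm{Re}$ as $z^{-N}I_{N,1}(z)$, producing the combination $I_{N,0}(z)-(1-z^{-N})I_{N,1}(z)$ visible on the right-hand side of Eq.~(\ref{ps-lemma}). Multiplying the two assembled pieces by the prefactor $zN/(1-z)^2$ of Eq.~(\ref{ps-circular}) and collecting gives precisely Eq.~(\ref{ps-lemma}).

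The argument is conceptually straightforward and the only real obstacle is bookkeeping: one must consistently track the half-integer powers $z^{\pm N/2}$ and verify that the choice of branch $z^{1/2}=e^{i\omega/2}$ delivers $\overline{z^{N/2}}=z^{-N/2}$ so that the squared modulus of the first term cleanly produces $(\mathrm{Re}[z^{-N/2}I_{N,0}(z)])^2$ without stray phases. No analytic input beyond the two structural identities Eq.~(\ref{Phi-N-uniform}) and Eq.~(\ref{comp-PhiN}) and the reality of the coefficients of $\Phi_N$ is required.
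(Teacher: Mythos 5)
Your proposal is correct and takes essentially the same route as the paper: split the $[0,2\pi]$ integrals at $\pi$, change variables $\varphi\mapsto 2\pi-\varphi$ on the second half, and invoke the symmetry relation $\Phi_N((0,2\pi-\varphi);1-z)=z^N\overline{\Phi_N((0,\varphi);1-z)}$, which the paper also derives by combining Eqs.~(\ref{comp-PhiN}) and (\ref{Phi-N-uniform}). The only difference is one of exposition — you make explicit the reality of the coefficients in Eq.~(\ref{GF0-d}) and the $z^{N/2}$ factoring that collapses the modulus-squared term, steps the paper leaves as "some algebra."
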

\begin{proof}
To work out the mirror symmetry explicitly, we split the first integral in Eq.~(\ref{ps-circular}) into two
\begin{eqnarray} \fl \quad
    \int_{0}^{2\pi}\frac{d\varphi}{2\pi} \Phi_N((0,\varphi); 1-z) =
    \int_{0}^{\pi}\frac{d\varphi}{2\pi} \Phi_N((0,\varphi); 1-z) +
    \int_{\pi}^{2\pi}\frac{d\varphi}{2\pi} \Phi_N((0,\varphi); 1-z) \nonumber\\
    {}
\end{eqnarray}
and further change the integration variable in the second integral, $\varphi^\prime=2\pi-\varphi$, thus reducing the above to
\begin{eqnarray}
        \int_{0}^{\pi}\frac{d\varphi}{2\pi} \Phi_N((0,\varphi); 1-z) +
    \int_{0}^{\pi}\frac{d\varphi}{2\pi} \Phi_N((0,2\pi-\varphi); 1-z).
\end{eqnarray}
Making use of the symmetry relation [see Eqs.~(\ref{QN-aux-8}) and (\ref{Phi-N-uniform})]
\begin{eqnarray}  \label{phin-sym-rel}
   \Phi_{N}((0,2\pi-\varphi);1-z) = z^{N}\, \overline{\Phi_{N}((0,\varphi); 1-z)},
\end{eqnarray}
where $|z|=1$, we derive:
\begin{eqnarray} \fl \qquad\qquad\quad \label{int-1-sym}
    N \int_{0}^{2\pi}\frac{d\varphi}{2\pi} \Phi_N((0,\varphi); 1-z) =I_{N,0}(z) + z^N \overline{I_{N,0}(z)}.
\end{eqnarray}
The same strategy, applied to the second integral in Eq.~(\ref{ps-circular}), yields
\begin{eqnarray} \fl \qquad \label{int-2-sym}
    N \int_{0}^{2\pi}\frac{d\varphi}{2\pi} \left( 1 - \frac{\varphi}{2\pi}\right) \Phi_N((0,\varphi); 1-z)
    =I_{N,0}(z) - I_{N,1}(z) + z^N \,\overline{I_{N,1}(z)}.
\end{eqnarray}
Substitution of Eqs.~(\ref{int-1-sym}) and (\ref{int-2-sym}) into Eq.~(\ref{ps-circular}) completes the proof.
\end{proof}

\begin{remark}
  Theorem~\ref{circular-theorem} and Corollary~\ref{ps-master-symmetry} obviously hold for the circular orthogonal ${\rm COE}(N)$ and circular symplectic ${\rm CSE}(N)$ ensembles, with the generating function $\Phi_N((0,\varphi); 1-z)$ adjusted appropriately.
\hfill $\blacksquare$
\end{remark}

\subsection{Painlev\'e VI representation} \label{painleve-vi}

\begin{theorem}\label{Th-PVI}
    Let $\{0\le \theta_1 \le \cdots \le \theta_N < 2\pi\}$ be a sequence of $N \in {\mathbb N}$ {\it ordered} fluctuating eigen-angles drawn from the ${\rm CUE}(N)$. For all $\,0 < \omega \le \pi$, the ensemble averaged power spectrum of eigen-angles admits the exact representation Eq.~(\ref{ps-circular}) of Theorem~\ref{circular-theorem}, where
\begin{equation} \label{phin-cue-painleve-6}
    \Phi_N((0,\varphi);\zeta) = \exp \left(
            -\int_{\cot(\varphi/2)}^{\infty} \frac{dt}{1+t^2} {\sigma}_N(t;\zeta)\right).
\end{equation}
Here, $\zeta=1-z$ and $z=e^{i\omega}$. The Painlev\'e VI function ${\sigma}_N(t;\zeta)$ is a solution to the nonlinear equation
\begin{eqnarray} \label{pvi} \fl
    \qquad\qquad \left( (1+t^2)\,{\sigma}_N^{\prime\prime} \right)^2 + 4 {\sigma}_N^\prime ({\sigma}_N - t {\sigma}_N^\prime)^2
    + 4 ({\sigma}_N^\prime)^2 \left(
        {\sigma}_N^\prime + N^2
    \right) = 0
\end{eqnarray}
satisfying the boundary condition
\begin{eqnarray} \label{pvi-bc}
\fl \qquad\qquad
    {\sigma}_N(t;\zeta) = -\frac{N\zeta}{2\pi} - \frac{N^2 \zeta^2}{2\pi^2 t}  +  \frac{N\zeta}{12\pi^3 t^2}
    \left(
        N^2(\pi^2-6\zeta^2) -\pi^2
    \right)
    +
    {\mathcal O}(t^{-3})
\end{eqnarray}
as $t\rightarrow \infty$.
\end{theorem}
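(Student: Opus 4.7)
The plan is to recognize $\Phi_N((0,\varphi);\zeta)$ as a Fredholm determinant of the CUE kernel restricted to the arc $(0,\varphi)$, invoke the Jimbo--Miwa--Okamoto $\tau$-function theory to derive a sigma-PVI equation for its logarithmic derivative, and finally fix the integration constants from the small-$\varphi$ expansion. Combining Eq.~(\ref{mgf-CUEN}), the determinantal representation Eqs.~(\ref{RLN-det})--(\ref{CUEN-Kernel}) and Remark~\ref{MGF-CF-remark} one immediately obtains
\begin{eqnarray}\label{Phi-Fredholm-plan}
\Phi_N((0,\varphi);\zeta) = {\det}_{L^2((0,\varphi),\,d\theta/2\pi)}\bigl(I - \zeta\, K_N\bigr),
\end{eqnarray}
where the integral operator has kernel $K_N(\theta-\theta')$ given by Eq.~(\ref{CUEN-Kernel}). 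Since $K_N$ is of Its--Izergin--Korepin--Slavnov (integrable) type, this Fredholm determinant is a $\tau$ function of an isomonodromic deformation in the moving endpoint $\varphi$.

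Next, pass to the stereographic variable $t=\cot(\theta/2)$, which maps the unit circle to the real axis and sends the arc endpoints $\theta\in\{0,\varphi\}$ to $t\in\{\infty,\cot(\varphi/2)\}$. The CUE kernel transforms into a rational integrable kernel of rank two, whose associated $2\times 2$ matrix Riemann--Hilbert problem has four regular singular points (the two endpoints together with $t=\pm i$ arising from $1+t^2=0$). The isomonodromic deformation with respect to the moving endpoint is governed by Okamoto's Hamiltonian for Painlev\'e~VI, and by the classical Jimbo--Miwa identity the logarithmic derivative of the $\tau$ function satisfies the corresponding sigma-PVI equation. Defining $\sigma_N(t;\zeta)$ so that
\begin{eqnarray}\label{sigmaN-def-plan}
\log \Phi_N((0,\varphi);\zeta) = -\int_{\cot(\varphi/2)}^{\infty}\frac{dt}{1+t^2}\,\sigma_N(t;\zeta),
\end{eqnarray}
one obtains Eq.~(\ref{pvi}) after the change of variable $\varphi\mapsto t$ and elimination of the canonical momentum. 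This derivation parallels the Forrester--Witte treatment of $\tau$-functions for the circular Jacobi/unitary ensembles; an alternative route represents $\Phi_N$ as a Toeplitz determinant with the Fisher--Hartwig-type symbol $f(e^{i\theta})=1-\zeta\,\mathds{1}_{(0,\varphi)}(\theta)$ and appeals to the Deift--Its--Krasovsky theory of Toeplitz determinants with jump-type singularities.

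To fix the particular solution and produce the boundary data Eq.~(\ref{pvi-bc}), Taylor-expand Eq.~(\ref{Phi-Fredholm-plan}) in the small endpoint $\varphi$,
\begin{eqnarray}
\log\Phi_N((0,\varphi);\zeta) = -\zeta\,{\rm tr}\,\bigl(K_N|_{(0,\varphi)}\bigr) - \frac{1}{2}\zeta^2\,{\rm tr}\,\bigl(K_N|_{(0,\varphi)}\bigr)^2 - \cdots,
\end{eqnarray}
and evaluate each trace using the local expansion $K_N(\theta) = N - \frac{1}{24}N(N^2-1)\theta^2 + {\mathcal O}(\theta^4)$ obtained from Eq.~(\ref{CUEN-Kernel}). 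Inverting Eq.~(\ref{sigmaN-def-plan}) by differentiation and converting from $\varphi$ to $t=\cot(\varphi/2)\sim 2/\varphi$ as $\varphi\to 0$ then reproduces, order by order, the first three coefficients of the large-$t$ expansion stated in Eq.~(\ref{pvi-bc}).

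The main obstacle is the central step: extracting the precise sigma-PVI equation Eq.~(\ref{pvi}), with the explicit $N^2$-dependent coefficient, from the Riemann--Hilbert or isomonodromic data of the CUE integrable kernel on a finite arc. Concretely, one must track the dependence of the four singular exponents on $N$ and $\zeta$, set up the Okamoto Hamiltonian with these exponents, and perform the algebraic elimination of the canonical momentum in favor of $\sigma_N$ to produce the sigma form. Once Eq.~(\ref{pvi}) is established, the remaining integration and boundary-matching steps are essentially routine.
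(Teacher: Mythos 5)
Your plan correctly identifies that $\Phi_N((0,\varphi);\zeta)$ is a Fredholm determinant of the CUE kernel on an arc, correctly anticipates that the boundary data Eq.~(\ref{pvi-bc}) can be read off from a small-$\varphi$ expansion (the paper does precisely this in Appendix~A, with the same local expansion of the sine kernel), and correctly spots the relevant change of variable $t=\cot(\theta/2)$. But the step that actually constitutes the theorem --- producing the specific $\sigma$-PVI equation Eq.~(\ref{pvi}), with its $N^2$ coefficient, from the integrable-kernel/Riemann--Hilbert data --- is not carried out; you explicitly flag it as ``the main obstacle.'' Recognizing that a Fredholm determinant with an integrable kernel is a $\tau$-function of some isomonodromic deformation is far from establishing which Painlev\'e equation governs it: one must still compute the four monodromy exponents as functions of $N$ and $\zeta$, set up the corresponding Okamoto Hamiltonian, and algebraically eliminate the momentum to reach the $\sigma$-form. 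None of this is done, so the proposal is an outline rather than a proof.

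For comparison, the paper's actual argument sidesteps the Riemann--Hilbert machinery entirely. It performs the Cayley substitution $e^{i\theta_j}=(i\lambda_j-1)/(i\lambda_j+1)$ (equivalently $\lambda_j=\cot(\theta_j/2)$) directly inside the $N$-fold eigenvalue integral Eq.~(\ref{Phin-cue}), which transforms the CUE average into a Selberg-type integral over the Cauchy weight $\prod_j(1+\lambda_j^2)^{-N}$ with a split contour. The Painlev\'e~VI representation of precisely this object is a catalogued Forrester--Witte result (Forrester's book, \S8.3.1, Eq.~(8.71)), so Eqs.~(\ref{phin-cue-painleve-6}) and~(\ref{pvi}) follow at once by matching parameters, and only the boundary condition needs independent derivation. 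The Forrester--Witte route you mention only parenthetically as ``parallel'' is in fact the proof the paper uses; the isomonodromy reconstruction you emphasize is the road not taken. To make your proposal a proof, either carry through the Riemann--Hilbert computation to Eq.~(\ref{pvi}), or reduce as the paper does to a Cauchy-ensemble average whose $\sigma$-PVI form is already on the shelf.
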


\begin{proof}
By virtue of Eqs.~(\ref{GF0-rel}) and (\ref{jpdf-cue}), the generating function $\Phi_N((0,\varphi);\zeta)$ in Theorem~\ref{circular-theorem} admits a multidimensional-integral representation
\begin{eqnarray} \fl\qquad \label{Phin-cue}
    \Phi_N((0,\varphi);\zeta) = \frac{1}{N!}
    \prod_{j=1}^N \left( \int_0^{2\pi} - \zeta \int_0^\varphi \right) \frac{d\theta_j}{2\pi}
    \prod_{1 \le j < k \le N}^{}
    \left| e^{i\theta_j} - e^{i\theta_k} \right|^2.
\end{eqnarray}
To express it in terms of a Painlev\'e VI function, we follow Ref.~\cite{FW-2004} to rewrite Eq.~(\ref{Phin-cue}) as an eigenvalue integral over the Cauchy measure. This is achieved through the change of variables
\begin{eqnarray}
    e^{i\theta_j}=\frac{i\lambda_j-1}{i\lambda_j+1}.
\end{eqnarray}
Thanks to the identities
\begin{eqnarray}
    \prod_{j=1}^N \frac{d\theta_j}{2\pi}=\frac{(-1)^N}{\pi^N}\prod_{j=1}^N\frac{d\lambda_j}{1+\lambda_j^2}
\end{eqnarray}
and
\begin{eqnarray} \fl \qquad
    \prod_{1\le j<k\le N}\left|\;e^{i\theta_j}-e^{i\theta_k}\right|^2
    =2^{N(N-1)} \prod_{j=1}^N\frac{1}{(1+\lambda_j^2)^{N-1}}
    \prod_{1 \le j < k \le N}^{}
    \left| \lambda_j - \lambda_k \right|^2,
\end{eqnarray}
we reduce Eq.~(\ref{Phin-cue}) to the multidimensional integral
\begin{eqnarray} \label{FNFF-cauchy} \fl\quad
\Phi_N((0,\varphi);\zeta)=
    \frac{2^{N(N-1)}}{\pi^N} \prod_{j=1}^N
        \left(\int_{-\infty}^{\infty}-\zeta\int_{\cot(\varphi/2)}^\infty \right)
            \frac{d\lambda_j}{(1+\lambda_j^2)^{N}}
                \prod_{1 \le j < k \le N}^{}
    \left| \lambda_j - \lambda_k \right|^2. \nonumber\\
    {}
\end{eqnarray}
Its Painlev\'e VI representation can be read off from Chapter 8, \S~8.3.1 of Ref.~\cite{PF-book} upon setting $\mu=0$, $\alpha=N$ and $s=\cot (\varphi/2)$ in Eq.~(8.71) therein. This results in Eqs.~(\ref{phin-cue-painleve-6}) and (\ref{pvi}). For a detailed derivation of the boundary condition Eq.~(\ref{pvi-bc}), the reader is referred to Appendix~\ref{A-1}.
\end{proof}

\subsection{Fredholm determinant representation}

\begin{proposition} \label{prop-fred}
    Let $\{0\le \theta_1 \le \cdots \le \theta_N < 2\pi\}$ be a sequence of $N \in {\mathbb N}$ {\it ordered} fluctuating eigen-angles drawn from the ${\rm CUE}(N)$. For all $\,0 < \omega \le \pi$, the ensemble averaged power spectrum of eigen-angles admits the exact representation Eq.~(\ref{ps-circular}) of Theorem~\ref{circular-theorem}, where
\begin{eqnarray} \label{Phi-FD}
    \Phi_N((0,\varphi);\zeta) = {\rm det} \big[
        \mathds{1} - \zeta  \hat{K}_N^{(0,\varphi)} \big].
\end{eqnarray}
Here, $\hat{K}_N^{(0,\varphi)}$ is an integral operator defined by
\begin{eqnarray} \label{Phi-FD-SN}
    \big[\hat{K}_N^{(0,\varphi)} f\big] (\theta_1) = \int_{0}^{\varphi} \frac{d\theta_2}{2\pi} K_N(\theta_1 - \theta_2) \, f(\theta_2),
\end{eqnarray}
and ${K}_{N}(\theta)$ is the ${\rm CUE}(N)$ two-point scalar kernel defined by Eq.~(\ref{CUEN-Kernel}), $\zeta=1-z$ with $z=e^{i\omega}$.
\end{proposition}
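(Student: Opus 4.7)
The plan is to derive the Fredholm determinant representation of $\Phi_N((0,\varphi);\zeta)$ directly from the determinantal structure of the ${\rm CUE}(N)$ eigen-angle point process, which is the standard route. I would start from the integral representation of $\Phi_N$ given by Eq.~(\ref{GF0-rel}) combined with the ${\rm CUE}(N)$ JPDF Eq.~(\ref{jpdf-cue}), and observe that the $N$-fold operator $\prod_j(\int_0^{2\pi}-\zeta\int_0^\varphi)\,d\theta_j/(2\pi)$ can be recast as the single integral
\begin{equation*}
    \Phi_N((0,\varphi);\zeta) = \int \prod_{j=1}^N \frac{d\theta_j}{2\pi}\, \bigl(1-\zeta\, \mathds{1}_{\theta_j\in(0,\varphi)}\bigr)\,P_N(\theta_1,\dots,\theta_N).
\end{equation*}

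Next I would expand the product $\prod_{j=1}^N(1-\zeta\,\mathds{1}_{(0,\varphi)}(\theta_j))$ combinatorially. By permutation symmetry of $P_N$, the $\binom{N}{\ell}$ terms in which exactly $\ell$ of the angles are constrained to $(0,\varphi)$ collapse into a single contribution; integrating out the $N-\ell$ unconstrained angles and invoking the definition Eq.~(\ref{RLN-CUE}) of the $\ell$-point correlation function yields
\begin{equation*}
    \Phi_N((0,\varphi);\zeta) = \sum_{\ell=0}^{N}\frac{(-\zeta)^\ell}{\ell!}\int_{(0,\varphi)^\ell}\prod_{k=1}^\ell \frac{d\theta_k}{2\pi}\,R_{\ell,N}(\theta_1,\dots,\theta_\ell).
\end{equation*}

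Finally, I would substitute the determinantal representation Eq.~(\ref{RLN-det}) of $R_{\ell,N}$ in terms of the scalar kernel $K_N$ of Eq.~(\ref{CUEN-Kernel}) and recognize the resulting series as precisely the Fredholm expansion of $\det[\mathds{1}-\zeta\hat{K}_N^{(0,\varphi)}]$, with $\hat{K}_N^{(0,\varphi)}$ the integral operator of Eq.~(\ref{Phi-FD-SN}) acting on $L^2((0,\varphi),d\theta/(2\pi))$. The reproducing property Eq.~(\ref{rep-prop-CUE}), i.e.\ the fact that $K_N$ is the kernel of a finite-rank orthogonal projection with respect to the reference measure $d\theta/(2\pi)$, is what ensures that the measure-theoretic bookkeeping is consistent and that the series is trace-class for $|\zeta|$ finite.

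There is no serious obstacle here: this is the classical identity between the moment generating function of the counting statistic of a determinantal point process and a Fredholm determinant. The only care points are (i)~that the expansion terminates at $\ell=N$ on the combinatorial side while the Fredholm side is automatically truncated by the rank $N$ of $\hat{K}_N^{(0,\varphi)}$, so the two series agree term-by-term, and (ii)~consistent use of the normalized measure $d\theta/(2\pi)$ throughout; with these conventions Eqs.~(\ref{Phi-FD}) and (\ref{Phi-FD-SN}) follow directly.
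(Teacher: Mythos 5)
Your proposal is correct and follows essentially the same route as the paper: expand $\prod_j(1-\zeta\,\mathds{1}_{(0,\varphi)}(\theta_j))$ using permutation symmetry to arrive at the correlation-function series (this is exactly the paper's Eq.~(\ref{ps-tcue-2A})), then substitute the determinantal formula Eq.~(\ref{RLN-det}) and recognize the Fredholm expansion. The only cosmetic difference is that you rederive the intermediate expansion in-line, while the paper cites it from its Appendix~A; your remark attributing trace-class/measure bookkeeping to the reproducing property is unnecessary here (finite rank of $\hat{K}_N^{(0,\varphi)}$ already truncates both sides), but this does not affect the argument.
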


\begin{proof}
Combine the representation Eq.~(\ref{ps-tcue-2A}) with the determinantal formula Eq.~(\ref{RLN-det}) to obtain
\begin{eqnarray} \fl \qquad \label{FD-01}
    \Phi_N((0,\varphi);\zeta) = 1+ \sum_{\ell=1}^{N} \frac{(-\zeta)^\ell}{\ell!} \left(
    \prod_{j=1}^\ell \int_0^\varphi\frac{d\theta_j}{2\pi} \right)\,
    {\det}_{1\le i,j \le \ell} \left[
        K_{N} (\theta_i - \theta_j)
    \right],
\end{eqnarray}
where the scalar kernel $K_N(\theta)$ is given by Eq.~(\ref{CUEN-Kernel}). Equation~(\ref{FD-01}) coincides with
a definition of the Fredholm determinant Eq.~(\ref{Phi-FD}).
\end{proof}

\subsection{Toeplitz determinant representation}

To analyse the power spectrum in the limit $N \rightarrow \infty$, it is beneficial to represent the generating function $\Phi_N((0,\varphi);\zeta)$ entering the exact solution Eq.~(\ref{ps-circular}) in the form of a Toeplitz determinant.

\begin{lemma} \label{claeys-multiple-int}
Consider the Toeplitz determinant
\begin{eqnarray}\label{Tn} \fl \quad
    D_N(t;\alpha_1,\alpha_2;\beta_1,\beta_2) = {\rm det}_{0\le j,k \le N-1} \left(
        \int_{0}^{2\pi} \frac{d\theta}{2\pi} \, e^{i(j-k)\theta}f_t\left( e^{i\theta}; \alpha_1,\alpha_2;\beta_1,\beta_2\right)
    \right)
\end{eqnarray}
associated with the Fisher-Hartwig symbol~\cite{CK-2015}
 \begin{eqnarray} \label{ft-def}
    f_t(z; \alpha_1,\alpha_2;\beta_1,\beta_2) = z^{\beta_1+\beta_2} \prod_{\ell=1}^{2} |z-z_\ell|^{2\alpha_\ell} g_{z_\ell,\beta_\ell}(z) \, z_\ell^{-\beta_\ell},
 \end{eqnarray}
where $z=e^{i\theta}$, $\theta \in [0,2\pi)$, $z_1=e^{it}$, $z_2=e^{i(2\pi-t)}$, $0<t<\pi$, and
\begin{eqnarray} \label{g-symbol-def}
    g_{z_\ell, \beta_\ell} (z) = \left\{
                                         \begin{array}{ll}
                                           e^{ i\pi \beta_\ell}, & \hbox{$0 \le {\rm arg\,} z < {\rm arg\,} z_{\ell}$} \\
                                           e^{- i\pi \beta_\ell}, & \hbox{${\rm arg\,} z_{\ell} \le {\rm arg\,} z < 2\pi$}
                                         \end{array}
                                       \right.
\end{eqnarray}
The parameters $\alpha_1$ (at $z_1$) and $\alpha_2$ (at $z_2$) describe power- or root-type singularities; $\beta_1$ and $\beta_2$ describe jump
discontinuities. It is assumed that ${\rm Re\,}\alpha_\ell>-1/2$ and $\beta_\ell \in \mathbb{C}$ for $\ell=1,2$. The following multiple integral representation holds:
\begin{eqnarray} \fl \quad \label{m-stat-lemma}
        D_N(t;\alpha_1,\alpha_2;\beta_1,\beta_2) = e^{-2i t N \beta_1} e^{-i\pi N (\beta_1+\beta_2)} \nonumber\\
        \fl \qquad\qquad\qquad\quad
        \times
        \frac{1}{N!} \prod_{j=1}^{N}\left(
            \int_{0}^{2\pi} - \left(1-e^{2i\pi\beta_1}\right) \int_{0}^{2t}
        \right) \frac{d\theta_j}{2\pi}\, e^{i\theta_j(\beta_1+\beta_2)} \nonumber\\
        \fl \qquad\qquad\qquad\quad
        \times
         \left| e^{2it}-e^{i\theta_j} \right|^{2\alpha_1}  \left| 1-e^{i\theta_j} \right|^{2\alpha_2}
         \left|
         \Delta_N(e^{i\theta})
         \right|^2,
\end{eqnarray}
where
\begin{eqnarray}
    \Delta_N \big( e^{i\theta}\big) = \prod_{1\le j< k\le N} \big(  e^{i\theta_k} - e^{i\theta_j}\big)
\end{eqnarray}
is the Vandermonde determinant.
\end{lemma}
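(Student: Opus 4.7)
The plan is to start from the classical Heine--Szeg\H{o} identity, which converts the $N\times N$ Toeplitz determinant into an $N$-fold integral over the unit circle against the squared Vandermonde:
\[
    D_N(t;\alpha_1,\alpha_2;\beta_1,\beta_2) = \frac{1}{N!}\prod_{j=1}^{N}\int_{0}^{2\pi}\frac{d\theta_j}{2\pi}\,f_t(e^{i\theta_j};\alpha_1,\alpha_2;\beta_1,\beta_2)\,|\Delta_N(e^{i\theta})|^2.
\]
From this point the task reduces to rewriting the integrand so that the two Fisher--Hartwig singularities sit at $1$ and $e^{2it}$ and the overall prefactor together with the restricted integration operator $\int_{0}^{2\pi}-(1-e^{2i\pi\beta_1})\int_{0}^{2t}$ emerge.

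Next, I would decompose the circle into the three arcs separated by the jump points of $g_{z_1,\beta_1}$ and $g_{z_2,\beta_2}$, namely $A=[0,t)$, $B=[t,2\pi-t)$, and $C=[2\pi-t,2\pi)$. On each arc the piecewise-constant factor $g_{z_1,\beta_1}(e^{i\theta})\,g_{z_2,\beta_2}(e^{i\theta})\,z_1^{-\beta_1}z_2^{-\beta_2}$ is a single explicit constant read off from Eq.~(\ref{g-symbol-def}) together with the identifications $z_1=e^{it}$, $z_2=e^{i(2\pi-t)}$; I would tabulate these three constants once and for all.

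Then I would rotate the integration variable via $\vartheta\equiv\theta+t$ modulo $2\pi$, so that the singular points move to $\arg=0$ (from $z_2$) and $\arg=2t$ (from $z_1$). The absolute-value factors become $|1-e^{i\vartheta}|^{2\alpha_2}$ and $|e^{2it}-e^{i\vartheta}|^{2\alpha_1}$; the Vandermonde is invariant under simultaneous rotation of all eigen-angles; and the non-single-valued prefactor $e^{i\theta(\beta_1+\beta_2)}$ acquires the extra phase $e^{2i\pi(\beta_1+\beta_2)}$ on the part of arc $C$ where the shift wraps $\theta$ past $2\pi$. Under the map $\vartheta=\theta+t$ modulo $2\pi$, the arcs $A$, $B$, $C$ land at $[t,2t)$, $[2t,2\pi)$, and $[0,t)$ respectively.

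The decisive observation is that after recombining all phases arcs $A$ and $C$ yield the \emph{same} overall prefactor $e^{i(\pi-2t)\beta_1-i\pi\beta_2}$ — the branch-jump phase $e^{2i\pi(\beta_1+\beta_2)}$ acquired in arc $C$ compensates precisely for the difference in $g$-values between $A$ and $C$ — whereas arc $B$ carries $e^{-i(\pi+2t)\beta_1-i\pi\beta_2}$, differing by exactly $e^{-2i\pi\beta_1}$. Factoring out the arc-$B$ value per eigenvalue and collecting $N$ copies produces the claimed overall prefactor $e^{-2itN\beta_1}e^{-i\pi N(\beta_1+\beta_2)}$, leaving the per-integration weight $e^{2i\pi\beta_1}\mathds{1}_{[0,2t)}(\vartheta)+\mathds{1}_{[2t,2\pi)}(\vartheta)$, which is exactly the action of the operator $\int_{0}^{2\pi}-(1-e^{2i\pi\beta_1})\int_{0}^{2t}$; this gives Eq.~(\ref{m-stat-lemma}). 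The main obstacle is phase bookkeeping: one must track the multi-valuedness of $z^{\beta_1+\beta_2}$ carefully across $\arg z=0$, since it is precisely the $e^{2i\pi(\beta_1+\beta_2)}$ branch jump under the wrap-around that fuses arcs $A$ and $C$ into the single effective interval $[0,2t)$. Without this compensation one would be left with three distinct pieces and the neat two-term integral operator on the right-hand side would fail to materialise.
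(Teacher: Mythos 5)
Your proposal is correct and follows essentially the same path as the paper: splitting the circle into the three arcs $[0,t)$, $[t,2\pi-t)$, $[2\pi-t,2\pi)$ determined by the jump points, rotating by $t$ (with the wrap-around on the last arc supplying the compensating branch phase $e^{2i\pi(\beta_1+\beta_2)}$), and recombining the arc prefactors into the operator $\int_0^{2\pi}-(1-e^{2i\pi\beta_1})\int_0^{2t}$ — I verified your per-arc constants $e^{i(\pi-2t)\beta_1-i\pi\beta_2}$ (arcs $A,C$) and $e^{-i(\pi+2t)\beta_1-i\pi\beta_2}$ (arc $B$), and the factored-out prefactor $e^{-2itN\beta_1}e^{-i\pi N(\beta_1+\beta_2)}$, all of which match. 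The only difference is organizational: the paper performs the arc-splitting and rotation on the single Toeplitz matrix-element integral and applies Andr\'eief's formula at the very end, whereas you invoke Heine--Szeg\H{o} (the same identity) at the outset and then manipulate the $N$-fold integral directly, relying on the rotation invariance of $|\Delta_N|^2$; both are valid and the phase bookkeeping is identical.
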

\begin{proof}
Disentangling the product $g_{z_1,\beta_1}(z)\,g_{z_2,\beta_2}(z)$ in Eq.~(\ref{ft-def}), we notice that the single integral in Eq.~(\ref{Tn}) equals
\begin{eqnarray} \fl\quad \label{SI-1}
        e^{-it(\beta_1-\beta_2)} e^{-2 i\pi\beta_2}
        \left( e^{i\pi (\beta_1+\beta_2)} \int_{0}^{t} + e^{i\pi (\beta_2 - \beta_1)} \int_{t}^{2\pi - t}
        + e^{-i\pi (\beta_1+\beta_2)} \int_{2\pi-t}^{2\pi} \right)\nonumber\\
        \fl \qquad \qquad \qquad
        \times
         \, \frac{d\theta}{2\pi}  e^{i(j-k)\theta}  e^{i\theta(\beta_1+\beta_2)}
         \big| e^{i\theta}-e^{it} \big|^{2\alpha_1}  \big| e^{i\theta} - e^{i(2\pi-t)} \big|^{2\alpha_2}.
\end{eqnarray}
Changing the integration variables $\theta^\prime = \theta+t$ in the first and second integral, and separately $\theta^\prime = \theta+t -2\pi$ in the third integral, we reduce Eq.~(\ref{SI-1}) to
\begin{eqnarray} \fl\quad \label{SI-2}
        e^{-2it\beta_1} e^{- i\pi(\beta_1+ \beta_2)} e^{-it(j-k)}
        \left( e^{2i\pi \beta_1} \int_{0}^{2t} + \int_{2t}^{2\pi}
        \right)\nonumber\\
        \fl \qquad \qquad \qquad
        \times
         \, \frac{d\theta^\prime}{2\pi}  e^{i(j-k)\theta^\prime}  e^{i\theta^\prime(\beta_1+\beta_2)}
         \big| e^{i\theta^\prime}-e^{2it} \big|^{2\alpha_1}  \big| 1 - e^{i\theta^\prime} \big|^{2\alpha_2}
\end{eqnarray}
and, after adding and subtracting an integral from $0$ to $2t$, further to
\begin{eqnarray} \fl\quad \label{SI-3}
        e^{-2it\beta_1} e^{- i\pi(\beta_1+ \beta_2)} e^{-it(j-k)}
        \left( \int_{0}^{2\pi} - \left( 1 -e^{2i\pi \beta_1}\right) \int_{0}^{2t}
        \right)\nonumber\\
        \fl \qquad \qquad \qquad
        \times
         \, \frac{d\theta^\prime}{2\pi}  e^{i(j-k)\theta^\prime}  e^{i\theta^\prime(\beta_1+\beta_2)}
         \big| e^{i\theta^\prime}-e^{2it} \big|^{2\alpha_1}  \big| 1 - e^{i\theta^\prime} \big|^{2\alpha_2}.
\end{eqnarray}
Substitution of Eq.~(\ref{SI-3}) back to Eq.~(\ref{Tn}) yields
\begin{eqnarray} \fl \quad \label{m-stat-lemma-00}
        D_N(t;\alpha_1,\alpha_2;\beta_1,\beta_2) = e^{-2i t N \beta_1} e^{-i\pi N (\beta_1+\beta_2)} \nonumber\\
        \fl \qquad\qquad\qquad
        \times {\rm det}_{0\le j,k\le N-1}
        \Bigg(
         \left( \int_{0}^{2\pi} - \left( 1 -e^{2i\pi \beta_1}\right) \int_{0}^{2t}\right) \frac{d\theta}{2\pi} \,
         \nonumber\\
        \fl \qquad \qquad\qquad\qquad\qquad
        \times
        e^{i(j-k)\theta}  e^{i\theta(\beta_1+\beta_2)}
         \big| e^{i\theta}-e^{2it} \big|^{2\alpha_1}  \big| 1 - e^{i\theta} \big|^{2\alpha_2}\Bigg).
\end{eqnarray}
Finally, applying Andr\'eief's formula \cite{A-1883,dB-1955},
\begin{eqnarray} \fl \quad \label{adbf}
    {\rm det}_{0\le j,k \le N-1}\left(
        \int_{{\mathcal L}} \frac{d\theta}{2\pi}\, w(\theta) e^{i(j-k)\theta}
    \right)
    = \frac{1}{N!}  \left( \prod_{j=1}^N \int_{{\mathcal L}}\frac{d\theta_j}{2\pi} w(\theta_j) \right) \big|
        \Delta_n(e^{i\theta})
    \big|^2
\end{eqnarray}
to Eq.~(\ref{m-stat-lemma-00}), we derive Eq.~(\ref{m-stat-lemma}).
\end{proof}

\begin{proposition}\label{toep-prop}
    Let $\{0\le \theta_1 \le \cdots \le \theta_N < 2\pi\}$ be a sequence of $N \in {\mathbb N}$ {\it ordered} fluctuating eigen-angles drawn from the ${\rm CUE}(N)$. For all $\,0 < \omega \le \pi$, the ensemble averaged power spectrum of eigen-angles admits the exact representation Eq.~(\ref{ps-circular}) of Theorem~\ref{circular-theorem}, where
\begin{eqnarray} \label{GF-toeplitz-1}
    \Phi_N((0,\varphi); \zeta) = e^{i\varphi \tilde{\omega} N} \, D_N(\varphi/2;0,0;\tilde\omega,-\tilde\omega).
\end{eqnarray}
Here, $\tilde{\omega} = \omega/2\pi$, and $D_N$ is the Toeplitz determinant defined in Eq.~(\ref{Tn}).
\end{proposition}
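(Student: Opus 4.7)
The plan is to reduce Proposition~\ref{toep-prop} to a direct application of Lemma~\ref{claeys-multiple-int}. The starting point is the multidimensional-integral representation of the generating function already available from Eq.~(\ref{Phin-cue}), namely
\begin{eqnarray*}
    \Phi_N((0,\varphi);\zeta) = \frac{1}{N!}
    \prod_{j=1}^N \left( \int_0^{2\pi} - \zeta \int_0^\varphi \right) \frac{d\theta_j}{2\pi}\,
    \left| \Delta_N(e^{i\theta}) \right|^2,
\end{eqnarray*}
with $\zeta = 1-z = 1 - e^{i\omega}$. So the proof amounts to matching this to the right-hand side of Eq.~(\ref{m-stat-lemma}) for a suitable choice of Fisher-Hartwig parameters.

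The natural identification is to take $\alpha_1 = \alpha_2 = 0$ (so that the absolute-value factors $|e^{2it}-e^{i\theta_j}|^{2\alpha_1} |1 - e^{i\theta_j}|^{2\alpha_2}$ in Eq.~(\ref{m-stat-lemma}) collapse to unity), $\beta_1 = \tilde\omega = \omega/(2\pi)$, $\beta_2 = -\tilde\omega$ (so that $\beta_1 + \beta_2 = 0$, killing the weight $e^{i\theta_j(\beta_1+\beta_2)}$), and $t = \varphi/2$ (so that the inner integration range $\int_0^{2t}$ becomes $\int_0^\varphi$). Under these choices,
\begin{eqnarray*}
    1 - e^{2i\pi\beta_1} = 1 - e^{i\omega} = \zeta,
\end{eqnarray*}
which is precisely the coefficient appearing inside the parenthesis in Eq.~(\ref{Phin-cue}), and the prefactor in Eq.~(\ref{m-stat-lemma}) reduces to
\begin{eqnarray*}
    e^{-2itN\beta_1} e^{-i\pi N(\beta_1+\beta_2)} = e^{-i\varphi \tilde\omega N}.
\end{eqnarray*}

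Substituting these values into Lemma~\ref{claeys-multiple-int} therefore yields
\begin{eqnarray*}
    D_N(\varphi/2; 0,0; \tilde\omega,-\tilde\omega) = e^{-i\varphi \tilde\omega N}\, \Phi_N((0,\varphi);\zeta),
\end{eqnarray*}
and multiplying through by $e^{i\varphi \tilde\omega N}$ yields the claimed identity Eq.~(\ref{GF-toeplitz-1}). There is no real obstacle here beyond bookkeeping; the only place where care is needed is ensuring that the Fisher-Hartwig symbol Eq.~(\ref{ft-def}) with our chosen parameters is indeed consistent with the hypothesis $0 < \varphi/2 < \pi$ of Lemma~\ref{claeys-multiple-int}, which holds whenever the spectral arc $(0,\varphi)$ is a proper subset of the unit circle. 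The case $\varphi = \pi$ (and the degenerate case $\varphi \to 0$ or $\varphi \to 2\pi$) can be handled by continuity of both sides in $\varphi$.
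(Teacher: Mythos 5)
Your proof is correct and follows exactly the route the paper takes: substituting $t=\varphi/2$, $\alpha_1=\alpha_2=0$, $\beta_1=-\beta_2=\tilde\omega$ into Lemma~\ref{claeys-multiple-int} and matching the result against the multidimensional-integral representation Eq.~(\ref{Phin-cue}). The paper records this identification in a single line without spelling out the prefactor bookkeeping, which you carry out explicitly and correctly.
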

\begin{proof}
Comparing Eq.~(\ref{Phin-cue}) with Eq.~(\ref{m-stat-lemma}), we identify $t=\varphi/2$, $\alpha_1=\alpha_2=0$, $\beta_1= -\beta_2 = \tilde\omega$. This completes the proof.
\end{proof}

\section{Power spectrum for ${\rm CUE}(\infty)$}\label{CUE-INF}

In the limit $N \rightarrow \infty$, the exact solution for the ${\rm CUE}(N)$ power spectrum presented in Theorem~\ref{circular-theorem} should converge to a universal, parameter-free law. To determine it, we shall perform an asymptotic analysis of the exact solution Eqs.~(\ref{ps-circular}) with the generating function $\Phi_N((0,\varphi); 1-z)$ being represented as a Toeplitz determinant in Proposition \ref{toep-prop}.

As a matter of fact, an alternative representation of the power spectrum, given by Corollary~\ref{ps-master-symmetry}, is a better starting point for the asymptotic analysis as Eq.~(\ref{ps-lemma}) fully explores the mirror symmetry of the ${\rm CUE}(N)$ spectra as discussed in Remark~\ref{rem-symmetry}. We shall prove that, as $N\rightarrow\infty$, only the first term in Eq.~(\ref{ps-lemma}) will survive
\begin{eqnarray} \label{SN-anticipation} \fl\qquad
    S_\infty(\omega) = \lim_{N\rightarrow\infty} S_N(\omega) = -\frac{2z}{(1-z)^2}\,{\rm Re} \lim_{N\rightarrow\infty} N\int_{0}^{\pi} \frac{d\varphi}{2\pi} \,\Phi_N((0,\varphi);1-z),
\end{eqnarray}
where $0<\omega<\pi$. A rigorous proof of this statement, together with the explicit calculation of the limit in Eq.~(\ref{SN-anticipation}), is the main objective of this section.

\subsection{Uniform asymptotics of the Toeplitz determinant} \label{CK-section}

To perform the integrals Eq.~(\ref{ps-alter-master}) in Eq.~(\ref{ps-lemma}) in the limit $N \rightarrow \infty$, {\it uniform} asymptotics of the Toeplitz determinant in the r.h.s.~of~Eq.~(\ref{GF-toeplitz-1}) are required in the subtle regime of two merging singularities. In our case, the Fisher-Hartwig symbol associated with $D_N(\varphi/2; 0,0;\tilde\omega,-\tilde\omega)$, has two jump type singularities as $\alpha_1=\alpha_2=0$, see Proposition~\ref{toep-prop}. Relevant uniform asymptotics were studied in great detail by Claeys and Krasovsky \cite{CK-2015}. For the Toeplitz determinant in Eq.~(\ref{GF-toeplitz-1}), their Theorems~1.1, ~1.5 and~1.8 hold.
\begin{theorem}\label{thm:CK}
For $0\le \varphi <\varphi_0$, where $\varphi_0$ is sufficiently small \footnote[2]{In fact, here $\varphi_0 =2\pi -\epsilon$ with $\epsilon >0$.}, the following asymptotic expansion holds {\it uniformly} as $N \rightarrow \infty$
\begin{eqnarray} \fl \label{eq:Edge-1}
\quad
	\log D_N(\varphi/2;0,0;\tilde\omega,-\tilde\omega) \nonumber\\
\fl \qquad\qquad
     = - i N \tilde{\omega} \varphi - 2\tilde{\omega}^2 \log \left(
	\frac{\sin(\varphi/2)}{\varphi/2}\right)
	+ \int_{0}^{- i N \varphi} \frac{ds}{s}\, \sigma(s) + {\mathcal O}\left(N^{-1+ 2 \tilde{\omega}}\right)
\end{eqnarray}
so that
\begin{eqnarray} \label{eq:Edge-2} \fl\qquad
	\Phi_N((0,\varphi);\zeta) =  \left(
	\frac{\sin(\varphi/2)}{\varphi/2}
	\right)^{-2\tilde{\omega}^2} \exp\left(
	\int_{0}^{- i N \varphi} \frac{ds}{s}\, \sigma(s)
	\right) \left( 1 + {\mathcal O}(N^{-1+ 2 \tilde{\omega}}) \right).
\end{eqnarray}
Here, $\sigma(s)$ is the fifth Painlev\'e transcendent defined as the solution to the nonlinear equation
\begin{equation} \label{eq:PV-eq-Th}
	s^2 (\sigma^{\prime\prime})^2 = \left(\sigma - s \sigma^\prime + 2 (\sigma^\prime)^2 \right)^2 - 4 (\sigma^\prime)^4
\end{equation}
subject to the boundary conditions~\footnote[3]{In Theorem~1.1 of Ref.~\cite{CK-2015}, the boundary condition as $s\rightarrow - i0_+$ is of the form ${\mathcal O}(|s|\log|s|)$. For the particular Fisher-Hartwig symbol associated with $D_N(\varphi/2; 0,0;\tilde\omega,-\tilde\omega)$, the boundary condition at zero admits a power series expansion in $s$ whose coefficients can be calculated recursively, see Appendix~\ref{A-2}.}
\begin{eqnarray}\label{eq:bc-zero-Th} \fl \qquad\qquad
	\sigma(s)= -\frac{s}{2\pi}\big( 1 - e^{i \omega}\big) - \frac{s^2}{4\pi^2}\big( 1 - e^{i \omega}\big)^2 + {\mathcal O}\left(
	s^3
	\right)\quad {\rm as} \quad s\rightarrow - i0_+
\end{eqnarray}
and
\begin{eqnarray}\label{eq:bc-inf-Th}\fl \qquad\qquad
	\sigma(s) = -{\tilde \omega} s - 2{\tilde \omega}^2 + \frac{s\gamma(s)}{1+\gamma(s)} + {\mathcal O}\left(
	|s|^{-1+2{\tilde \omega}}
	\right)
 \quad {\rm as} \quad  s\rightarrow - i\infty.
\end{eqnarray}
The function $\gamma(s)$ in Eq.~(\ref{eq:bc-inf-Th}) equals~\footnote[5]{Here, we have corrected a coefficient in the appearance of $\gamma(t)$ in Ref.~\cite{CK-2015}.
Equation~(\ref{eq:gamma-Th}) is compatible with an asymptotic expansion in Ref.~\cite{MCT-1986}. Numerical calculations provide independent support to this result.}
\begin{eqnarray}\label{eq:gamma-Th}
	\gamma(s) = \left|
	s
	\right|^{2(-1+2\tilde{\omega})} e^{-i |s|} \frac{\Gamma^2(1-\tilde{\omega}) }{\Gamma^2(\tilde{\omega})}.
\end{eqnarray}
The above holds for $0 \le \tilde{\omega} < 1/2$.
\end{theorem}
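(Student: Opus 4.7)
The plan is to reduce the statement to three results from Claeys--Krasovsky \cite{CK-2015} after identifying the Fisher--Hartwig parameters, and then supplement the general theory with a case--specific computation of the $s\to -i 0_+$ boundary condition. First I would match our Toeplitz determinant $D_N(\varphi/2;0,0;\tilde\omega,-\tilde\omega)$ against the Fisher--Hartwig symbol (\ref{ft-def}): the singularities sit at the symmetric points $z_1=e^{i\varphi/2}$ and $z_2=e^{i(2\pi-\varphi/2)}$, the power indices vanish ($\alpha_1=\alpha_2=0$, so only jump-type singularities are present) and the jump indices satisfy $\beta_1=-\beta_2=\tilde\omega$. The relevant ``merging'' variable of \cite{CK-2015} is $t=\varphi/2$ and the relevant scaling variable is $s = -2iNt = -iN\varphi$. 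The allowed range $0\le \tilde\omega<1/2$ lies strictly inside the regularity domain $|{\rm Re}\,\beta_\ell|<1/2$ required by their analysis, and the symmetric placement of the singularities means that the behaviour at fixed small $\varphi$ is governed by a single local parametrix at the merging point $\varphi/2\to 0$.

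Next I would invoke Theorem~1.1 of \cite{CK-2015}, which delivers the uniform expansion (\ref{eq:Edge-1}): in their framework $\log D_N$ is first represented through a differential identity
\begin{equation*}
\partial_t \log D_N = F_N(t)
\end{equation*}
with $F_N$ computable from the Riemann--Hilbert problem for the orthogonal polynomials on the unit circle associated with $f_t$. Deift--Zhou steepest descent with a local parametrix built from the $\sigma$--form of Painlev\'e V then shows that, uniformly in $t\in[0,\varphi_0/2)$, $F_N(t)$ has a closed form in terms of $\sigma(-2iNt)$. Integration yields the leading term $-iN\tilde\omega\varphi$ (coming from the phase normalisation already isolated in (\ref{GF-toeplitz-1})), the ``regular'' logarithmic contribution $-2\tilde\omega^2\log(\sin(\varphi/2)/(\varphi/2))$, and the Painlev\'e integral; the remainder estimate $\mathcal{O}(N^{-1+2\tilde\omega})$ is uniform in $\varphi$ by construction. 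Applying their Theorem~1.5 to the regime $|s|\to\infty$ (separated singularities) produces the large-$s$ behaviour (\ref{eq:bc-inf-Th}), with the elementary function $\gamma(s)$ in (\ref{eq:gamma-Th}) read off from the standard Fisher--Hartwig coefficients; here I would carefully recompute the $\Gamma$-factor ratios for $\alpha=0$, $\beta=\tilde\omega$ to justify the footnoted correction to \cite{CK-2015}, cross-checking against the Mehta--Craig--Tracy asymptotics \cite{MCT-1986} and against numerics.

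The remaining task is to sharpen the near-zero behaviour from the generic $\mathcal{O}(|s|\log|s|)$ of their Theorem~1.8 to the analytic expansion (\ref{eq:bc-zero-Th}). The cancellation of logarithms is a consequence of the symmetry $\beta_1+\beta_2=0$ in our setting (this is also the condition that makes $f_t$ an $L^1$ symbol all the way down to $t=0$). Concretely, one would exploit the multiple-integral representation (\ref{m-stat-lemma}) together with (\ref{GF-toeplitz-1}) to expand $\Phi_N((0,\varphi);\zeta)$ in powers of $\varphi$ (equivalently $s$) for fixed $N$ and match coefficients with $\exp(\int_0^s \sigma(u)du/u)$; the leading coefficient $-\zeta/(2\pi)$ comes from the single integral in (\ref{Phin-cue}) at $\zeta$-linear order, after which one-point and two-point Dyson kernel reductions give the subsequent term $-\zeta^2/(4\pi^2)$. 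Alternatively, substituting a Taylor series $\sigma(s)=c_1 s + c_2 s^2+\dots$ into the Painlev\'e V equation (\ref{eq:PV-eq-Th}) forces every coefficient to be determined recursively once $c_1$ is fixed, and this is the route developed in Appendix~\ref{A-2}. The main obstacle is the first route above: showing that the apparent $|s|\log|s|$ term in (\ref{eq:bc-zero-Th}) really vanishes for the symmetric pair $(\tilde\omega,-\tilde\omega)$ rather than merely being subdominant, and doing so uniformly, which requires tracking the phase conventions in the $g$-factor (\ref{g-symbol-def}) carefully through the change of variables used in Lemma~\ref{claeys-multiple-int}.
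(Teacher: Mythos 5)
Your proposal follows essentially the same route the paper takes: identify the Fisher--Hartwig parameters via Lemma~\ref{claeys-multiple-int} and Proposition~\ref{toep-prop}, invoke Theorems~1.1, 1.5 and~1.8 of Claeys--Krasovsky \cite{CK-2015}, and refine the generic $\mathcal{O}(|s|\log|s|)$ near-zero boundary condition to the analytic expansion in Appendix~\ref{A-2} --- which is precisely your ``alternatively'' route, substituting a Taylor ansatz for $\sigma$ into the Chazy form of $\sigma$-PV with the leading coefficient fixed by the sine-kernel Fredholm-determinant expansion, not by reading power series from the finite-$N$ Toeplitz integrals. Section~\ref{CK-section} of the paper contains little beyond the citation, so your unpacking of the Riemann--Hilbert/Deift--Zhou mechanism and the $\gamma(s)$ cross-check is a faithful reconstruction rather than a deviation (two small nits: \cite{MCT-1986} is McCoy--Tang, not Mehta--Craig--Tracy, and the paper's footnote attributes the $\mathcal{O}(|s|\log|s|)$ estimate to CK's Theorem~1.1, not Theorem~1.8).
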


\begin{remark}\label{Rem-no-poles}
Following Ref.~\cite{CK-2015}, we notice that in Eqs.~(\ref{eq:Edge-1}) and (\ref{eq:Edge-2}) the path of integration in the complex $s$-plane should be chosen to avoid a finite number of poles $\{ s_j \in - i {\mathbb R}_+ \}$ of $\sigma(s)$ corresponding to zeros $\{ \varphi_j = i s_j/N \}$ in the asymptotics of the Toeplitz determinant $D_N(\varphi/2;0,0;\tilde\omega,-\tilde\omega)$. In Lemma~\ref{Lemma-zeros} and Remark~\ref{Remark-zeros} we show that $\{ s_j \}$ is the empty set provided $0\le \tilde\omega <1/2$. At $\tilde\omega=1/2$, when Theorem~\ref{thm:CK} does not hold, numerical analysis reveals that there are poles in $\sigma(s)$.
\hfill $\blacksquare$
\end{remark}

\begin{remark}
The following global integral condition for $\sigma(s)$ holds, see Ref.~\cite{CK-2015}:
\begin{eqnarray} \label{eq:global}
	\lim_{T\rightarrow +\infty} \left(
	\int_{0}^{-i T} \frac{ds}{s} \, \sigma(s) - i\tilde{\omega} T +2\tilde{\omega}^2 \log T
	\right) = \log G_{\tilde{\omega}},
\end{eqnarray}
where
\begin{eqnarray}
    G_{\tilde\omega} = G^2(1+\tilde\omega)G^2(1-\tilde\omega)
\end{eqnarray}
with $G(\dots)$ being the Barnes' $G$-function.
\hfill $\blacksquare$
\end{remark}

\subsection{Proof of Theorem~\ref{cue-pv}}\label{cue-pv-proof}

We start with two technical Lemmas.

\begin{lemma}\label{lemma-tech}
Let the function
\begin{eqnarray} \label{eq:f}
	\mathcal{F}_N(\varphi;z) =\left[1- \left(1-z^{-N}\right) \frac{\varphi}{2\pi}\right] \left(\frac{\sin(\varphi/2)}{\varphi/2}\right)^{-2\tilde{\omega}^2}
\end{eqnarray}
be defined on the unit circle $z=e^{2 i \pi \tilde{\omega}}$, and let $\sigma(s)$ be the Painlev\'e V transcendent specified in Theorem~\ref{thm:CK}. As $N\rightarrow\infty$, the following estimate holds
\begin{eqnarray}\fl \label{est-2}
 \qquad N  \int_{0}^\pi d\varphi\, \left\{	1- \mathcal{F}_N(\varphi;z)\right\} \exp\left(\int_0^{-i N\varphi} \frac{ds}{s} \sigma(s)\right)
    = {\mathcal O}(N^{-2\tilde{\omega}^2})
\end{eqnarray}
for $\tilde\omega \in (0, 1/2)$.
\end{lemma}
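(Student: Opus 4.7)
\noindent\emph{Proof sketch.}

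The plan is to isolate the fast phase $e^{i\tilde\omega N\varphi}$ hidden inside $E(\varphi) := \exp\bigl(\int_0^{-iN\varphi}\sigma(s)\,ds/s\bigr)$ and exploit the ensuing oscillation by integration by parts. First I would decompose the prefactor as
\[
1 - \mathcal{F}_N(\varphi;z) = \bigl(1 - g(\varphi)\bigr) + \frac{(1-z^{-N})\varphi}{2\pi}\,g(\varphi),
\]
with $g(\varphi) = (\sin(\varphi/2)/(\varphi/2))^{-2\tilde\omega^2}$, thereby splitting the left-hand side of Eq.~(\ref{est-2}) as $I_N = A_N + B_N$ with $A_N := N\int_0^\pi (1-g(\varphi))E(\varphi)\,d\varphi$ and $B_N := \frac{N(1-z^{-N})}{2\pi}\int_0^\pi \varphi g(\varphi) E(\varphi)\,d\varphi$. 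Both $1-g(\varphi)$ and $\varphi g(\varphi)$ vanish at $\varphi = 0$ --- quadratically and linearly, respectively --- which will annihilate the boundary term of the forthcoming integration by parts at the lower endpoint.

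Next I would factor $E(\varphi) = H(\varphi)\,e^{i\tilde\omega N\varphi}$. The identity $H'(\varphi)/H(\varphi) = \sigma(-iN\varphi)/\varphi - i\tilde\omega N$, combined with the local expansion of $\sigma$ at $s=0$ [Eq.~(\ref{eq:bc-zero-Th})], the asymptotic data at $s=-i\infty$ [Eqs.~(\ref{eq:bc-inf-Th})--(\ref{eq:gamma-Th})], the global integral condition~(\ref{eq:global}), and the absence of poles on the negative imaginary axis for $\tilde\omega\in(0,1/2)$ (Remark~\ref{Rem-no-poles}), should yield the uniform envelopes
\[
|H(\varphi)| \le C(1+N\varphi)^{-2\tilde\omega^2},\qquad |H'(\varphi)| \le C\,|H(\varphi)|\,\max(N,\,1/\varphi),
\]
valid on $\varphi\in(0,\pi]$. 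The amplitude envelope interpolates between $H(0)=1$ and the power-law decay $H(\varphi)\sim G_{\tilde\omega}\,e^{i\pi\tilde\omega^2}(N\varphi)^{-2\tilde\omega^2}$ for $N\varphi\to\infty$; the derivative envelope has the two regimes $|H'|\le CN|H|$ for $N\varphi\le 1$ and $|H'|\le C|H|/\varphi$ for $N\varphi > 1$, matched at the crossover $N\varphi\sim 1$.

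I would then integrate by parts in $\varphi$ against $e^{i\tilde\omega N\varphi}$ for each $f\in\{1-g,\ \varphi g\}$. The lower boundary vanishes since $f(0)=0$, while the upper boundary contributes $O(|f(\pi)H(\pi)|/N)=O(N^{-1-2\tilde\omega^2})$ by the amplitude envelope at $\varphi=\pi$. Splitting the residual integral at $\varphi = 1/N$ and using $|f(\varphi)|\le C\varphi$ near the origin along with the two derivative regimes, one shows $\int_0^\pi |(fH)'(\varphi)|\,d\varphi = O(N^{-2\tilde\omega^2})$, so that $\int_0^\pi f\,H\,e^{i\tilde\omega N\varphi}\,d\varphi = O(N^{-1-2\tilde\omega^2})$. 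Multiplying by the outer prefactors $N$ (for $A_N$) and $N(1-z^{-N})/(2\pi)$ (for $B_N$; bounded by $N/\pi$ since $|1-z^{-N}|\le 2$) yields $A_N,B_N = O(N^{-2\tilde\omega^2})$, which matches the claimed bound.

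The hardest step will be the uniform derivative envelope on $H$ across the crossover $N\varphi\sim 1$: the subleading piece $s\gamma(s)/(1+\gamma(s))$ in Eq.~(\ref{eq:bc-inf-Th}) has modulus $\sim|s|^{-1+4\tilde\omega}$, which fails to be arbitrarily small as $\tilde\omega\to 1/2$, so one must control its contribution to $H'/H$ and to $\int|(fH)'|\,d\varphi$ by combining the polynomial decay of $|H|$ at large $N\varphi$ with the polynomial vanishing of $f$ at the origin in order to secure the target exponent $-2\tilde\omega^2$ in the final estimate.
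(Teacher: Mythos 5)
The approach you sketch is genuinely different from the paper's, and it has a real gap in the range $\tilde\omega\in(1/4,1/2)$. The paper splits the integral at $\Omega(N)/N$ with a slowly growing $\Omega(N)$ (e.g.\ $\log N$); on the outer region $\varphi\geq\Omega(N)/N$ it replaces $\exp\bigl(\int_0^{-iN\varphi}\sigma(s)\,ds/s\bigr)$ by its leading asymptotics $G_{\tilde\omega}(N\varphi)^{-2\tilde\omega^2}e^{i\tilde\omega N\varphi}\bigl(1+o(1)\bigr)$, where the $o(1)$ is uniform because $N\varphi\geq\Omega(N)\to\infty$, and then integrates by parts against $e^{i\tilde\omega N\varphi}$ acting only on explicit elementary functions. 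Crucially, the subleading corrections to $\sigma$ are absorbed into a multiplicative $o(1)$ factor and are \emph{never} differentiated. You instead keep the exact $H(\varphi)=E(\varphi)e^{-i\tilde\omega N\varphi}$ and differentiate it, which requires a pointwise envelope on $H'/H=\sigma(-iN\varphi)/\varphi-i\tilde\omega N$.

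The problem is with the claimed envelope $|H'|\leq C|H|/\varphi$ for $N\varphi>1$. Using Eq.~(\ref{eq:bc-inf-Th}) one gets $H'/H=-2\tilde\omega^2/\varphi+\bigl(s\gamma(s)/(1+\gamma(s))\bigr)/\varphi+\cdots$ with $s=-iN\varphi$, and the $\gamma$-term has modulus $\sim(N\varphi)^{-1+4\tilde\omega}/\varphi$. For $\tilde\omega>1/4$ the factor $(N\varphi)^{-1+4\tilde\omega}$ is unbounded, so the envelope fails. You flag this but propose to compensate by combining the decay of $|H|$ with the vanishing of $f$ at the origin; this does not close the estimate. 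Carrying the combination through gives
\[
\int_{1/N}^{\pi}|f(\varphi)|\,|H(\varphi)|\,\frac{(N\varphi)^{-1+4\tilde\omega}}{\varphi}\,d\varphi
\;\lesssim\;\frac{1}{N}\int_1^{N\pi}u^{-1+4\tilde\omega-2\tilde\omega^2}\,du
\;\sim\;N^{-1+4\tilde\omega-2\tilde\omega^2},
\]
and after the final multiplication by $N$ (and division by $N$ from the integration-by-parts prefactor) this is $O\bigl(N^{-1+4\tilde\omega-2\tilde\omega^2}\bigr)$, which exceeds the required $O(N^{-2\tilde\omega^2})$ precisely when $\tilde\omega>1/4$. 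To rescue your route you would have to exploit the extra oscillation $e^{-i|s|}=e^{-iN\varphi}$ hidden in $\gamma(s)$ — the combined phase $e^{i(\tilde\omega-1)N\varphi}$ is still non-stationary, so another integration by parts cycle could kill the offending term — but that machinery is absent from your sketch, and it is exactly the sort of bookkeeping the paper's approach avoids by trading the derivative envelope for a uniform $o(1)$ replacement on the outer region.
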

\begin{proof}
To estimate Eq.~(\ref{est-2}), we split the integral therein into two parts
\begin{eqnarray}\fl \label{eq:R1}
\qquad \mathcal{E}_N^{(1)}(z) = N \int_{0}^{\Omega(N)/N} d\varphi\,
\left\{	1- \mathcal{F}_N(\varphi;z)\right\} \exp\left(\int_0^{-i N\varphi} \frac{ds}{s} \sigma(s)\right)
\end{eqnarray}
and
\begin{eqnarray}\fl \label{eq:R2}
\qquad	\mathcal{E}_N^{(2)}(z) =
N \int_{\Omega(N)/N}^\pi d\varphi\,
\left\{	1- \mathcal{F}_N(\varphi;z)\right\} \exp\left(\int_0^{-i N\varphi} \frac{ds}{s} \sigma(s)\right),
\end{eqnarray}
where $\Omega(x)$ is any positive, smooth for large $x$ function such that $\Omega(N) \rightarrow \infty$ whilst $\Omega(N)/N \rightarrow 0$
as $N \rightarrow \infty$.

(i) To treat the integral Eq.~(\ref{eq:R1}), we change the integration variable $\varphi=\lambda/N$,
\begin{eqnarray}\fl \label{eq:R1-2}
	\qquad \mathcal{E}_N^{(1)}(z) = \int_{0}^{\Omega(N)} d\lambda\,
\left\{	1- \mathcal{F}_N\left(\lambda/N;z\right)\right\}
\exp\left(\int_0^{-i \lambda} \frac{ds}{s} \sigma(s)\right),
\end{eqnarray}
and make use of Eq.~(\ref{eq:global}) to realize that
 \begin{eqnarray} \label{eq:R1-inf}
 	 \exp\left(\int_0^{-i \lambda} \frac{ds}{s} \sigma(s)\right)=\mathcal{O}(\lambda^{-2\tilde\omega^2}) \quad {\rm as} \quad \lambda\rightarrow\infty.
 \end{eqnarray}
Equivalently,
   \begin{eqnarray} \label{eq:R1-inf-C}
  \sup_{\lambda\in [0,\infty)} \left| \lambda^{2\tilde{\omega}^2}	\exp\left(\int_0^{-i \lambda} \frac{ds}{s} \sigma(s)\right)\right|=C,
  \end{eqnarray}
where $C$ is a constant which depends on $\tilde{\omega}$ but obviously not on $N$. Next, since $\{1-\mathcal{F}_N(\lambda/N;z)\}=\mathcal{O}(\lambda/N)$ for $\lambda/N\rightarrow 0$, the integrand in Eq.~(\ref{eq:R1-2}) is uniformly bounded by $\mathcal{O}(\Omega(N)^{1 -2\tilde\omega^2}/N)$ for $\lambda\in [0,\Omega(N)]$. We then
conclude that
\begin{eqnarray}\label{EN1}
    \mathcal{E}_N^{(1)}(z) = \mathcal{O}\left( \frac{\Omega(N)^{2-2\tilde\omega^2}}{N} \right).
\end{eqnarray}

(ii) To treat the integral Eq.~(\ref{eq:R2}), we notice that $N \varphi \ge \Omega(N)$ within the integration domain. Since $\Omega(N) \rightarrow \infty$ as $N\rightarrow\infty$, we may use the asymptotic behavior Eq.~(\ref{eq:global}) to derive
\begin{eqnarray} \label{eq:R2-2} \fl
	\qquad	\mathcal{E}_N^{(2)}(z) = G_{\tilde\omega}N^{1-2\tilde\omega^2} \int_{\Omega(N)/N}^\pi d\varphi\,e^{i\tilde\omega  N\varphi} \frac{1- \mathcal{F}_N(\varphi;z)}{\varphi^{2\tilde\omega^2}}  \left(1+o(1) \right) \nonumber\\
 \fl \qquad \qquad \quad
 = \frac{G_{\tilde\omega}}{i\tilde\omega} N^{-2\tilde\omega^2} \Bigg[ \left. e^{i\tilde\omega  N\varphi} \frac{1- \mathcal{F}_N(\varphi;z)}{\varphi^{2\tilde\omega^2}} \right|_{\varphi=\Omega(N)/N}^{\varphi=\pi} \nonumber \\
  \fl \qquad\qquad\qquad \qquad\quad - \int_{\Omega(N)/N}^\pi d\varphi\, e^{i\tilde\omega  N\varphi} \frac{d}{d\varphi}
  \left(\frac{1- \mathcal{F}_N(\varphi;z)}{\varphi^{2\tilde\omega^2}}\right)\Bigg] \left(1+o(1) \right).
\end{eqnarray}
Here, we have used integration by parts in the second step.

As $N\rightarrow\infty$, the boundary term (in square brackets) at $\varphi=\pi$ yields a contribution of order $\mathcal{O}(1)$; yet, its contribution at $\varphi=\Omega(N)/N$ is of order
$$
    {\mathcal O}\left( \left( \frac{\Omega(N)}{N}\right)^{1-2\tilde\omega^2} \right)
$$
and can thus safely be neglected.

To handle the integral in the last line of Eq.~(\ref{eq:R2-2}), we split it into two
\begin{eqnarray}\fl \label{eq:R2-int}
\qquad \int_{\Omega(N)/N}^\pi d\varphi\, e^{i\tilde\omega  N\varphi} \frac{d}{d\varphi}
  \frac{1- \mathcal{F}_N(\varphi;z)}{\varphi^{2\tilde\omega^2}} \nonumber\\
  \fl \qquad\qquad = \frac{1-z^{-N}}{2\pi} \int_{\Omega(N)/N}^\pi d\varphi\, e^{i\tilde\omega  N\varphi} \frac{d}{d\varphi}
  \left(\frac{\varphi}{(2\sin(\varphi/2))^{2\tilde\omega^2}}\right)
  \nonumber\\
\fl \qquad \qquad \quad
+\int_{\Omega(N)/N}^\pi d\varphi\,
e^{i\tilde\omega  N\varphi} \frac{d}{d\varphi} \left(
            \frac{1}{\varphi^{2\tilde\omega^2}} -  \frac{1}{(2\sin(\varphi/2))^{2\tilde\omega^2}}
\right).
\end{eqnarray}
The former integral can be estimated as follows:
\begin{eqnarray}\fl \label{eq:R2-int1}
	\qquad\left| \int_{\Omega(N)/N}^\pi d\varphi\, e^{i\tilde\omega  N\varphi} \frac{d}{d\varphi}\!\left( \frac{\varphi}{(2\sin(\varphi/2))^{2\tilde\omega^2}}\right) \right| \nonumber\\
\le  \int_{0}^\pi d\varphi\, \left|\frac{d}{d\varphi}\!\left( \frac{\varphi}{(2\sin(\varphi/2))^{2\tilde\omega^2}}\right)\right|\nonumber\\
	=\left. \frac{\varphi}{(2\sin(\varphi/2))^{2\tilde\omega^2}}\right|_{\varphi=0}^{\varphi=\pi}= \frac{\pi}{2^{2\tilde\omega^2}}.
\end{eqnarray}
As for the latter, we have:
\begin{eqnarray}\fl \label{eq:R2-int2}
	\qquad\left| \int_{\Omega(N)/N}^\pi d\varphi\, e^{i\tilde\omega  N\varphi} \frac{d}{d\varphi}\!\left(\frac{1}{\varphi^{2\tilde\omega^2}}- \frac{1}{(2\sin(\varphi/2))^{2\tilde\omega^2}} \right)\right| \nonumber\\
	\le  \int_{0}^\pi d\varphi\, \left|\frac{d}{d\varphi}\!\left(\frac{1}{\varphi^{2\tilde\omega^2}}- \frac{1}{(2\sin(\varphi/2))^{2\tilde\omega^2}} \right)\right|\nonumber\\
	=\left.\frac{1}{(2\sin(\varphi/2))^{2\tilde\omega^2}} - \frac{1}{\varphi^{2\tilde\omega^2}} \right|_{\varphi=0}^{\varphi=\pi}= \frac{1}{2^{2\tilde\omega^2}}-\frac{1}{\pi^{2\tilde\omega^2}}.
\end{eqnarray}
Here, we have used the fact that the derivatives in both Eqs.~(\ref{eq:R2-int1}) and (\ref{eq:R2-int2}) are positive on $[0,\pi]$. Further, consulting with Eq.~(\ref{eq:R2-2}), we conclude that
\begin{eqnarray}\label{EN2}
\mathcal{E}_N^{(2)}(z) = \mathcal{O}(N^{-2\tilde\omega^2} ).
\end{eqnarray}
Choosing, e.g., $\Omega(x)=\log x$ and comparing Eqs.~(\ref{EN1}) and (\ref{EN2}), we conclude that the contribution ${\mathcal E}_N^{(1)}$ is subleading as compared to ${\mathcal E}_N^{(2)}$. This completes the proof.
\end{proof}

\begin{lemma}\label{lemma-tech-2}
  Let $\sigma(s)$ be the Painlev\'e V transcendent specified in Theorem~\ref{thm:CK}. As $N\rightarrow\infty$, the following estimate holds
  \begin{eqnarray} \label{IN0-lemma} \fl \qquad
    N \int_{0}^{\pi} d\varphi \left(\frac{\sin(\varphi/2)}{\varphi/2}\right)^{-2\tilde{\omega}^2}
     \exp\left(\int_0^{-i N\varphi} \frac{ds}{s} \sigma(s)\right) \nonumber\\
     \qquad = \int_{0}^{\infty} d\lambda
     \exp\left(\int_0^{-i \lambda} \frac{ds}{s} \sigma(s)\right) + {\mathcal O}\left(
     \Omega(N)^{-2\tilde\omega^2}
     \right)
\end{eqnarray}
in the domain $\tilde\omega \in (0, 1/2)$, in which the integral in the r.h.s. is convergent; $\Omega(x)$ is any positive, smooth for large $x$ function such that $\Omega(N) \rightarrow \infty$ whilst $\Omega(N)/N \rightarrow 0$
as $N \rightarrow \infty$.
\end{lemma}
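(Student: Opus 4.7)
\bigskip
\noindent
\emph{Proof proposal for Lemma~\ref{lemma-tech-2}.}
The plan is to reduce the identity to an asymptotic estimate of the same flavour as Lemma~\ref{lemma-tech}. After the substitution $\lambda=N\varphi$, the left-hand side of Eq.~(\ref{IN0-lemma}) becomes
\begin{equation*}
\int_{0}^{N\pi}\!d\lambda\, R_N(\lambda)\, g(\lambda),\qquad R_N(\lambda):=\Bigl(\tfrac{\sin(\lambda/(2N))}{\lambda/(2N)}\Bigr)^{-2\tilde\omega^2},\qquad g(\lambda):=\exp\!\Bigl(\int_{0}^{-i\lambda}\tfrac{ds}{s}\,\sigma(s)\Bigr),
\end{equation*}
so the claim reduces to showing that
\begin{equation*}
\Delta_N\;:=\;\int_{0}^{N\pi}\bigl[R_N(\lambda)-1\bigr]\,g(\lambda)\,d\lambda\;-\;\int_{N\pi}^{\infty}g(\lambda)\,d\lambda\;=\;\mathcal{O}\!\bigl(\Omega(N)^{-2\tilde\omega^2}\bigr).
\end{equation*}
The crucial input is the global integral condition Eq.~(\ref{eq:global}), together with Eq.~(\ref{eq:bc-inf-Th}) and Eq.~(\ref{eq:gamma-Th}), which provide the precise asymptotic $g(\lambda)=G_{\tilde\omega}\,e^{i\tilde\omega\lambda}\,\lambda^{-2\tilde\omega^2}(1+o(1))$ as $\lambda\to\infty$; this guarantees that the improper integral in the RHS of Eq.~(\ref{IN0-lemma}) is conditionally convergent for $0<\tilde\omega<1/2$ thanks to the oscillatory factor $e^{i\tilde\omega\lambda}$.

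Next I would dispatch the tail $\int_{N\pi}^{\infty}g(\lambda)\,d\lambda$ by a single integration by parts, using $e^{i\tilde\omega\lambda}$ as the quickly oscillating factor; this yields an $\mathcal{O}(N^{-2\tilde\omega^2})$ contribution, which is negligible compared with $\Omega(N)^{-2\tilde\omega^2}$.

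For the remaining integral I would split the domain at the intermediate scale $\Omega(N)$. On the inner piece $[0,\Omega(N)]$, the Taylor expansion $\sin x/x=1-x^2/6+\mathcal{O}(x^4)$ gives $R_N(\lambda)-1=\mathcal{O}((\lambda/N)^2)$, which combined with the uniform bound $|g(\lambda)|\lesssim (1+\lambda)^{-2\tilde\omega^2}$ of Eq.~(\ref{eq:R1-inf-C}) produces a contribution of order $\mathcal{O}(\Omega(N)^{3-2\tilde\omega^2}/N^2)$; the choice $\Omega(N)=\log N$ makes this far smaller than the stated error. On the outer piece $[\Omega(N),N\pi]$, where $R_N(\lambda)-1$ is only $\mathcal{O}(1)$, I would substitute the leading asymptotic form of $g(\lambda)$ and integrate by parts once against $e^{i\tilde\omega\lambda}$. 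The boundary term at the lower endpoint $\lambda=\Omega(N)$ then produces exactly the advertised $\mathcal{O}(\Omega(N)^{-2\tilde\omega^2})$ (as $[R_N(\Omega(N))-1]\,\Omega(N)^{-2\tilde\omega^2}=\mathcal{O}(\Omega(N)^{2-2\tilde\omega^2}/N^2)$ is smaller, while the $\lambda$-derivative of $R_N(\lambda)\,\lambda^{-2\tilde\omega^2}$ integrates to something of the same order or better on this range); the boundary term at $\lambda=N\pi$ is $\mathcal{O}(N^{-2\tilde\omega^2})$.

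The main obstacle, as in Lemma~\ref{lemma-tech}, is the middle-to-large window $[\Omega(N),N\pi]$: here the sinc correction $R_N(\lambda)-1$ is no longer small and the required decay must be extracted solely from the oscillatory factor $e^{i\tilde\omega\lambda}$ via integration by parts. Verifying that the $o(1)$ remainder in the asymptotic form of $g(\lambda)$, controlled by $\gamma(s)$ with decay rate $|s|^{-1+2\tilde\omega}$, does not spoil the bound is the technical crux and relies on the assumption $\tilde\omega<1/2$. Once these estimates are combined, the error terms collect to $\mathcal{O}(\Omega(N)^{-2\tilde\omega^2})$, completing the proof.
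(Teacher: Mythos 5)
Your plan follows essentially the same route as the paper's proof: rescale $\lambda=N\varphi$, split the range at the intermediate scale $\Omega(N)$, approximate the sinc factor by $1$ on the inner window, invoke the global asymptotic Eq.~(\ref{eq:global}) for $g(\lambda)=\exp\bigl(\int_0^{-i\lambda}\sigma(s)\,ds/s\bigr)$ on the outer window, and then gain the required decay from a single integration by parts against $e^{i\tilde\omega\lambda}$. The one difference is cosmetic — you subtract $1$ from the sinc factor $R_N(\lambda)$ before splitting, whereas the paper keeps the full integrand and instead estimates the two pieces directly — but this changes which term carries the dominant error, and on this point your narrative is internally inconsistent. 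Once you have subtracted $1$, the boundary term at the lower endpoint $\lambda=\Omega(N)$ after integration by parts is $[R_N(\Omega(N))-1]\,\Omega(N)^{-2\tilde\omega^2}=\mathcal{O}(\Omega(N)^{2-2\tilde\omega^2}/N^2)$, which is \emph{not} the advertised $\mathcal{O}(\Omega(N)^{-2\tilde\omega^2})$ — your own parenthetical acknowledges that it is smaller, contradicting the sentence that introduces it. The genuine $\mathcal{O}(\Omega(N)^{-2\tilde\omega^2})$ contribution in your decomposition comes from the post-IBP integral of $-2\tilde\omega^2\,[R_N(\lambda)-1]\,\lambda^{-2\tilde\omega^2-1}$ over $[\Omega(N),N\pi]$, which you do not clearly isolate. (In the paper's variant, where $1$ is not subtracted, the lower-endpoint boundary term $\sim R_N(\Omega(N))\Omega(N)^{-2\tilde\omega^2}$ really is of order $\Omega(N)^{-2\tilde\omega^2}$, so you may have conflated the two decompositions.) Aside from this bookkeeping slip — fixable by carrying through the derivative of $[R_N(\lambda)-1]\lambda^{-2\tilde\omega^2}$ more carefully — the estimates you quote (inner piece $\mathcal{O}(\Omega(N)^{3-2\tilde\omega^2}/N^2)$, tail $\mathcal{O}(N^{-2\tilde\omega^2})$) are correct, and you properly flag the role of the $o(1)$ remainder governed by $\gamma(s)$ and the restriction $\tilde\omega<1/2$.
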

\begin{proof}
Similarly to the proof of Lemma~\ref{lemma-tech}, we split the integral in Eq.~(\ref{IN0-lemma}) into two parts
\begin{eqnarray}\label{EN1-tech} \fl \qquad
    \tilde{\mathcal{E}}_N^{(1)}(z) =  N \int_{0}^{\Omega(N)/N} d\varphi \left(\frac{\sin(\varphi/2)}{\varphi/2}\right)^{-2\tilde{\omega}^2}
     \exp\left(\int_0^{-i N\varphi} \frac{ds}{s} \sigma(s)\right)
\end{eqnarray}
and
\begin{eqnarray}\label{EN2-tech}\fl \qquad
    \tilde{\mathcal{E}}_N^{(2)}(z) =  N \int_{\Omega(N)/N}^\pi d\varphi \left(\frac{\sin(\varphi/2)}{\varphi/2}\right)^{-2\tilde{\omega}^2}
     \exp\left(\int_0^{-i N\varphi} \frac{ds}{s} \sigma(s)\right).
\end{eqnarray}

(i) To estimate $\tilde{\mathcal{E}}_N^{(1)}(z)$, we change the integration variable $\varphi=\lambda/N$,
\begin{eqnarray}\label{EN1-tech-2} \fl \qquad
    \tilde{\mathcal{E}}_N^{(1)}(z) =  \int_{0}^{\Omega(N)} d\lambda \left(\frac{\sin(\lambda N/2)}{\lambda N/2}\right)^{-2\tilde{\omega}^2}
     \exp\left(\int_0^{-i \lambda} \frac{ds}{s} \sigma(s)\right),
\end{eqnarray}
and further observe that
\begin{eqnarray}
     \left(\frac{\sin(\lambda/2N)}{\lambda/2N}\right)^{-2\tilde{\omega}^2} = 1 +{\mathcal O}\big((\Omega(N)/N)^2\big)
\end{eqnarray}
uniformly for $\lambda \in [0,\Omega(N)]$. This yields
\begin{eqnarray}\label{EN1-tech-3} \fl \qquad
    \tilde{\mathcal{E}}_N^{(1)}(z) =  \int_{0}^{\Omega(N)} d\lambda
     \exp\left(\int_0^{-i \lambda} \frac{ds}{s} \sigma(s)\right) \left( 1 + {\mathcal O}\big((\Omega(N)/N)^2\big) \right).
\end{eqnarray}
As $N\rightarrow\infty$, the upper integration bound $\Omega(N)$ in Eq.~(\ref{EN1-tech-3}) extends to infinity. Since, in view of Eq.~(\ref{eq:global}), such an integral exists, we have
\begin{eqnarray}\label{EN1-tech-33} \fl \quad
    \tilde{\mathcal{E}}_N^{(1)}(z) =  \left\{\int_{0}^{\infty} d\lambda
     \exp\left(\int_0^{-i \lambda} \frac{ds}{s} \sigma(s)\right) - \int_{\Omega(N)}^\infty d\lambda
     \exp\left(\int_0^{-i \lambda} \frac{ds}{s} \sigma(s)\right) \right\} \nonumber\\
     \times\left( 1 + {\mathcal O}\big((\Omega(N)/N)^2\big) \right).
\end{eqnarray}
Since $\Omega(N) \rightarrow \infty$ as $N\rightarrow\infty$, we may use the asymptotic behavior of the second integral therein, see Eq.~(\ref{eq:global}),
\begin{eqnarray}
    \int_{\Omega(N)}^\infty d\lambda
     \exp\left(\int_0^{-i \lambda} \frac{ds}{s} \sigma(s)\right) = {\mathcal O}\left(
     \Omega(N)^{-2\tilde\omega^2}
     \right).
\end{eqnarray}
This leads us to conclude that
\begin{eqnarray} \label{EN1-tilde}
\tilde{\mathcal{E}}_N^{(1)}(z) = \int_{0}^{\infty} d\lambda
     \exp\left(\int_0^{-i \lambda} \frac{ds}{s} \sigma(s)\right) + {\mathcal O}\left(
     \Omega(N)^{-2\tilde\omega^2}
     \right)
\end{eqnarray}
as $N \rightarrow \infty$.

(ii) To treat the integral Eq.~(\ref{EN2-tech}), we notice that $N \varphi \ge \Omega(N)$ within the integration domain. Since $\Omega(N) \rightarrow \infty$ as $N\rightarrow\infty$, we may use the asymptotic behavior Eq.~(\ref{eq:global}) to derive
\begin{eqnarray} \label{EN2-tech-2} \fl
	\qquad	\tilde{\mathcal{E}}_N^{(2)}(z) = G_{\tilde\omega}N^{1-2\tilde\omega^2} \int_{\Omega(N)/N}^\pi d\varphi\,e^{i\tilde\omega  N\varphi}
            \frac{1}{(2\sin(\varphi/2))^{2\tilde\omega^2}}  \left(1+o(1) \right) \nonumber\\
 \fl \qquad \qquad \quad
 = \frac{G_{\tilde\omega}}{i\tilde\omega} N^{-2\tilde\omega^2} \Bigg[ \left. e^{i\tilde\omega  N\varphi} \frac{1}{(2\sin(\varphi/2))^{2\tilde\omega^2}} \right|_{\varphi=\Omega(N)/N}^{\varphi=\pi} \nonumber \\
  \fl \qquad\qquad\qquad \quad - \int_{\Omega(N)/N}^\pi d\varphi\, e^{i\tilde\omega  N\varphi} \frac{d}{d\varphi}
  \left(\frac{1}{(2\sin(\varphi/2))^{2\tilde\omega^2}}\right)\Bigg] \left(1+o(1) \right).
\end{eqnarray}
Here, we have used integration by parts in the second step.

As $N\rightarrow\infty$, the boundary term (taken together with the multiplicative pre-factor $N^{-2\tilde\omega^2}$) is of order ${\mathcal O}(\Omega(N)^{-2\tilde\omega^2})$. The integral in the last line of Eq.~(\ref{EN2-tech-2}) can readily be estimated as follows:
\begin{eqnarray}
    \fl \qquad
    \left|\int_{\Omega(N)/N}^\pi d\varphi\, e^{i\tilde\omega  N\varphi} \frac{d}{d\varphi}
  \left(\frac{1}{(2\sin(\varphi/2))^{2\tilde\omega^2}}\right)\right| \nonumber\\
  \fl \qquad\qquad
  \le \int_{\Omega(N)/N}^\pi d\varphi\, \left| \frac{d}{d\varphi}
  \left(\frac{1}{(2\sin(\varphi/2))^{2\tilde\omega^2}}\right)\right| ={\mathcal O}\left( \left( \frac{\Omega(N)}{N}\right)^{-2\tilde\omega^2}\right).
\end{eqnarray}
Taken together with the multiplicative pre-factor $N^{-2\tilde\omega^2}$ in Eq.~(\ref{EN2-tech-2}), its contribution to $\tilde{\mathcal{E}}_N^{(2)}(z)$ is of order ${\mathcal O}(\Omega(N)^{-2\tilde\omega^2})$, too. Hence,
\begin{eqnarray}\label{EN2-tilde}
\tilde{\mathcal{E}}_N^{(2)}(z)={\mathcal O}(\Omega(N)^{-2\tilde\omega^2}).
\end{eqnarray}
Comparing Eqs.~(\ref{EN1-tilde}) and (\ref{EN2-tilde}), we obtain Eq.~(\ref{IN0-lemma}).
\end{proof}
\noindent\newline
{\bf Proof of Theorem~\ref{cue-pv}}.---Equipped with Lemmas~\ref{lemma-tech} and \ref{lemma-tech-2}, we are ready to prove Theorem~\ref{cue-pv}. Equation (\ref{ps-lemma}) of Corollary~\ref{ps-master-symmetry} is the starting point.

(i) By virtue of Eq.~(\ref{ps-alter-master}), the term in the first line of Eq.~(\ref{ps-lemma}) equals
\begin{eqnarray} \label{ps-03} \fl \qquad
     T_1(z;N) = - N {\rm Re\,} \int_{0}^{\pi} \frac{d\varphi}{2\pi}\, \Big[ 1- \big( 1- z^{-N} \big)\frac{\varphi}{2\pi}
     \Big]\, \Phi_{N}((0,\varphi);1-z).
\end{eqnarray}
As $N\rightarrow\infty$, the uniform asymptotic expansion Eq.~(\ref{eq:Edge-2}) of Theorem~\ref{thm:CK} can be used to deduce that
\begin{eqnarray} \label{ps-03-2} \fl \quad
     T_1(z;N) = - N {\rm Re\,} \int_{0}^{\pi} d\varphi\, {\mathcal F}_N(\varphi;z)
     \, \exp\left(\int_0^{-i N\varphi} \frac{ds}{s} \sigma(s)\right)
     \left( 1 + {\mathcal O}(N^{-1+2\tilde\omega})
     \right),
\end{eqnarray}
where the function ${\mathcal F}_N(\varphi;z)$ is defined in Eq.~(\ref{eq:f}) of Lemma~\ref{lemma-tech}. According to this lemma, replacing
${\mathcal F}_N(\varphi;z)$ with unity in the integrand produces an error of order ${\mathcal O}(N^{-2\tilde\omega^2})$. Doing so and further changing the integration variable to $\lambda=\varphi N$, we obtain
\begin{eqnarray} \label{ps-03-3} \fl\quad
     T_1(z;N) = - {\rm Re\,} \int_{0}^{\pi N} d\lambda\,
     \, \exp\left(\int_0^{-i \lambda} \frac{ds}{s} \sigma(s)\right)
     \left( 1 + {\mathcal O}(N^{-1+2\tilde\omega})
     \right) + {\mathcal O}(N^{-2\tilde\omega^2}). \nonumber\\
     {}
\end{eqnarray}
Equation~(\ref{ps-03-3}) is particularly suitable for calculating the $N\rightarrow \infty$ limit; it is obtained by a plain replacement of the upper bound $\pi N$ with $\infty$.
Such a replacement is clearly justified as, according to Eq.~(\ref{eq:global}), it introduces an error of the order ${\mathcal O}(N^{-2\tilde\omega^2})$. Therefore, we are led to conclude that
\begin{eqnarray} \label{ps-03-limit}
    \lim_{N\rightarrow\infty} T_1(z;N) = - {\rm Re\,} \int_{0}^{\infty} d\lambda\,
     \, \exp\left(\int_0^{\lambda} \frac{dt}{t} \sigma_0(t;\zeta)\right).
\end{eqnarray}
Here, $\sigma_0(t;\zeta)=\sigma(s=-it)$ with $\zeta=1-e^{i\omega}$ is the fifth Painlev\'e transcendent that obeys Eq.~(\ref{PV-eq-0}) subject to the boundary condition Eq.~(\ref{bc-zero-0}).

(ii) It remains to prove that the term in the second line of Eq.~(\ref{ps-lemma}) can be neglected as $N\rightarrow\infty$. To this end,
we consider
\begin{eqnarray} \fl \label{eq:IN0-2} \quad
    I_{N,0}(z) = N \int_{0}^{\pi} \frac{d\varphi}{2\pi}\,
	\Phi_{N}((0,\varphi);1-z) \nonumber\\
\fl\qquad\quad= N \int_{0}^{\pi} \frac{d\varphi}{2\pi}\, \left(
	\frac{\sin(\varphi/2)}{\varphi/2}
	\right)^{-2\tilde{\omega}^2} \exp\left(
	\int_{0}^{- i N \varphi} \frac{ds}{s}\, \sigma(s)
	\right) \left( 1 + {\mathcal O}(N^{-1+ 2 \tilde{\omega}}) \right),
\end{eqnarray}
where the second line follows from substitution of a uniform asymptotics Eq.~(\ref{eq:Edge-2}). Lemma~\ref{lemma-tech-2} implies that $I_{N,0}(z)=\mathcal{O}(1)$ so that the overall contribution of the term in the second line of Eq.~(\ref{ps-lemma}) is of order ${\mathcal O}(N^{-1})$.

This conclusion combined with Eq.~(\ref{ps-03-limit}) allows us to evaluate the limit $N\rightarrow\infty$ in Eq.~(\ref{ps-lemma}); the result reads:
\begin{eqnarray}\label{final-eq-proof}
    \lim_{N\rightarrow\infty} S_N(\omega) = -\frac{z}{\pi(1-z)^2} {\rm Re\,} \int_{0}^{\infty} d\lambda\,
     \, \exp\left(\int_0^{\lambda} \frac{dt}{t} \sigma_0(t)\right).
\end{eqnarray}
Substitution $z=e^{i\omega}$ completes the proof. By virtue of Lemma~\ref{lemma-tech-2}, Eq.~(\ref{final-eq-proof}) is consistent with Eq.~(\ref{SN-anticipation}) stated in the opening of Section~\ref{CUE-INF}.
\hfill $\square$

\subsection{Relation to the ${\rm Sine}_2$ determinantal point process}\label{Sine-2-DPP}

Remarkably, the parameter-free power spectrum law $S_\infty(\omega)$, stated in Theorem~\ref{cue-pv} and proved in Section~\ref{cue-pv-proof}, can be interpreted in terms of the (random) counting function $n_{\rho}(\lambda)$ of the ${\rm Sine}_2$ determinantal point process~\cite{D-1962-CUE,KS-2009,VV-2020} with the mean local density $\rho=1/2\pi$, as spelled out in Lemma~\ref{Sine-2-Lemma}.
\noindent\newline\newline
{\bf Proof of Lemma~\ref{Sine-2-Lemma}.}---To appreciate the ideas behind Eq.~(\ref{Sine-2-Lemma-eq}), let us define the moment generating function (MGF)
\begin{eqnarray}  \label{sine-mgf}
    \Phi_{\rho,\infty}\big(\lambda;1-e^{i\omega}\big) = \langle e^{i\omega n_{\rho}(\lambda)}\rangle
\end{eqnarray}
of the counting function $n_{\rho}(\lambda)$.

First, setting $\zeta=1-e^{i\omega}$, we quote the relation
\begin{eqnarray} \label{relation-to-sine-process}\fl\quad
        \Phi_{\rho,\infty}\big(\lambda;1-e^{i\omega}\big) = \langle e^{i\omega n_\rho(\lambda)}\rangle =  {\rm det\,} \Big[
        \mathds{1} - \zeta \hat{K}_{\rho,\infty}^{(0,\lambda)}
    \Big] = \exp\left(\int_0^{2\pi \rho \lambda} \frac{dt}{t} \sigma_0(t;\zeta)\right).\nonumber\\
    {}
\end{eqnarray}
Going back to the pioneering work~\cite{JMMS-1980} by Jimbo, Miwa, M\^{o}ri and Sato, this key identity expresses the MGF Eq.~(\ref{sine-mgf}) in terms of

(i) the Fredholm determinant associated with an integral operator $\hat{K}_{\rho,\infty}^{(0,\lambda)}$, defined by
\begin{eqnarray}\label{sine-operator}
    \left[\hat{K}_{\rho,\infty}^{(0,\lambda)} f\right](x) = \int_{0}^{\lambda} K_{\rho,\infty} (x-y)\, f(y)\, dy,
\end{eqnarray}
and, further, in terms of

(ii) the fifth Painlev\'e transcendent $\sigma_0(t;\zeta)$ defined in Theorem~\ref{cue-pv}. Here,
\begin{eqnarray} \label{Sine-Kernel}
    K_{\rho, \infty}(x) =
    \frac{\sin(\pi \rho x)}{\pi x}
\end{eqnarray}
is the sine kernel satisfying the reproducing property
\begin{eqnarray} \label{rep-prop-Sine}
    \int_{-\infty}^{+\infty} dw K_{\rho,\infty}(x-w)\,
    K_{\rho, \infty}(w-y) = K_{\rho,\infty}(x-y).
\end{eqnarray}

Second, we make use of the fact that the ${\rm Sine}_2$ process with the mean local density $\rho$ can be considered as the $N\rightarrow\infty$ limit of the finite-$N$ point process $\{N\theta_1/2\pi\rho,\dots,N\theta_{L(N)}/2\pi \rho\}$ of the length $L(N)$ built upon ${\rm CUE}(N)$ eigen-angles $\{\theta_j\}_{j=1}^N$, see Eq.~(\ref{jpdf-cue}). Here, $L(N)$ is an increasing sequence of integers such that $L(N)\rightarrow \infty$ whilst $L(N)/N \rightarrow 0$ as $N \rightarrow \infty$.

Indeed, following Dyson~\cite{D-1962-CUE}, the $\ell$-point correlation function of the ${\rm Sine}_2$ process admits a determinantal representation
\begin{eqnarray}\label{L-point-det} \fl \qquad\qquad\quad
    \rho_\ell(x_1,\dots,x_\ell) = \lim_{N\rightarrow \infty} \left( \frac{\rho}{N}\right)^\ell R_{\ell,N}\left( \frac{2\pi \rho}{N}x_1,\dots,\frac{2\pi \rho}{N}x_\ell\right) \nonumber\\
    \fl \qquad\qquad\qquad\qquad\quad\quad = {\,} {\det}_{1\le j,k \le \ell} \left[
        K_{\rho, \infty} (x_j -x_k)
    \right],
\end{eqnarray}
see Eqs.~(\ref{RLN-det}) and (\ref{CUEN-Kernel}). Here, $K_{\rho,\infty}(x)$ is the sine kernel [Eq.~(\ref{Sine-Kernel})] naturally arising as the $N\rightarrow\infty$ limit of the ${\rm CUE}(N)$ kernel:
\begin{eqnarray} \label{Sine-Kernel-limit}
     \lim_{N\rightarrow \infty}\frac{\rho}{N} K_N\left( \frac{2\pi \rho}{N} x \right)= \frac{\sin(\pi \rho x)}{\pi x} = K_{\rho,\infty}(x).
\end{eqnarray}
Consequently, the aforementioned (random) counting function $n_\rho(\lambda)$ of such a ${\rm Sine}_2$ process can also be introduced through the limiting procedure
\begin{eqnarray} \label{CF-rho}
        n_\rho(\lambda) = \lim_{N\rightarrow\infty} {\mathcal N}_N\left( 0,\frac{2\pi \rho}{N} \lambda \right),
\end{eqnarray}
where ${\mathcal N}_N\left( 0, \varphi\right)$ is the ${\rm CUE}(N)$ counting function [Eq.~(\ref{NCF})]. This, in turn, implies that the MGF for the ${\rm Sine}_2$ process [Eq.~(\ref{sine-mgf})] can be related to the ${\rm CUE}(N)$ generating function [Eq.~(\ref{mgf-CUEN})] through the limiting procedure
\begin{eqnarray} \fl \qquad\label{Def-Inf}
    \Phi_{\rho, \infty}(\lambda;1-z) = \langle e^{i\omega n_\rho(\lambda)}\rangle = \lim_{N\rightarrow\infty}
        \left< e^{i\omega {\mathcal N}_N(0,2\pi\rho \lambda/N)}\right>\nonumber\\
        \fl\qquad\qquad\qquad =\lim_{N\rightarrow\infty} \Phi_N\left(
        \Big(0,\frac{2\pi \rho}{N}\lambda\Big); 1-z
        \right),
\end{eqnarray}
where $z=e^{i\omega}$. Substituting the expansion Eq.~(\ref{FD-01}) into the r.h.s. of Eq.~(\ref{Def-Inf}) and making use of Propositions~9.1.1 and 9.1.2 of Ref.~\cite{PF-book}, we evaluate the limit $N\rightarrow \infty$ in Eq.~(\ref{Def-Inf}) to derive
\begin{eqnarray} \fl  \label{FD-inf}
    \Phi_{\rho,\infty}(\lambda;1-z) = \langle e^{i\omega n_\rho(\lambda)}\rangle \nonumber\\
    = 1+
    \sum_{\ell=1}^{\infty} \frac{(-\zeta)^\ell}{\ell!} \left(
    \prod_{j=1}^\ell \int_0^\lambda dx_j \right)\,
    {\det}_{1\le i,j \le \ell} \left[
        K_{\rho,\infty} (x_i - x_j)
    \right].
\end{eqnarray}
Here, the scalar kernel $K_{\rho,\infty}(x)$ is the sine kernel Eq.~(\ref{Sine-Kernel}). The series in Eq.~(\ref{FD-inf}) can readily be recognized as the Fredholm determinant:
\begin{eqnarray}  \label{FD-inf-FD}
    \Phi_{\rho,\infty}(\lambda;1-z) = \langle e^{i\omega n_\rho(\lambda)}\rangle =
    {\rm det\,} \left[
        \mathds{1} - \zeta \hat{K}_{\rho,\infty}^{(0,\lambda)}
    \right],
\end{eqnarray}
where $\hat{K}_{\rho,\infty}^{(0,\lambda)}$ is an integral operator defined by Eq.~(\ref{sine-operator}). This explains the second equality in Eq.~(\ref{relation-to-sine-process}). The third equality expressing the Fredholm determinant in terms of the fifth Painlev\'e transcendent $\sigma_0(t;\zeta)$ is the celebrated result of Ref.~\cite{JMMS-1980}, see also Refs.~\cite{JMU-1981,M-1992,TW-1993}.

Comparison of Eq.~(\ref{ps-cue-inf}) from Theorem~\ref{cue-pv} with Eq.~(\ref{relation-to-sine-process}) taken at $\rho=1/2\pi$ completes the proof.
\hfill $\square$

\begin{remark}\label{PV-GFP}
The identity Eq.~(\ref{relation-to-sine-process}) can be interpreted in terms of the probabilities $E_{\rho,\infty}(\ell;\lambda)$ to find exactly $\ell$ eigenvalues in the interval $(0,\lambda)$ of an unfolded spectrum of the ${\rm CUE}(\infty)$ ensemble. Indeed, calculating the average $\langle e^{i\omega n_\rho(\lambda)}\rangle$, we observe:
\begin{eqnarray}\fl\qquad \quad \label{pv-prob-expansion}
    \Phi_{\rho,\infty}\big(\lambda;\zeta) =
    \exp\left(\int_0^{2\pi \rho \lambda} \frac{dt}{t} \sigma_0(t;\zeta)\right) = \sum_{\ell=0}^{\infty} (1-\zeta)^\ell E_{\rho,\infty}(\ell;\lambda),
\end{eqnarray}
where $\zeta = 1-e^{i\omega}$.
\hfill $\blacksquare$
\end{remark}

\begin{remark}
  Equation~(\ref{CF-rho}) implies the relation
  \begin{eqnarray} \label{CF-diff-densities}
    n_\rho(\lambda) = n_{1/2\pi} (2\pi \rho\lambda)
  \end{eqnarray}
  so that
  \begin{eqnarray} \label{Phi-rho}
    \Phi_{\rho,\infty}(\lambda;1-z) = \Phi_{1/2\pi,\infty}(2\pi\rho\lambda;1-z)
  \end{eqnarray}
and
\begin{eqnarray} \label{E-rho}
    E_{\rho,\infty}(\ell;\lambda) = E_{1/2\pi,\infty}(\ell; 2\pi\rho\lambda).
\end{eqnarray}
\hfill $\blacksquare$
\end{remark}
\noindent
\par
In the rest of this section, we deal with a technical lemma required to justify the claim made in Remark~\ref{Rem-no-poles}.

\begin{lemma}\label{Lemma-zeros}
The function $\Phi_{\rho,\infty}\left(\lambda;1-e^{i\omega}\right)\neq 0$ for all $\lambda\ge 0$ and $0\le \omega <\pi$.
\end{lemma}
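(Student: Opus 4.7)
The plan is to exploit the Fredholm determinant representation of $\Phi_{\rho,\infty}(\lambda;1-z)$ established in Eq.~(\ref{FD-inf-FD}), combined with the self-adjointness of the sine kernel operator. Concretely,
\begin{equation}
\Phi_{\rho,\infty}(\lambda;1-e^{i\omega}) = \det\!\big[\mathds{1}-\zeta\,\hat{K}_{\rho,\infty}^{(0,\lambda)}\big],\qquad \zeta=1-e^{i\omega}.
\end{equation}

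First I would observe that the sine kernel $K_{\rho,\infty}(x-y)=\sin(\pi\rho(x-y))/(\pi(x-y))$ is real and symmetric, so the integral operator $\hat{K}_{\rho,\infty}^{(0,\lambda)}$ on $L^2(0,\lambda)$ is self-adjoint; the continuity of the kernel on a bounded interval makes it trace-class (with trace $\rho\lambda$), hence the Fredholm determinant equals the convergent product $\prod_i (1-\zeta\mu_i)$ over its (necessarily real) eigenvalues $\{\mu_i\}$. Consequently, $\Phi_{\rho,\infty}(\lambda;1-e^{i\omega})=0$ can occur only if some $\mu_i$ equals $1/\zeta$.

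The crux is then the short computation
\begin{equation}
\frac{1}{\zeta}=\frac{1}{1-e^{i\omega}}=\frac{1}{2}+\frac{i}{2}\cot(\omega/2),
\end{equation}
whose imaginary part is strictly non-zero for every $\omega\in(0,\pi)$. Since each $\mu_i$ is real, $\mu_i\ne 1/\zeta$, so $1-\zeta\mu_i\ne 0$ for all $i$ and the product is non-vanishing. The boundary cases are immediate: $\omega=0$ gives $\zeta=0$ so $\Phi_{\rho,\infty}=\det\mathds{1}=1$, while $\lambda=0$ gives the zero operator and $\Phi_{\rho,\infty}(0;\cdot)=1$.

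I do not foresee any serious obstacle. The only points that need a line of justification are self-adjointness (immediate from symmetry and reality of the kernel) and the trace-class property which legitimises the factorisation of $\det[\mathds{1}-\zeta\hat{K}]$ as an infinite product over eigenvalues. Notably, we do not require the classical sharper bound $0\le\mu_i<1$ on the prolate-spheroidal eigenvalues of the restricted sine kernel — reality of the spectrum alone is enough, since $1/\zeta$ lies strictly off the real axis throughout the stated range of $\omega$.
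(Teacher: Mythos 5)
Your proof is correct and follows essentially the same route as the paper: both start from the Fredholm-determinant representation $\Phi_{\rho,\infty}=\det[\mathds{1}-\zeta\hat{K}_{\rho,\infty}^{(0,\lambda)}]$, factor it as an infinite product over the eigenvalues of the sine-kernel operator, and observe that a zero would force $1/\zeta$ to be an eigenvalue. The only (minor) difference is that the paper invokes the classical bound $0\le\Lambda_j\le1$ on the eigenvalues, whereas you note — correctly — that reality of the spectrum already suffices, since $1/\zeta=\tfrac12+\tfrac{i}{2}\cot(\omega/2)$ is strictly non-real for $\omega\in(0,\pi)$; this is a small but legitimate tightening of the logical hypotheses, not a different method.
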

\begin{proof}
  Since the function $\Phi_{\rho,\infty}\left(\lambda;1-e^{i\omega}\right)$ admits the Fredholm determinant representation Eq.~(\ref{FD-inf-FD}), it can equivalently be expressed
  in terms of the infinite product~\cite{M-1992}
  \begin{eqnarray}
    \Phi_{\rho,\infty}\left(\lambda;1-e^{i\omega}\right) = \prod_{j=0}^{\infty}\Big[1-
    \big(1-e^{i\omega}\big) \Lambda_j^{(\rho)}(\lambda)\Big],
  \end{eqnarray}
where $\Lambda_j^{(\rho)}(\lambda)$ are eigenvalues of the integral operator $\hat{K}_{\rho,\infty}^{(0,\lambda)}$ defined by Eq.~(\ref{sine-operator}). Since
$0\le \Lambda_j^{(\rho)}(\lambda) \le 1$, see e.g. Ref.~\cite{M-1992}, the product can nullify only for $\omega=\pi+2\pi k$, where $k$ is an integer. Hence, it cannot nullify for $0\le \omega <\pi$.
\end{proof}

\begin{remark}\label{Remark-zeros}
As $N\rightarrow\infty$, the Toeplitz determinant
\begin{eqnarray}
    D_N\left( \frac{i s}{2N};0,0;\tilde\omega,-\tilde\omega\right) \neq 0
\end{eqnarray}
for all $s \in -i{\mathbb R}_+$ and $0\le \tilde\omega < 1/2$. This readily follows from Eq.~(\ref{GF-toeplitz-1}) of Proposition~\ref{toep-prop}, Eq.~(\ref{Def-Inf}) and Lemma~\ref{Lemma-zeros}.
\hfill $\blacksquare$
\end{remark}

\subsection{Small-$\omega$ expansion of the universal law}\label{small-omega-section}

Below, the universal power spectrum law [Eq.~(\ref{ps-cue-inf})],
\begin{eqnarray} \label{ps-final-1}
    S_\infty(\omega) = \frac{1}{4\pi \sin^2(\omega/2)}{\rm Re\,}\int_{0}^{\infty} d\lambda \, \exp\left(
        \int_{0}^{\lambda}\frac{dt}{t}\,  \sigma_0(t;\zeta)
    \right),
\end{eqnarray}
will be studied in the vicinity of $\omega=0$. We remind that $\zeta=1-e^{i\omega}$. An alternative representation of the power spectrum in terms of the moment generating function of the counting function $n_{1/2\pi}(\lambda)$ of the ${\rm Sine}_2$ point process, introduced in Section~\ref{Sine-2-DPP} [Eq.~(\ref{sine-mgf})], combined with the cumulant expansion
\begin{eqnarray} \label{cum-exp} \fl\qquad\qquad
    \log \Phi_{1/2\pi,\infty}\big(\lambda; 1-e^{i\omega}\big) = \int_{0}^{\lambda}\frac{dt}{t}\,  \sigma_0(t;\zeta)
    =  \sum_{\ell=1}^\infty \frac{(i\omega)^\ell}{\ell!} \,\kappa_\ell(\lambda)
\end{eqnarray}
will play an essential r\^ole in the forthcoming analysis. Here $\kappa_\ell(\lambda) = \langle\!\langle  n_{1/2\pi}^\ell(\lambda)\rangle\!\rangle$ is the $\ell$-th cumulant of the counting function $n_{1/2\pi}(\lambda)$, see Eq.~(\ref{CF-rho}). Their detailed analysis is presented in Appendix~\ref{A-3-2}.

The main result of this section is stated below.

\begin{proposition}
\label{Th-small-omega}
  As $\omega\rightarrow 0$, the following expansion holds:
  \begin{eqnarray} \label{S-res-0}
    S_\infty (\omega) = \frac{1}{4\pi^2 \tilde\omega} + \frac{1}{2\pi^2} \tilde\omega \log \tilde\omega +\frac{\tilde\omega}{12}
    + {\mathcal O}(\tilde\omega^2),
\end{eqnarray}
where $\tilde\omega = \omega/2\pi$.
\end{proposition}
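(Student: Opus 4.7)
\noindent My plan is to start from the ${\rm Sine}_2$ representation of Lemma~\ref{Sine-2-Lemma}, which reduces matters to the small-$\tilde\omega$ expansion of
\[
\mathcal{I}(\omega)=\int_0^\infty d\lambda\,\Phi_{1/2\pi,\infty}(\lambda;1-e^{i\omega}).
\]
The prefactor is Taylor-expanded elementarily as $1/[4\pi\sin^2(\omega/2)]=(4\pi^3\tilde\omega^2)^{-1}(1+\pi^2\tilde\omega^2/3+{\mathcal O}(\tilde\omega^4))$. Matching Eq.~(\ref{S-res-0}) is then equivalent to the asymptotics ${\rm Re\,}\mathcal{I}(\omega)=\pi\tilde\omega+2\pi\tilde\omega^3\log\tilde\omega+{\mathcal O}(\tilde\omega^5\log^{2}\tilde\omega)$ with a \emph{vanishing} pure-$\tilde\omega^3$ coefficient; the $\tilde\omega/12$ constant in Eq.~(\ref{S-res-0}) will then arise exclusively from the $\pi^2\tilde\omega^2/3$ prefactor correction applied to the leading $\pi\tilde\omega/(4\pi^3\tilde\omega^2)=1/(4\pi^2\tilde\omega)$.

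\noindent The integrand of $\mathcal{I}(\omega)$ has two distinct regimes. Near $\lambda=0$ it is analytic by Eq.~(\ref{bc-zero-0}), whereas the global identity Eq.~(\ref{eq:global}) together with the large-$|s|$ asymptotic Eq.~(\ref{eq:bc-inf-Th}) gives $\Phi_{1/2\pi,\infty}(\lambda;1-e^{i\omega})=G_{\tilde\omega}\,e^{i\tilde\omega\lambda}\lambda^{-2\tilde\omega^{2}}(1+o(1))$ as $\lambda\to\infty$, so oscillation combined with the slow decay $\lambda^{-2\tilde\omega^{2}}$ secures integrability for $0<\tilde\omega<1/2$. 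Introducing a smooth cutoff $\chi$ equal to $0$ on $[0,1]$ and to $1$ on $[2,\infty)$, I would write
\[
\mathcal{I}(\omega)=\mathcal{R}(\omega)+G_{\tilde\omega}\int_0^\infty \chi(\lambda)\,e^{i\tilde\omega\lambda}\lambda^{-2\tilde\omega^{2}}\,d\lambda,
\]
where the remainder $\mathcal{R}(\omega)$ has an integrable integrand whose dependence on $\tilde\omega$ is real-analytic at the origin (obtained by differentiating the cumulant series Eq.~(\ref{cum-exp}) under the integral sign). Adding back the analytic contribution from $[0,2]$ reduces the explicit integral, via contour rotation, to the Mellin transform $G_{\tilde\omega}\,\Gamma(1-2\tilde\omega^{2})\,(-i\tilde\omega)^{2\tilde\omega^{2}-1}$.

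\noindent Expanding each factor about $\tilde\omega=0$ --- $\Gamma(1-2\tilde\omega^{2})=1+2\gamma\tilde\omega^{2}+{\mathcal O}(\tilde\omega^{4})$, $(-i\tilde\omega)^{2\tilde\omega^{2}-1}=i\tilde\omega^{-1}\exp(2\tilde\omega^{2}\log\tilde\omega-i\pi\tilde\omega^{2})$, and $G_{\tilde\omega}=G^{2}(1+\tilde\omega)G^{2}(1-\tilde\omega)=1-2(1+\gamma)\tilde\omega^{2}+{\mathcal O}(\tilde\omega^{4})$ via the Barnes $G$-function series (here $\gamma$ is the Euler--Mascheroni constant) --- and taking the real part produces $\pi\tilde\omega+2\pi\tilde\omega^{3}\log\tilde\omega-2\pi\tilde\omega^{3}+{\mathcal O}(\tilde\omega^{5}\log^{2}\tilde\omega)$. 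The crucial point is that the entire non-analytic $\tilde\omega^{3}\log\tilde\omega$ contribution to ${\rm Re\,}\mathcal{I}(\omega)$ lives in this explicit piece; the $-2\pi\tilde\omega^{3}$ constant, however, must be exactly cancelled by the analytic contribution from $\mathcal{R}(\omega)$ together with the $[0,2]$ boundary correction.

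\noindent The main obstacle is precisely this cancellation: verifying that the combined $\chi$-independent analytic function assembled from ${\rm Re\,}\mathcal{R}(\omega)$ and the boundary correction contributes $+2\pi\tilde\omega^{3}+{\mathcal O}(\tilde\omega^{5})$, and that its lower-order coefficients vanish. This amounts to tracking the subleading $s\gamma(s)/(1+\gamma(s))$ correction of Eq.~(\ref{eq:bc-inf-Th}) to the required order and, equivalently on the cumulant side, evaluating the first few cumulants $\kappa_{\ell}(\lambda)$ whose large-$\lambda$ asymptotics are assembled in Appendix~\ref{A-3-2}. The second cumulant $\kappa_{2}(\lambda)\sim\pi^{-2}\log\lambda$ already reproduces the $\tilde\omega^{3}\log\tilde\omega$ piece through Eq.~(\ref{cum-exp}), so the bulk of the remaining work consists in showing that higher cumulants conspire precisely to cancel the $-2\pi\tilde\omega^{3}$ constant. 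This is careful bookkeeping of subleading corrections rather than a conceptual difficulty.
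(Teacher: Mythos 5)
Your plan follows the same skeleton as the paper's own argument: expand the $1/\sin^2$ prefactor, reduce the claim to an asymptotic for ${\rm Re}\int_0^\infty\Phi\,d\lambda$, split off the Mellin-type integral $\Gamma(1-2\tilde\omega^2)(-i\tilde\omega)^{2\tilde\omega^2-1}$ weighted by the Barnes constant $G_{\tilde\omega}$, and observe that the pure $\tilde\omega^3$ coefficient must cancel. Your expansion of that explicit piece, $\pi\tilde\omega+2\pi\tilde\omega^3\log\tilde\omega-2\pi\tilde\omega^3+\cdots$, is correct and matches what the paper obtains from the sum $J_1(\tilde\omega)+\tilde\omega^2 J_2(\tilde\omega)$.

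However, the assertion that $\mathcal{R}(\omega)$ ``has an integrable integrand whose dependence on $\tilde\omega$ is real-analytic at the origin (obtained by differentiating the cumulant series Eq.~(\ref{cum-exp}) under the integral sign)'' is incorrect, and the gap it leaves sits exactly where the paper has to work hardest. After subtracting $G_{\tilde\omega}\chi(\lambda)e^{i\tilde\omega\lambda}\lambda^{-2\tilde\omega^2}$, the remainder's large-$\lambda$ tail is controlled by the third cumulant: in the paper's notation $\tilde\omega^3\mathcal{F}_3(\lambda)$ with $\mathcal{F}_3(\lambda)\sim -4i/\lambda$ as $\lambda\to\infty$ [Eq.~(\ref{f3-integral-as-inf})], so the integrand of $\mathcal{R}$ still decays only like $\tilde\omega^3/\lambda$. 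The naive $\tilde\omega^3$ Taylor coefficient of $\mathcal{R}$ obtained by term-by-term differentiation is therefore the divergent integral $\propto\int^\infty d\lambda/\lambda$. In reality the needed $+2\pi\tilde\omega^3$ arises from a Dirichlet-integral limit,
\begin{equation*}
4\tilde\omega^3\int_0^\infty \frac{\sin(\tilde\omega\lambda)}{\lambda}\,d\lambda \longrightarrow 2\pi\tilde\omega^3 \quad (\tilde\omega\to 0^+),
\end{equation*}
which is constant for $\tilde\omega>0$ but discontinuous across $\tilde\omega=0$ and so cannot be reached by differentiating under the integral sign. Extracting it requires isolating the slowly decaying part of ${\rm Im}\,\mathcal{F}_3(\lambda)$, integrating it in closed form, and bounding the regular residue --- which is precisely the content of the paper's regularized evaluation of $J_5$ in Eqs.~(\ref{J5})--(\ref{J5-exp}) and the analogous estimates for higher cumulants $\mathcal{G}_k$. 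So the step you characterize as ``careful bookkeeping rather than a conceptual difficulty'' is actually the non-analytic crux of the proof, and the route you sketch for discharging it --- term-by-term differentiation of the cumulant series --- does not close.
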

\begin{proof}
A small-$\omega$ expansion of the fifth Painlev\'e transcendent $\sigma_0(t;\zeta)$, studied in detail in Appendix~\ref{A-3}, lays the basis for our proof. Adopting notations of Eqs.~(\ref{sigma-expan-app}), (\ref{Fk-integrated}) and (\ref{f2-integrated-reg}), we have
\begin{eqnarray} \fl\qquad \label{s-omega}
    \int_{0}^{\lambda}\frac{dt}{t}\,  \sigma_0(t;\zeta) =
     i\tilde\omega \lambda + \tilde\omega^2 \tilde{\mathcal F}_2(\lambda) -2 \tilde\omega^2 \log\lambda
     + \sum_{k=3}^{\infty}\tilde\omega^k {\mathcal F}_k(\lambda).
\end{eqnarray}
Notice that we have explicitly factored out the function $\tilde{\mathcal F}_2(\lambda)$ which, contrary to ${\mathcal F}_2(\lambda)$, does {\it not} diverge as $\lambda\rightarrow \infty$, see a discussion of the logarithmic divergency of ${\mathcal F}_2(\lambda)$ above Eq.~(\ref{f2-integrated-reg}).

In the next step, we would like to take the small-$\omega$ expansion Eq.~(\ref{s-omega}) into the outer integral of Eq.~(\ref{ps-final-1}) and further expand a part of the exponent which stays bounded as $\lambda\rightarrow\infty$,
\begin{eqnarray} \fl\qquad \label{exp-bounded-exp}
	\exp\left(\tilde\omega^2 \tilde{\mathcal F}_2(\lambda) +  \tilde\omega^3 {\mathcal F}_3(\lambda)+ \omega^4 {\mathcal F}_4(\lambda) + {\mathcal O}(\tilde\omega^5)  \right)\nonumber\\
    \fl\qquad\qquad\qquad
	= 1 + \tilde\omega^2 \tilde{\mathcal F}_2(\lambda) + \tilde\omega^3 {\mathcal F}_3(\lambda)+ \frac{\tilde\omega^4}{2} \tilde{\mathcal F}^2_2(\lambda) + \tilde\omega^4 {\mathcal F}_4(\lambda) + {\mathcal O}(\tilde\omega^5).
\end{eqnarray}
To argue that we are allowed to do so, we split the outer integral into two parts, the first one from $0$ to $1$, and the second one from $1$ to $\infty$. For the latter part, we are allowed to make the expansion of the exponent inside the integral since it is uniformly bounded for $\lambda\in[1,\infty)$. For the former part, the Lebesgue's dominated convergence theorem is at work, since there exists a constant $c>0$ such that the integrand is dominated by $c \lambda^{-1/2}$ for all $\lambda\in[0,1]$ and all $\tilde\omega\in[0,1/2]$. This argument also works for higher order terms in the expansion.

Denoting the real part of the outer integral in Eq.~(\ref{ps-final-1}) as $J(\tilde\omega)$, we represent the power spectrum, Eq.~(\ref{ps-final-1}), in the form
\begin{eqnarray}\label{SwJ}
    S_\infty(\omega) = \frac{1}{4\pi \sin^2(\pi\tilde\omega)}\, J(\tilde\omega),
\end{eqnarray}
where
\begin{eqnarray} \label{eq:expansionJ}\fl\qquad
    J(\tilde\omega) = {\rm Re\,} \int_{0}^{\infty} d\lambda \, e^{i \tilde\omega \lambda} \lambda^{-2\tilde\omega^2} \nonumber\\
    \fl\qquad\qquad\qquad \times
    \Big(
    1 + \tilde\omega^2 \tilde{\mathcal F}_2(\lambda) + \tilde\omega^3 {\mathcal F}_3(\lambda)
    + \frac{\tilde\omega^4}{2} \tilde{\mathcal F}^2_2(\lambda) +  \tilde\omega^4 {\mathcal F}_4(\lambda) + {\mathcal O}(\tilde\omega^5)
    \Big).
\end{eqnarray}
Here we have used Eqs.~(\ref{s-omega}) and (\ref{exp-bounded-exp}).

Since the functions $\tilde{{\mathcal F}}_2$ and ${\mathcal F}_4$ are real valued [see Eqs.~(\ref{Fk-integrated}), (\ref{f2-part}) and (\ref{f4t-def})] whilst ${\mathcal F}_3$ is purely imaginary [see Eqs.~(\ref{Fk-integrated}) and (\ref{f3-part})], the above reduces to
\begin{eqnarray} \fl \qquad \label{J-omega}
    J(\tilde\omega) =  \int_{0}^{\infty} d\lambda \,\frac{\cos(\tilde\omega \lambda)}{\lambda^{2\tilde\omega^2}}
    \Big(
    1 + \tilde\omega^2 \tilde{\mathcal F}_2(\lambda) + \frac{\tilde\omega^4}{2} \tilde{\mathcal F}^2_2(\lambda)
    + \tilde\omega^4 {\mathcal F}_4(\lambda)
    \Big) \nonumber\\
    \fl \qquad\qquad\qquad\qquad
    -   \int_{0}^{\infty} d\lambda \, \frac{\sin(\tilde\omega \lambda)}{\lambda^{2\tilde\omega^2}}
    \Big( \tilde\omega^3 {\rm Im\,}{\mathcal F}_3(\lambda) + {\mathcal O}(\tilde\omega^5)\Big).
\end{eqnarray}
Below we consider the above integrals, one by one, keeping the terms up to ${\mathcal O}(\tilde\omega^4)$.

(i) The first integral in Eq.~(\ref{J-omega}) can be evaluated explicitly
\begin{eqnarray}  \label{J1}
    J_1(\tilde\omega) =  \int_{0}^{\infty} d\lambda \, \frac{\cos(\tilde\omega \lambda)}{\lambda^{2\tilde\omega^2}}
    =  \tilde\omega^{-1+2\tilde\omega^2} \sin(\pi \tilde\omega^2) \Gamma(1-2\tilde\omega^2)
\end{eqnarray}
and further expanded, as $\tilde\omega\rightarrow 0$, to bring
\begin{eqnarray}  \label{J1-exp}
    J_1(\tilde\omega) = \pi \tilde\omega + 2 \pi \left(
        \gamma + \log \tilde\omega
    \right) \tilde\omega^3 + {\mathcal O}(\tilde\omega^4).
\end{eqnarray}
Here, $\gamma$ is the Euler's constant.

(ii) The second integral in Eq.~(\ref{J-omega}),
\begin{eqnarray} \label{J2}
    J_2(\tilde\omega) =  \int_{0}^{\infty} d\lambda \, \frac{\cos(\tilde\omega \lambda)}{\lambda^{2\tilde\omega^2}}\, \tilde{{\mathcal F}}_2(\lambda),
\end{eqnarray}
to be multiplied by $\tilde\omega^2$ [see Eq.~(\ref{J-omega})], should be expanded up to remainder terms of order ${\mathcal O}(\tilde\omega^2)$.

To determine a small-$\tilde\omega$ expansion of $J_2(\tilde\omega)$, one has first to regularize the integral above,
\begin{eqnarray}  \label{J2a}
    J_2(\tilde\omega) = \delta J_2(\tilde\omega)
    + \tilde{{\mathcal F}}_2(\infty)\, J_1(\tilde\omega),
\end{eqnarray}
where
\begin{eqnarray}\label{J2b}
   \delta J_2(\tilde\omega) = \int_{0}^{\infty} d\lambda \, \frac{\cos(\tilde\omega \lambda)}{\lambda^{2\tilde\omega^2}} \left( \tilde{{\mathcal F}}_2(\lambda) - \tilde{{\mathcal F}}_2(\infty)\right)
\end{eqnarray}
stays finite as $\tilde\omega \rightarrow 0$, and $\tilde{{\mathcal F}}_2(\infty)=-2(1+\gamma)$, see Eq.~(\ref{F2tilde-inf}). Substituting $\tilde{{\mathcal F}}_2(\lambda)$ from Eq.~(\ref{f2-integrated-reg}) and (\ref{f2-integrated}), we perform the integral in Eq.~(\ref{J2b}) to derive:
\begin{eqnarray} \fl \quad \label{dJ2-closed}
    \delta J_2(\tilde\omega) = \Gamma \left(1-2 \tilde\omega^2\right) \sin \left(\pi  \tilde\omega^2\right)
    \Bigg\{ (1+\tilde\omega)^{2 \tilde\omega^2-1} + (1-\tilde\omega)^{2 \tilde\omega^2-1}
    \nonumber\\
    \fl \qquad \qquad\qquad
    - \frac{1-2 \tilde\omega^2}{1- \tilde\omega^2}
    \,{}_3F_2\left(1-\tilde\omega^2,1-\tilde\omega^2,\frac{3}{2}-\tilde\omega^2;\frac{1}{2},2-\tilde\omega^2;\tilde\omega^2\right)\nonumber\\
     \fl \qquad \qquad\qquad
       -  \frac{2}{1-2 \tilde\omega^2} \, {}_3F_2\left(\frac{1}{2}-\tilde\omega^2,\frac{1}{2}-\tilde\omega^2,1-\tilde\omega^2;\frac{1}{2},\frac{3}{2}-\tilde\omega^2;\tilde\omega^2\right)
       \Bigg\}.
\end{eqnarray}
Here, ${}_p F_q$ is the hypergeometric function. A small-$\tilde\omega$ expansion of Eq.~(\ref{dJ2-closed}) yields
\begin{eqnarray}
    \delta J_2(\tilde\omega) = {\mathcal O}(\tilde\omega^2).
\end{eqnarray}
Returning to Eqs.~(\ref{J2a}) and (\ref{J1-exp}), and having in mind Eq.~(\ref{F2tilde-inf}), we end up with the sought estimate
\begin{eqnarray}\label{J2-exp}
    \tilde\omega^2 J_2(\tilde\omega) = -2\pi (1+\gamma) \tilde\omega^3 + {\mathcal O}(\tilde\omega^4),
\end{eqnarray}
see Eq.~(\ref{J-omega}), and compare it to Eq.~(\ref{J1-exp}).

(iii) A small-$\tilde\omega$ expansion of the third integral in Eq.~(\ref{J-omega}),
\begin{eqnarray} \fl \qquad\qquad\qquad \label{J3}
    J_3(\tilde\omega) =  \int_{0}^{\infty} d\lambda \, \frac{\cos(\tilde\omega \lambda)}{\lambda^{2\tilde\omega^2}} \tilde{{\mathcal F}}^2_2(\lambda),
\end{eqnarray}
to be multiplied by $\tilde\omega^4$, can be studied along the same lines. First we regularize the integral,
\begin{eqnarray} \fl \qquad \label{J3-a}
    J_3(\tilde\omega) =  \int_{0}^{\infty} d\lambda \, \frac{\cos(\tilde\omega \lambda)}{\lambda^{2\tilde\omega^2}}
    \left(
    \tilde{{\mathcal F}}^2_2(\lambda)- \tilde{{\mathcal F}}^2_2(\infty)\right) + \tilde{{\mathcal F}}^2_2(\infty) J_1(\tilde\omega),
\end{eqnarray}
compare to Eqs.~(\ref{J2a}) and (\ref{J2b}). Second, we notice that as soon as
\begin{eqnarray}
    \tilde{{\mathcal F}}_2^2(\lambda)-  \tilde{{\mathcal F}}_2^2(\infty) = {\mathcal O}\left( \frac{1}{\lambda^2}\right),
\end{eqnarray}
see Eqs.~(\ref{f2-integral-as-inf}), (\ref{f2-integrated-reg}) and (\ref{F2tilde-inf}), the integral in Eq.~(\ref{J3-a}) stays finite as $\tilde\omega\rightarrow 0$~\footnote[3]{To take the limit $\tilde\omega \rightarrow 0$ into the integral, we use the Lebesgue's dominated convergence theorem by spotting that the integrand is dominated by the integrable function
\begin{eqnarray}
    f({\lambda}) = \left\{
                   \begin{array}{ll}
                     c\lambda^{-1/2}, & \hbox{$0\le \lambda \le 1$;} \\
                     c\lambda^{-1-\epsilon}, & \hbox{$\lambda > 1$}
                   \end{array}
                 \right. \nonumber
\end{eqnarray}
for all $\lambda\ge 0$ and all $\tilde\omega \in [0,1/2]$, where $c$ is a sufficiently large and $\epsilon$ is sufficiently small constant} yielding a contribution ${\mathcal O}(\tilde\omega^0)$. The second term in Eq.~(\ref{J3-a}) is of order ${\mathcal O}(\tilde\omega)$, see Eq.~(\ref{J1-exp}). Therefore, we conclude that
\begin{eqnarray}\label{J3-exp}
    \tilde\omega^4 J_3(\tilde\omega) = {\mathcal O}(\tilde\omega^4).
\end{eqnarray}

(iv) To study the fourth integral in Eq.~(\ref{J-omega}),
\begin{eqnarray} \label{J4}
    J_4(\tilde\omega) =  \int_{0}^{\infty} d\lambda \, \frac{\cos(\tilde\omega \lambda)}{\lambda^{2\tilde\omega^2}}\, {\mathcal F}_4(\lambda),
\end{eqnarray}
to be multiplied by $\tilde\omega^4$ [see Eq.~(\ref{J-omega})], we first re-write it as a sum of two terms:
\begin{eqnarray} \fl \qquad \label{J4-1}
    J_4(\tilde\omega) =  \int_{0}^{\infty} d\lambda \, \frac{\cos(\tilde\omega \lambda)}{\lambda^{2\tilde\omega^2}}
     \left({\mathcal F}_4(\lambda)-{\mathcal F}_4(\infty) \right) + {\mathcal F}_4(\infty) J_1(\tilde\omega)
\end{eqnarray}
Since [Eqs.~(\ref{f4-integral-as-inf}) and (\ref{F2k-inf})]
\begin{eqnarray}
    {\mathcal F}_4(\lambda) - {\mathcal F}_4(\infty) = {\mathcal O}\left(
        \left( \frac{\log \lambda}{\lambda} \right)^2
    \right) \quad {\rm as} \quad \lambda\rightarrow \infty,
\end{eqnarray}
we may take the limit $\tilde\omega\rightarrow 0$~\footnotemark[3] inside the integral to get
\begin{eqnarray} \label{J4-a}
    J_4(0) =  \int_{0}^{\infty} d\lambda \,\left( {\mathcal F}_4(\lambda) - {\mathcal F}_4(\infty) \right).
\end{eqnarray}
Since the above integral exists, we conclude that
\begin{eqnarray}\label{J4-exp}
    \tilde\omega^4 J_4(\tilde\omega) = {\mathcal O}(\tilde\omega^4).
\end{eqnarray}

(v) The fifth integral in Eq.~(\ref{J-omega}),
\begin{eqnarray} \label{J5}
    J_5(\tilde\omega) =
    \int_{0}^{\infty} d\lambda \, \frac{\sin(\tilde\omega \lambda)}{\lambda^{2\tilde\omega^2}}
     {\rm Im\,}{\mathcal F}_3(\lambda),
\end{eqnarray}
to be multiplied by $\tilde\omega^3$ [see Eq.~(\ref{J-omega})], should be expanded up to remainder terms of order ${\mathcal O}(\tilde\omega)$.

To determine the low order terms of its small-$\tilde\omega$ expansion, we rewrite the above in the regularized form
\begin{eqnarray} \label{J5-a} \fl \qquad
    \frac{J_5(\tilde\omega)}{\tilde\omega} =
    \int_{0}^{\infty} d\lambda \, \frac{\sin(\tilde\omega \lambda)}{\lambda^{2\tilde\omega^2}\,{\tilde\omega}}
     \left( {\rm Im\,}{\mathcal F}_3(\lambda)
     + \left(\frac{4}{\lambda} + \frac{8}{\lambda^2} (\gamma+\log \lambda) \sin \lambda\right)
     \right) \nonumber\\
     \fl\qquad\qquad\qquad
     - \frac{1}{\tilde\omega}\int_{0}^{\infty} d\lambda \, \frac{\sin(\tilde\omega \lambda)}{\lambda^{2\tilde\omega^2}}
     \left( \frac{4}{\lambda} + \frac{8}{\lambda^2} (\gamma+\log \lambda) \sin \lambda
     \right).
\end{eqnarray}
The second integral in Eq.~(\ref{J5-a}) can be evaluated explicitly
\begin{eqnarray} \label{J5-explicit-part} \fl \qquad
    \int_{0}^{\infty} d\lambda \, \frac{\sin(\tilde\omega \lambda)}{\lambda^{2\tilde\omega^2}}
     \left( \frac{4}{\lambda} + \frac{8}{\lambda^2} (\gamma+\log \lambda) \sin \lambda
     \right) \nonumber\\
     \fl \qquad\qquad
     = \frac{2\pi}{\cos \left(\pi  \tilde\omega^2\right) \Gamma \left(2+2 \tilde\omega^2\right)}
     \Bigg\{ \left(1+2 \tilde\omega^2\right) {\tilde\omega}^{2 \tilde\omega^2} \nonumber\\
     \fl \qquad \qquad
     + \left((1+\tilde\omega)^{1+2 \tilde\omega^2}-(1-\tilde\omega)^{1+ 2 \tilde\omega^2}\right) \left(\psi\left(2+2 \tilde\omega^2\right)+\gamma -\frac{\pi}{2} \tan \left(\pi  \tilde\omega^2\right)\right) \nonumber\\
     \fl \qquad \qquad
     - \left((1+\tilde\omega)^{1+2 \tilde\omega^2} \log (1+\tilde\omega)-(1-\tilde\omega)^{1+ 2\tilde\omega^2} \log (1-\tilde\omega)\right)
     \Bigg\}
\end{eqnarray}
and further expanded in small $\tilde\omega$. (Here, $\psi(\cdots)$ is the digamma function.) This yields:
\begin{eqnarray} \label{J5-explicit-part-exp}  \fl\qquad
    \int_{0}^{\infty} d\lambda \, \frac{\sin(\tilde\omega \lambda)}{\lambda^{2\tilde\omega^2}}
     \left( \frac{4}{\lambda} + \frac{8}{\lambda^2} (\gamma+\log \lambda) \sin \lambda
     \right) = 2\pi + {\mathcal O}(\tilde\omega^2 \log\tilde\omega),
\end{eqnarray}
which is the contribution of the second integral in Eq.~(\ref{J5-a}) into $J_5(\tilde\omega)$.

Coming back to the first integral in Eq.~(\ref{J5-a}), we notice that the multiplicative function in its integrand exhibits a sufficiently quick decay [see Eq.~(\ref{f3-integral-as-inf})]
\begin{eqnarray}\label{f-est-3-1}
    {\rm Im}\,{{\mathcal F}}_3(\lambda) + \left(\frac{4}{\lambda} + \frac{8}{\lambda^2} (\gamma+\log \lambda) \sin \lambda\right) =
    {\mathcal O}\left( \frac{\log\lambda}{\lambda^3} \right)
\end{eqnarray}
as $\lambda\rightarrow\infty$. This observation allows~\footnotemark[3] us to take the limit $\tilde\omega\rightarrow 0$ inside the integral. As soon as
\begin{eqnarray}
  \lim_{\tilde\omega\rightarrow 0}\frac{\sin(\tilde\omega \lambda)}{\lambda^{2\tilde\omega^2}\,{\tilde\omega}} = \lambda,
\end{eqnarray}
we find that the leading contribution of the first integral into $J_5(\tilde\omega)$ equals
\begin{eqnarray} \label{J5-b} \fl \qquad
    \tilde\omega \int_{0}^{\infty} d\lambda \, \lambda
     \left( {\rm Im\,}{\mathcal F}_3(\lambda)
     + \left(\frac{4}{\lambda} + \frac{8}{\lambda^2} (\gamma+\log \lambda) \sin \lambda\right)
     \right) = {\mathcal O}(\tilde\omega)
\end{eqnarray}
since the integral above converges.

Combining Eq.~(\ref{J5-explicit-part-exp}) with Eq.~(\ref{J5-b}), we conclude that
\begin{eqnarray}\label{J5-exp}
    \tilde\omega^3 J_5(\tilde\omega) = - 2\pi \tilde\omega^3 + {\mathcal O}(\tilde\omega^4).
\end{eqnarray}

(vi)~To estimate contributions of higher-order terms in the small-$\omega$ expansion in Eq.~(\ref{eq:expansionJ}), we introduce a set of integrals
\begin{eqnarray} \label{eq:Gk}
	\mathcal{G}_k(\tilde\omega)={\rm Re\,}  \int_{0}^{\infty} d\lambda \, e^{i\tilde{\omega}\lambda}\lambda^{-2\tilde{\omega}^2} \mathcal{F}_k(\lambda),
\end{eqnarray}
where $k \ge 5$. Their contribution to $J(\tilde{\omega})$ equals $\tilde{\omega}^k \mathcal{G}_k(\tilde\omega)$. (The contribution of such terms for $k\le 4$ have already been discussed.)

To study a small-$\omega$ behavior of $\mathcal{G}_k(\tilde\omega)$, we split the integral Eq.~(\ref{eq:Gk}) in the following way (which is only necessary for even $k$ when $\mathcal{F}_k(\infty)$ is non-zero):
\begin{eqnarray} \label{eq:Gk2} \fl\qquad
\mathcal{G}_k(\tilde\omega)=\mathcal{F}_k(\infty) J_1(\tilde{\omega})+{\rm Re\,} \int_{0}^{1} d\lambda \, e^{i\tilde{\omega}\lambda}\lambda^{-2\tilde{\omega}^2} (\mathcal{F}_k(\lambda)-\mathcal{F}_k(\infty)) \nonumber\\
\fl\qquad\qquad\qquad	
    +{\rm Re\,} \int_{1}^{\infty} d\lambda \, e^{i\tilde{\omega}\lambda}\lambda^{-2\tilde{\omega}^2} (\mathcal{F}_k(\lambda)-\mathcal{F}_k(\infty))  \nonumber\\
\fl\qquad\qquad	
    =\mathcal{F}_k(\infty) J_1(\tilde{\omega})+{\rm Re\,} \int_{0}^{1} d\lambda \, e^{i\tilde{\omega}\lambda}\lambda^{-2\tilde{\omega}^2} (\mathcal{F}_k(\lambda)-\mathcal{F}_k(\infty)) \nonumber\\
\fl\qquad\qquad\qquad	
     -{\rm Re\,}\left[\frac{e^{i\tilde\omega}}{i\tilde\omega}\left(\mathcal{F}_k(1)-\mathcal{F}_k(\infty)\right)\right]\nonumber\\
\fl\qquad\qquad\qquad	
     -{\rm Re\,} \int_{1}^{\infty} d\lambda \, \frac{e^{i\tilde{\omega}\lambda}}{i\tilde\omega}\lambda^{-2\tilde{\omega}^2-1}
     \bigg[2\tilde\omega^2 (\mathcal{F}_k(\infty)-\mathcal{F}_k(\lambda))+f_k(\lambda) \bigg].
\end{eqnarray}
Here, we have used integration by parts in the second step. Next, we consider the limit
\begin{eqnarray} \label{eq:limGk} \fl \qquad\quad
	\lim_{\tilde\omega\rightarrow 0} \tilde\omega\mathcal{G}_k(\tilde\omega)= -{\rm Im\,}\left(\mathcal{F}_k(1)-\mathcal{F}_k(\infty)\right)-{\rm Im\,} \int_{1}^{\infty} d\lambda \,   \frac{f_k(\lambda)}{\lambda} =0,
\end{eqnarray}
where we have used Eq.~(\ref{J1-exp}) and Eqs.~(\ref{eq:Fk0}), (\ref{eq:Fkinf}) and (\ref{eq:fkinf}) which show that we are allowed to take the limit into the integral~\footnotemark[3]. Hence, for $k\ge 5$, we are led to conclude that $\tilde\omega^k\mathcal{G}_k(\tilde\omega)=o(\tilde\omega^{k-1})$ and, therefore, these can be neglected in the expansion of $J(\omega)$ that we are interested in to the order $\mathcal{O}(\tilde\omega^4)$ only.

Finally, we are left to discuss the terms in the expansion of Eq.~(\ref{eq:expansionJ}) which contain products of several $\mathcal{F}_k$'s with various $k$'s. To this end, we introduce the integrals
\begin{eqnarray} \label{eq:Gk-comb} \fl \qquad\quad
	\mathcal{G}_{k_1,\ldots,k_j}(\tilde\omega)={\rm Re\,}  \int_{0}^{\infty} d\lambda \, e^{i\tilde{\omega}\lambda}\lambda^{-2\tilde{\omega}^2}\widehat{\mathcal{F}}_{k_1}(\lambda)\widehat{\mathcal{F}}_{k_2}(\lambda)\cdots\widehat{\mathcal{F}}_{k_j}(\lambda),
\end{eqnarray}
where $j\ge 2$ and $k_1, k_2, \ldots, k_j\ge 2$, and $\widehat{\mathcal{F}}_{\ell}=\tilde{\mathcal{F}}_{\ell}$ for $\ell=2$ and $\widehat{\mathcal{F}}_{\ell}=\mathcal{F}_{\ell}$ otherwise. Restricting the indices in Eq.~(\ref{eq:Gk-comb}) to $k=\sum_{\ell=1}^j k_\ell$, we may use techniques similar to those in paragraph (iv) to show that contributions of such terms to $J(\tilde\omega)$ are of the order $\tilde\omega^k \mathcal{G}_{k_1,\ldots,k_j}(\tilde\omega)=\mathcal{O}(\omega^{k-1})$; therefore, for $k\ge5$, they can safely be neglected.

Summarizing the calculation (i) to (vi), we conclude that
\begin{eqnarray} \fl\qquad \label{Jw}
    J(\tilde\omega) =
        J_1(\tilde\omega) + \tilde\omega^2 J_2(\tilde\omega) + \frac{\tilde\omega^4}{2} J_3(\tilde\omega)
        + \tilde\omega^4 J_4(\tilde\omega) - \tilde\omega^3 J_5(\tilde\omega)+\mathcal{O}(\tilde\omega^4)
\end{eqnarray}
admits the small-$\omega$ expansion in the form
\begin{eqnarray} \label{Jw-exp}
    J(\tilde\omega) = \pi \tilde\omega + 2\pi \tilde\omega^3 \log\tilde \omega + {\mathcal O}(\tilde\omega^4) \quad {\rm as} \quad \tilde\omega\rightarrow 0,
\end{eqnarray}
see Eqs.~(\ref{J1-exp}), (\ref{J2-exp}), (\ref{J3-exp}), (\ref{J4-exp}) and (\ref{J5-exp}). Finally, substituting Eq.~(\ref{Jw-exp}) into Eq.~(\ref{SwJ}) with expanded denominator
\begin{eqnarray}
    \frac{1}{4\pi \sin^2(\pi\tilde\omega)} = \frac{1}{4 \pi ^3 \tilde\omega^2}+\frac{1}{12 \pi }+O\left(\tilde\omega^2\right),
\end{eqnarray}
we recover the sought result Eq.~(\ref{S-res-0}).
\end{proof}

\section{Equivalence of ${\rm TCUE}(\infty)$ and ${\rm CUE}(\infty)$ power spectra}\label{CUE-TCUE-equiv}

In our previous studies~\cite{ROK-2020,ROK-2017}, we presented a {\it numerical} evidence that the power spectrum in the large-dimensional ${\rm CUE}(N)$ ensemble is very well described by the theoretical power spectrum derived for ${\rm TCUE}(\infty)$. This has naturally been attributed to the anticipated universality phenomenon which should emerge in the limit $N \rightarrow \infty$. Below, we show {\it analytically} that the power spectra in ${\rm CUE}(\infty)$ and ${\rm TCUE}(\infty)$ are indeed described by the same universal law, see Theorem~\ref{tcue-pv}.

\subsection{Power spectrum for ${\rm TCUE}(N)$: Alternative representation}

The ``tuned'' circular unitary ensemble ${\rm TCUE}(N)$ is obtained from the traditional circular unitary ensemble ${\rm CUE}(N+1)$ by conditioning its lowest eigen-angle to stay at zero. This procedure produces the ${\rm TCUE}(N)$ joint probability density of $N$ eigen-angles $\{\theta_j\}_{j=1}^N \in [0,2\pi)$ of the form~\cite{ROK-2020}
\begin{equation}\label{T-CUE} \fl \qquad
    P_N^{\rm TCUE}(\theta_1,\dots,\theta_N) = \frac{1}{(N+1)!} \prod_{1 \le j < k \le N}^{}
    \left| e^{i\theta_j} - e^{i\theta_k} \right|^2
    \prod_{j=1}^{N} \left| 1 - e^{i\theta_j}\right|^2.
\end{equation}
The normalization is fixed by
\begin{eqnarray}\label{TCUE-norm}
\prod_{j=1}^{N}\int_0^{2\pi} \frac{d\theta_j}{2\pi}\, P_{N}^{\rm TCUE}(\theta_1,\dots,\theta_N)=1.
\end{eqnarray}
Such a seemingly minor tuning of ${\rm CUE}(N+1)$ to ${\rm TCUE}(N)$ induces {\it stationarity of level spacings} in ${\rm TCUE}(N)$ for any $N \in {\mathbb N}$, the property which is {\it not} shared by the traditional circular unitary ensemble.

Stationarity of level spacings inherent in ${\rm TCUE}(N)$ allowed us to prove the following theorem (Theorem~2.7 in Ref.~\cite{ROK-2020}).
\begin{proposition}\label{tuned-circular-theorem}
  Let $\{0\le \theta_1 \le \dots \le \theta_N < 2\pi\}$ be a sequence of $N \in {\mathbb N}$ {\it ordered} fluctuating eigen-angles drawn from the ${\rm TCUE}(N)$. For all $\,0 <  \omega \le \pi$, the ensemble averaged power spectrum of eigen-angles admits the representation
\begin{eqnarray} \fl \label{ps-tuned-circular}
\quad
    S_N^{\rm TCUE}(\omega) =
    - \frac{(N+1)^2}{\pi N} \frac{z}{(1-z)^2}{\rm Re} \left\{
    (1-z) \left( N- z \frac{\partial}{\partial z}\right) - z^{-N}\right\}
        \nonumber\\
        \times \int_0^{2\pi} \frac{d\varphi}{2\pi} \,\varphi \, \Phi_N^{\rm TCUE}((0,\varphi);1-z) - \dbtilde{S}_N(\omega)
\end{eqnarray}
with
\begin{eqnarray} \label{ps-tcue-3} \fl\qquad
    \dbtilde{S}_N(\omega) =
    \frac{z}{(1-z)^2} \left\{ 1 - 2{\rm Re\,}\frac{z^{N+1}}{1-z} + \frac{1}{N}
    \left( 1 + \frac{2z}{(1-z)^2}  \left(1-{\rm Re\,}z^N\right)\right)\right\}.
\end{eqnarray}
Here, $z=e^{i\omega}$ whilst $\Phi_N^{\rm TCUE}((0,\varphi);\zeta)$ is the generating function [Eq.~(\ref{GF0-d})] of the probabilities to find a given number of ${\rm TCUE}(N)$-eigen-angles in the interval $(0,\varphi)$. It equals
\begin{equation} \label{phin-tcue-painleve-6}
    \Phi_N^{\rm TCUE}((0,\varphi);\zeta) = \exp \left(
            -\int_{\cot(\varphi/2)}^{\infty} \frac{dt}{1+t^2} \left( \tilde{\sigma}_N(t;\zeta) + t \right)
        \right),
\end{equation}
where the six Painlev\'e function $\tilde{\sigma}_N(t;\zeta)$ satisfies the nonlinear equation
\begin{eqnarray} \label{pvi-tcue} \fl
    \qquad \left( (1+t^2)\,\tilde{\sigma}_N^{\prime\prime} \right)^2 + 4 \tilde{\sigma}_N^\prime (\tilde{\sigma}_N - t \tilde{\sigma}_N^\prime)^2
    + 4 (\tilde{\sigma}_N^\prime+1)^2 \left(
        \tilde{\sigma}_N^\prime + (N+1)^2
    \right) = 0
\end{eqnarray}
and the boundary condition ($\zeta=1-z$)
\begin{eqnarray} \label{pvi-bc-tcue}
    \tilde{\sigma}_N(t;\zeta) = -t + \frac{N(N+1)(N+2)}{3\pi t^2} \zeta + {\mathcal O}(t^{-4})
\end{eqnarray}
as $t\rightarrow \infty$.
\end{proposition}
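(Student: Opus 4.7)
The plan is to exploit the key distinguishing feature of ${\rm TCUE}(N)$ --- namely the stationarity of its level spacings --- and to combine it with the Cayley-transform / Painlev\'e~VI machinery already deployed in the proof of Theorem~\ref{Th-PVI}. Since Proposition~\ref{tuned-circular-theorem} is essentially Theorem~2.7 of Ref.~\cite{ROK-2020}, the argument mirrors that reference; I would split it into two essentially independent ingredients, treated in turn below.

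To derive the master formula Eq.~(\ref{ps-tuned-circular}), I would start from the definition Eq.~(\ref{ps-def}), write each $\theta_\ell$ as the partial sum $\sum_{j=1}^{\ell}s_j$ of spacings, and reduce the covariance matrix $\langle \delta\theta_\ell\delta\theta_m\rangle$ to a double sum of spacing covariances $\langle \delta s_j \delta s_k\rangle$. The stationarity assumption forces the latter to be Toeplitz, and via the duality $\{\theta_\ell \le \varphi\}\Leftrightarrow \{{\mathcal N}_N(0,\varphi)\ge \ell\}$ between ordered eigenvalues and the counting function it is expressible entirely in terms of the number variance ${\rm Var}({\mathcal N}_N(0,\varphi))$, hence in terms of $\Phi_N^{\rm TCUE}((0,\varphi);1-z)$ itself. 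Carrying out the double Fourier summation over $\ell,m$ in Eq.~(\ref{ps-def}) and exchanging the resulting $z$-geometric sums with the single $\varphi$-integration produces the compact form Eq.~(\ref{ps-tuned-circular}): the differential operator $(1-z)(N - z\partial_z)-z^{-N}$ emerges from differentiating the geometric sum $\sum_{\ell=1}^N z^{\ell}$, while the overall prefactor $(N+1)^2/N$ reflects the ${\rm CUE}(N+1)$-origin of the ${\rm TCUE}(N)$ measure and the mean spacing $\Delta_N=2\pi/(N+1)$. The additive piece $\dbtilde{S}_N(\omega)$ of Eq.~(\ref{ps-tcue-3}) then packages the unavoidable boundary contributions produced by the normalisation Eq.~(\ref{TCUE-norm}) and by the arithmetic of $\langle \theta_\ell\rangle=2\pi\ell/(N+1)$.

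To obtain the Painlev\'e~VI representation Eqs.~(\ref{phin-tcue-painleve-6})--(\ref{pvi-bc-tcue}), I would rerun the Cayley-transform argument of Theorem~\ref{Th-PVI}. Combining Lemma~\ref{Lemma-3GF} with the ${\rm TCUE}(N)$ JPDF Eq.~(\ref{T-CUE}) yields an integral representation for $\Phi_N^{\rm TCUE}((0,\varphi);\zeta)$ that differs from Eq.~(\ref{Phin-cue}) only by the extra factor $\prod_{j=1}^{N}|1-e^{i\theta_j}|^{2}$. Under $e^{i\theta_j}=(i\lambda_j-1)/(i\lambda_j+1)$ this factor becomes $\prod_{j}4/(1+\lambda_j^{2})$, merging with the existing $(1+\lambda_j^{2})^{-(N-1)}$ to yield a Cauchy-type Selberg integral corresponding to the parameter choice $\mu=1$, $\alpha=N+1$ in Chapter~8, \S~8.3.1 of Ref.~\cite{PF-book}. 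Reading off the associated Painlev\'e~VI representation explains the replacement $N^{2}\mapsto (N+1)^{2}$ in Eq.~(\ref{pvi-tcue}) relative to Eq.~(\ref{pvi}) and the extra $+t$ inside the integrand of Eq.~(\ref{phin-tcue-painleve-6}). The large-$t$ boundary condition Eq.~(\ref{pvi-bc-tcue}) is then extracted from Eq.~(\ref{pvi-tcue}) by a recursive series expansion, in the same spirit as Appendix~\ref{A-1} but with the modified coefficients reflecting the $\mu=1$ weight.

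The hardest step will be the explicit assembly of $\dbtilde{S}_N(\omega)$: tracking all the endpoint terms generated by the double summation over $\ell$ and $m$, accounting for the interplay between $z$ and $z^{-1}$ imposed by the reality of $S_N^{\rm TCUE}(\omega)$, and verifying that everything recombines into the closed form Eq.~(\ref{ps-tcue-3}) with the correct $1/N$ correction requires patience but no new idea. Once this bookkeeping is complete, the two ingredients combine into the stated proposition.
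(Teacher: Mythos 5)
The paper itself supplies no proof of Proposition~\ref{tuned-circular-theorem}: it is quoted directly as Theorem~2.7 of Ref.~\cite{ROK-2020}, and the only independent ingredient the present paper offers is the Cayley-transform route to the Painlev\'e~VI representation of $\Phi_N^{\rm TCUE}$ (which it develops later, in Lemma~\ref{Lemma-Phi} and Corollary~\ref{cor-PVI}, for a different purpose). Your two-part sketch correctly reconstructs that reference's argument: the master formula from the stationarity of spacings (with $\Delta_N=2\pi/(N+1)$ and the boundary piece $\dbtilde{S}_N$ absorbing the $\langle\theta_\ell\rangle=2\pi\ell/(N+1)$ arithmetic), and the Painlev\'e~VI representation from the Cayley transform of the extra $\prod_j|1-e^{i\theta_j}|^2\mapsto\prod_j 4/(1+\lambda_j^2)$ factor, pushing the Cauchy weight from $(1+\lambda^2)^{-N}$ to $(1+\lambda^2)^{-N-1}$; the resulting parameters are precisely what shift $(\sigma')^2(\sigma'+N^2)$ to $(\sigma'+1)^2(\sigma'+(N+1)^2)$ and introduce the extra $+t$ in the exponent. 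This is the same strategy the paper relies upon via citation, so the proposal is correct and matches the intended proof.
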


To analyse the power spectrum $S_N^{\rm TCUE}(\omega)$ in the limit $N\rightarrow \infty$, which is the final goal of this Section, it is useful to study exact relations between the generating functions
\begin{eqnarray} \label{Phi-N-TCUE} \fl \qquad\quad
    \Phi_N^{\rm TCUE}((0,\varphi);\zeta) =
    \frac{1}{(N+1)!}
    \prod_{j=1}^N \left( \int_0^{2\pi} - \zeta \int_0^\varphi \right) \frac{d\theta_j}{2\pi} \nonumber\\
    \fl\qquad\qquad\qquad\qquad\quad
    \times \prod_{j=1}^{N} \left| 1 - e^{i\theta_j}\right|^2
    \prod_{1 \le j < k \le N}^{}
    \left| e^{i\theta_j} - e^{i\theta_k} \right|^2
\end{eqnarray}
and
\begin{eqnarray} \fl\qquad \label{Phi-N-CUE}
    \Phi_N^{\rm CUE}((0,\varphi);\zeta) = \frac{1}{N!}
    \prod_{j=1}^N \left( \int_0^{2\pi} - \zeta \int_0^\varphi \right) \frac{d\theta_j}{2\pi}
    \prod_{1 \le j < k \le N}^{}
    \left| e^{i\theta_j} - e^{i\theta_k} \right|^2,
\end{eqnarray}
the latter being previously denoted $\Phi_N((0,\varphi);\zeta)$, see Eq.~(\ref{Phin-cue}).

Our proofs of Lemma~\ref{Lemma-Phi} and Corollary~\ref{cor-PVI} were inspired by the ideas outlined in Refs.~\cite{FW-2004,WF-2000}.
\begin{lemma}\label{Lemma-Phi}
Let $\Phi_N^{\rm CUE}((0,\varphi);\zeta)$ and $\Phi_N^{\rm TCUE}((0,\varphi);\zeta)$ be generating functions as specified above. Two following relations hold:
\begin{eqnarray} \label{prop-claim}
 \frac{d}{d\varphi} \Phi_N^{\rm CUE}((0,\varphi);\zeta) = -\frac{N}{2\pi} \zeta \Phi_{N-1}^{\rm TCUE}((0,\varphi);\zeta)
\end{eqnarray}
and
\begin{eqnarray} \fl \label{prop-rel} \qquad\quad
    \int_{0}^{2\pi} \frac{d\varphi}{2\pi}\, \varphi \Phi_N^{\rm TCUE}((0,\varphi); \zeta) \nonumber\\
    \fl \qquad\qquad\qquad
    = \frac{2\pi}{N+1}\frac{1}{\zeta}
    \left\{
    \int_{0}^{2\pi} \frac{d\varphi}{2\pi}\, \Phi_{N+1}^{\rm CUE}((0,\varphi); \zeta)
    -  (1-\zeta)^{N+1}\right\}.
\end{eqnarray}
\end{lemma}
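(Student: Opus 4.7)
\medskip

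\noindent\textbf{Proof plan.} My strategy is to prove part (i) first by direct differentiation and a symmetry-based change of variables, and then derive part (ii) from (i) via integration by parts.

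For part (i), I would start from the multiple-integral representation (\ref{Phi-N-CUE}) of $\Phi_N^{\rm CUE}((0,\varphi);\zeta)$. Differentiation with respect to $\varphi$ only acts on the $N$ upper limits in the factor $\int_0^{2\pi}-\zeta\int_0^\varphi$; since the integrand is invariant under permutations of the $\theta_j$'s, these $N$ contributions coincide and I may pull out one variable, say $\theta_1$, by setting $\theta_1=\varphi$. This produces the extra factor $-\zeta N/(2\pi)$ times a multiple integral over $\theta_2,\dots,\theta_N$ whose integrand contains $\prod_{j=2}^{N}|e^{i\varphi}-e^{i\theta_j}|^2$ alongside the residual Vandermonde. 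To match the ${\rm TCUE}(N-1)$ integrand in (\ref{Phi-N-TCUE}) I would execute a two-step change of variables $\theta_j \mapsto \varphi+\phi_j \pmod{2\pi} \mapsto \varphi-\psi_j \pmod{2\pi}$: the first shift uses $2\pi$-periodicity to keep $\int_0^{2\pi}$ intact and to move the $\int_0^\varphi$ interval to $\int_{2\pi-\varphi}^{2\pi}$, while the subsequent mirror reflection $\phi_j\mapsto 2\pi-\psi_j$ restores the interval to $\int_0^\varphi$. Under the composite substitution, $|e^{i\varphi}-e^{i\theta_j}|^2$ becomes $|1-e^{i\psi_j}|^2$, and the Vandermonde is preserved in absolute value. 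Relabelling indices and using $N/N!=1/(N-1)!$, I recover exactly $-(N\zeta/2\pi)\,\Phi_{N-1}^{\rm TCUE}((0,\varphi);\zeta)$ (recalling that the ${\rm TCUE}(N-1)$ normalisation prefactor is $1/N!$).

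For part (ii), I would invoke (i) with $N$ replaced by $N+1$ to substitute
\begin{equation*}
\Phi_N^{\rm TCUE}((0,\varphi);\zeta) = -\frac{2\pi}{(N+1)\zeta}\,\frac{d}{d\varphi}\Phi_{N+1}^{\rm CUE}((0,\varphi);\zeta)
\end{equation*}
into the left-hand side of (\ref{prop-rel}), then integrate by parts against the factor $\varphi$. The boundary contribution at $\varphi=0$ vanishes, while at $\varphi=2\pi$ it yields $\Phi_{N+1}^{\rm CUE}((0,2\pi);\zeta)$. This quantity is easily evaluated from (\ref{GF0-rel}): since $E_{N+1}(\ell;(0,2\pi))=\delta_{\ell,N+1}$, one has $\Phi_{N+1}^{\rm CUE}((0,2\pi);\zeta)=(1-\zeta)^{N+1}$. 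Rearranging the resulting identity reproduces (\ref{prop-rel}) verbatim.

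The only non-trivial step is the composite change of variables in part (i); the main obstacle is to verify cleanly that the two successive substitutions are simultaneously compatible with both pieces $\int_0^{2\pi}$ and $-\zeta\int_0^\varphi$ of each eigen-angle's integration measure, and that the integrand is invariant under both the shift (by translation invariance of the Vandermonde) and the mirror reflection (by complex conjugation invariance of $|1-e^{i\psi}|$ and $|e^{i\psi_j}-e^{i\psi_k}|$). Once this bookkeeping is nailed down, both parts of the lemma follow in a few lines.
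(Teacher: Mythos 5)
Your proof is correct and follows essentially the same route as the paper's. For part (i), the paper differentiates the multiple-integral representation and performs the single change of variables $\theta_j' = \varphi - \theta_j$ (mod $2\pi$); your two-step substitution $\theta_j \mapsto \varphi+\phi_j \mapsto \varphi-\psi_j$ composes to exactly that map, so it is the same argument presented in two stages rather than a genuinely different one. For part (ii) the paper likewise takes relation (i) at $N+1$, integrates by parts against $\varphi$, and evaluates the boundary term via $\Phi_{N+1}^{\rm CUE}((0,2\pi);\zeta)=(1-\zeta)^{N+1}$, just as you do.
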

\begin{proof}
Differentiating Eq.~(\ref{Phi-N-CUE}) with respect to $\varphi$, we observe
\begin{eqnarray} \label{CUE-N-der} \fl \quad
    \frac{d}{d\varphi}\Phi_N^{\rm CUE}((0,\varphi); \zeta) = -\frac{\zeta N}{2\pi} \nonumber\\
    \fl
    \qquad
    \times\frac{1}{N!}\prod_{j=1}^{N-1} \left( \int_{0}^{2\pi} - \zeta \int_{0}^{\varphi}\right) \frac{d\theta_j}{2\pi}
    \prod_{j=1}^{N-1}\left|e^{i\varphi} -e^{i\theta_j}\right|^2
        \prod_{1 \le j<k\le N-1}^{}\left|e^{i\theta_j} -e^{i\theta_k}\right|^2.
\end{eqnarray}
Changing all integration variables $\theta_j^\prime = \varphi-\theta_j$, and having in mind the $2\pi$-periodicity of the integrand, we reduce the integral above to the form
\begin{eqnarray} \label{CUE-N-der-2}
    \left( \int_{0}^{2\pi} - \zeta \int_{0}^{\varphi}\right) \frac{d\theta_j^\prime}{2\pi}
    \prod_{j=1}^{N-1}\left|1 -e^{i\theta_j^\prime}\right|^2
        \prod_{1 \le j<k\le N-1}^{}\left|e^{i\theta_j^\prime} -e^{i\theta_k^\prime}\right|^2.
\end{eqnarray}
Identifying this object as $\Phi_{N-1}^{\rm TCUE}((0,\varphi);\zeta)$, we conclude the proof of Eq.~(\ref{prop-claim}).

To prove the relation Eq.~(\ref{prop-rel}), we make use of Eq.~(\ref{prop-claim}) taken at $N+1$ and perform integration by parts in its l.h.s. Further, we spot that
$\Phi_{N+1}^{\rm CUE}((0,2\pi); \zeta)=(1-\zeta)^{N+1}$ to end the proof.
\end{proof}

\begin{corollary}\label{cor-PVI}
Let $\sigma_N(t;\zeta)$ and $\tilde\sigma_N(t;\zeta)$ be the solutions to the $\sigma$-Painlev\'e VI equations
\begin{eqnarray}\label{p6-1}\fl \qquad
    \left((1+t^2) \sigma_N^{\prime\prime}\right)^2 + 4 \sigma_N^\prime \left(
        \sigma_N - t\sigma_N^\prime
    \right)^2 + 4 \left(\sigma_N^\prime\right)^2 \left(\sigma_N^\prime + N^2 \right) = 0
\end{eqnarray}
and
\begin{eqnarray} \label{p6-2} \fl
    \qquad \left( (1+t^2)\,\tilde{\sigma}_N^{\prime\prime} \right)^2 + 4 \tilde{\sigma}_N^\prime (\tilde{\sigma}_N - t \tilde{\sigma}_N^\prime)^2
    + 4 (\tilde{\sigma}_N^\prime+1)^2 \left(
        \tilde{\sigma}_N^\prime + (N+1)^2
    \right) = 0
\end{eqnarray}
subject to the boundary conditions Eqs.~(\ref{pvi-bc}) and (\ref{pvi-bc-tcue}), respectively. It holds:
\begin{eqnarray} \label{p6-rel-2}
    \tilde\sigma_{N-1}(t) + t = (1+t^2) \frac{\sigma_N^\prime(t)}{\sigma_N(t)} + \sigma_N(t).
\end{eqnarray}
\end{corollary}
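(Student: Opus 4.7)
The plan is to couple Lemma~\ref{Lemma-Phi}, Eq.~(\ref{prop-claim}) -- which provides a first-order differential link between $\Phi_N^{\rm CUE}((0,\varphi);\zeta)$ and $\Phi_{N-1}^{\rm TCUE}((0,\varphi);\zeta)$ in the variable $\varphi$ -- with the Painlev\'e VI representations of these two generating functions given by Eq.~(\ref{phin-cue-painleve-6}) in Theorem~\ref{Th-PVI} and by Eq.~(\ref{phin-tcue-painleve-6}) in Proposition~\ref{tuned-circular-theorem}. The preparatory step is to compute the logarithmic $\varphi$-derivatives of both representations via the chain rule in $t=\cot(\varphi/2)$, for which $dt/d\varphi=-(1+t^2)/2$; the $1+t^2$ factor inherited from the Painlev\'e integrand then cancels cleanly, giving
\[
    \frac{d}{d\varphi}\log\Phi_N^{\rm CUE} = -\frac{1}{2}\sigma_N(t;\zeta), \qquad
    \frac{d}{d\varphi}\log\Phi_{N-1}^{\rm TCUE} = -\frac{1}{2}\big(\tilde\sigma_{N-1}(t;\zeta)+t\big).
\]

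With these identities in hand, I would recast Eq.~(\ref{prop-claim}) in logarithmic form,
\[
    \frac{d}{d\varphi}\log\Phi_N^{\rm CUE} = -\frac{N\zeta}{2\pi}\exp\big(\log\Phi_{N-1}^{\rm TCUE}-\log\Phi_N^{\rm CUE}\big),
\]
take the logarithm of both sides, and differentiate once more in $\varphi$. This single manoeuvre eliminates both the exponential and the multiplicative constant $-N\zeta/(2\pi)$, producing the purely local identity
\[
    \frac{d^2\log\Phi_N^{\rm CUE}/d\varphi^2}{d\log\Phi_N^{\rm CUE}/d\varphi} = \frac{d}{d\varphi}\log\Phi_{N-1}^{\rm TCUE}-\frac{d}{d\varphi}\log\Phi_N^{\rm CUE}.
\]
A further application of the chain rule yields $d^2\log\Phi_N^{\rm CUE}/d\varphi^2 = \frac{1}{4}(1+t^2)\sigma_N'(t;\zeta)$; substituting the three explicit expressions above and multiplying through by $-2$ then collapses the identity to Eq.~(\ref{p6-rel-2}).

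The derivation is essentially two applications of the chain rule glued together by one logarithmic differentiation, so there is no deep analytic hurdle; the main care point is the consistent bookkeeping of the factors $-(1+t^2)/2$ and of the various signs produced when pulling derivatives through the $\exp(-\int_t^\infty\cdots)$ representation. A small subtlety worth noting is that the passage to the logarithmic form of (\ref{prop-claim}), together with the subsequent division by $d\log\Phi_N^{\rm CUE}/d\varphi$, presupposes that $\Phi_N^{\rm CUE}((0,\varphi);\zeta)$ and $\sigma_N(t;\zeta)$ do not vanish on the integration path -- precisely the non-vanishing property addressed in Remark~\ref{Rem-no-poles} and Lemma~\ref{Lemma-zeros} in the $N\rightarrow\infty$ limit, and directly verifiable at finite $N$ from the multiple-integral formula Eq.~(\ref{Phi-N-CUE}) for the admissible range of $\zeta$. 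A useful sanity check is that the leading two orders of the boundary expansions (\ref{pvi-bc}) and (\ref{pvi-bc-tcue}) as $t\to\infty$ are mutually compatible with Eq.~(\ref{p6-rel-2}).
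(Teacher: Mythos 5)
Your proposal is correct and follows essentially the same route as the paper: both start from Lemma~\ref{Lemma-Phi}, Eq.~(\ref{prop-claim}), substitute the Painlev\'e~VI representations (\ref{phin-cue-painleve-6}) and (\ref{phin-tcue-painleve-6}), and eliminate the exponentials and the constant $-N\zeta/2\pi$ with one further differentiation. The only superficial difference is that you carry out the logarithmic differentiation in the variable $\varphi$ and substitute afterwards, whereas the paper first writes the intermediate identity Eq.~(\ref{p6-rel-1}) in the variable $s=\cot(\varphi/2)$ and then differentiates with respect to $s$; the algebra is identical.
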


\begin{proof}
Substitute $\Phi_N^{\rm CUE}((0,\varphi); \zeta)$ given by Eq.~(\ref{phin-cue-painleve-6}) and $\Phi_N^{\rm TCUE}((0,\varphi); \zeta)$ given by Eq.~(\ref{phin-tcue-painleve-6})
into Eq.~(\ref{prop-claim}) to derive
\begin{eqnarray} \label{p6-rel-1}
\exp \left(
    -\int_{s}^{\infty} \frac{dt}{1+t^2} \, \left( \tilde\sigma_{N-1}(t) + t \right)
\right)\nonumber\\
\qquad \qquad
= \frac{\pi}{\zeta N}\, \sigma_N(s)
    \exp \left(
    -\int_{s}^{\infty} \frac{dt}{1+t^2} \, \sigma_N(t)
\right).
\end{eqnarray}
Next, differentiate it with respect to $s$ and use Eq.~(\ref{p6-rel-1}) again to cancel exponential terms. This brings the sought Eq.~(\ref{p6-rel-2}).
\end{proof}

Lemma~\ref{Lemma-Phi}, combined with Proposition~\ref{tuned-circular-theorem}, produces an alternative representation of the ${\rm TCUE}(N)$ power spectrum, as formulated below.

\begin{corollary}
Ensemble averaged power spectrum for the ${\rm TCUE}(N)$ eigen-angles admits the representation
\begin{eqnarray}  \label{ps-tcue-reduced} \fl \qquad
    S_N^{\rm{TCUE}}(\omega) =  -\frac{2(N+1)}{N} \frac{z}{(1-z)^2} \nonumber\\
    \times{\rm Re} \left( N- z \frac{\partial}{\partial z}  - \frac{z+z^{-N}}{1-z}\right)
        \int_0^{2\pi} \frac{d\varphi}{2\pi} \, \Phi_{N+1}^{\rm{CUE}}((0,\varphi);1-z) \nonumber\\
        \qquad \qquad \qquad \qquad -  \frac{2}{N} \frac{z^2}{(1-z)^4} {\rm Re\,}\left(1-z^{N+1}\right).
\end{eqnarray}
\end{corollary}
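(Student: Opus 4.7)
The starting point is the exact ${\rm TCUE}(N)$ formula Eq.~(\ref{ps-tuned-circular}) of Proposition~\ref{tuned-circular-theorem}, which expresses the power spectrum in terms of the integral $\int_0^{2\pi}(d\varphi/2\pi)\,\varphi\,\Phi_N^{\rm TCUE}((0,\varphi);1-z)$ together with the explicit remainder $\dbtilde{S}_N(\omega)$. The goal is to convert the ${\rm TCUE}$ integral into a ${\rm CUE}$ integral by invoking Eq.~(\ref{prop-rel}) of Lemma~\ref{Lemma-Phi}, and then to simplify the differential operator and the resulting bookkeeping of remainder terms.

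\textbf{Step 1.} Substitute Eq.~(\ref{prop-rel}) into Eq.~(\ref{ps-tuned-circular}). Writing $\zeta = 1-z$ and $(1-\zeta)^{N+1} = z^{N+1}$, and abbreviating $I(z) = \int_0^{2\pi}(d\varphi/2\pi)\,\Phi_{N+1}^{\rm CUE}((0,\varphi);1-z)$, one obtains
\begin{eqnarray*}
    S_N^{\rm TCUE}(\omega) = - \frac{2(N+1)}{N}\,\frac{z}{(1-z)^2}\,{\rm Re}\,\mathcal{D}\!\left[\frac{I(z) - z^{N+1}}{1-z}\right] - \dbtilde{S}_N(\omega),
\end{eqnarray*}
where $\mathcal{D} = (1-z)(N - z\partial_z) - z^{-N}$. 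The factor $(N+1)^2/\pi N$ becomes $2(N+1)/N$ after absorbing the $2\pi/(N+1)$ from Eq.~(\ref{prop-rel}).

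\textbf{Step 2.} Apply the operator $\mathcal{D}$ using the conjugation identity
\begin{eqnarray*}
    (1-z)(N - z\partial_z)\!\left[\frac{f(z)}{1-z}\right] = \left(N - z\partial_z - \frac{z}{1-z}\right) f(z),
\end{eqnarray*}
valid for any smooth $f$. Together with the algebraic identity $(N - z\partial_z)z^{N+1} = -z^{N+1}$, a short calculation yields
\begin{eqnarray*}
    \mathcal{D}\!\left[\frac{I(z) - z^{N+1}}{1-z}\right] = \left(N - z\partial_z - \frac{z + z^{-N}}{1-z}\right) I(z) + \frac{z(1 + z^N)}{1-z}.
\end{eqnarray*}

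\textbf{Step 3.} The first summand produces exactly the operator appearing in Eq.~(\ref{ps-tcue-reduced}). It remains to show that the combination of the ``correction'' term $\frac{z(1+z^N)}{1-z}$ with $\dbtilde{S}_N(\omega)$ collapses to $-\frac{2}{N}\frac{z^2}{(1-z)^4}\,{\rm Re}(1-z^{N+1})$. Two elementary identities on $|z|=1$ do the job: first, ${\rm Re}(z/(1-z)) = -\tfrac{1}{2}$, which splits ${\rm Re}(z(1+z^N)/(1-z))$ into $-\tfrac{1}{2} + {\rm Re}(z^{N+1}/(1-z))$; second, the real-valuedness of $z/(1-z)^2 = -1/|1-z|^2$ on the unit circle gives
\begin{eqnarray*}
    {\rm Re}\,\frac{z^{N+1}}{1-z} = -\frac{z}{(1-z)^2}\bigl({\rm Re}\,z^{N+1} - {\rm Re}\,z^N\bigr).
\end{eqnarray*}

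\textbf{Step 4.} With these two identities, collect the remainder terms: the $\frac{z}{(1-z)^2}$ contributions cancel between $\frac{N+1}{N}$ (from the correction term) and $1 + \frac{1}{N}$ (from $\dbtilde{S}_N$), while the ${\rm Re}(z^{N+1}/(1-z))$ coefficients combine to $-\frac{2}{N}$. The remaining mixture of ${\rm Re}\,z^N$ and ${\rm Re}\,z^{N+1}$ pieces then reorganizes, via the second identity above, into $-\frac{2}{N}\frac{z^2}{(1-z)^4}\,{\rm Re}(1 - z^{N+1})$, matching Eq.~(\ref{ps-tcue-reduced}).

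The proof is essentially algebraic; the only delicate point is the bookkeeping in Step 4, where several $\mathcal{O}(1)$, $\mathcal{O}(1/N)$, and $\mathcal{O}(z^{N+1})$ terms must be tracked carefully and the unit-circle reality of $z/(1-z)^2$ must be used to pull ${\rm Re}\,z^N$ and ${\rm Re}\,z^{N+1}$ outside the operator $\frac{z}{(1-z)^2}{\rm Re}(\,\cdot\,)$. No new analytic input is required beyond Proposition~\ref{tuned-circular-theorem} and Lemma~\ref{Lemma-Phi}.
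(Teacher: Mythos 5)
Your proposal is correct and carries out exactly the derivation the paper intends: substitute the Lemma~\ref{Lemma-Phi} relation Eq.~(\ref{prop-rel}) into Eq.~(\ref{ps-tuned-circular}), simplify the differential operator by conjugation with $(1-z)^{-1}$, and collect the remainder against $\dbtilde{S}_N(\omega)$ using the two unit-circle identities ${\rm Re}\,\frac{z}{1-z}=-\tfrac12$ and ${\rm Re}\,\frac{z^{N+1}}{1-z}=-\frac{z}{(1-z)^2}({\rm Re}\,z^{N+1}-{\rm Re}\,z^N)$. The paper leaves this calculation implicit, and your bookkeeping in Steps 2--4 checks out term by term.
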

This representation will serve a basis for performing an asymptotic analysis of the ${\rm TCUE}(N)$ power spectrum as $N\rightarrow \infty$.

\subsection{Equivalence of power spectra in ${\rm TCUE}(\infty)$ and ${\rm CUE}(\infty)$}

\begin{theorem}\label{tcue-pv}
Let $S_N^{\rm TCUE}(\omega)$ denote the power spectrum of the ${\rm TCUE}(N)$. For all $0<\omega< \pi$, the limit $\lim_{N\rightarrow\infty} S_N^{\rm TCUE}(\omega)$ exists and equals
  \begin{eqnarray} \label{ps-tcue-inf}
    S_\infty^{\rm TCUE}(\omega) = S_\infty^{\rm CUE}(\omega),
\end{eqnarray}
where $S_\infty^{\rm CUE}(\omega)$ is defined in Theorem~\ref{cue-pv}.
\end{theorem}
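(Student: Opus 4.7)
The plan is to deduce Theorem~\ref{tcue-pv} from Theorem~\ref{cue-pv} by starting from the alternative representation Eq.~(\ref{ps-tcue-reduced}) of $S_N^{\mathrm{TCUE}}$. Its decisive feature is that the right-hand side is expressed \emph{entirely} in terms of the CUE generating function $\Phi_{N+1}^{\mathrm{CUE}}$, so the uniform Claeys--Krasovsky asymptotics of Theorem~\ref{thm:CK} apply directly and the asymptotic analysis reduces to the one performed in Section~\ref{cue-pv-proof}. The explicit $O(1/N)$ term at the end of Eq.~(\ref{ps-tcue-reduced}) drops out and the prefactor $2(N+1)/N$ tends to $2$, so the task is to evaluate $\lim_{N\to\infty}\mathcal{A}_N(z)$, where
\begin{equation*}
\mathcal{A}_N(z) := \Bigl(N - z\partial_z - \frac{z+z^{-N}}{1-z}\Bigr)\mathcal{P}_N(z),\qquad \mathcal{P}_N(z) := \int_0^{2\pi}\frac{d\varphi}{2\pi}\,\Phi_{N+1}^{\mathrm{CUE}}((0,\varphi);1-z).
\end{equation*}

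First I would use the mirror symmetry Eq.~(\ref{phin-sym-rel}) to fold the $[0,2\pi]$-integrals onto $[0,\pi]$, writing $\mathcal{P}_N$ and $\int_0^{2\pi}(d\varphi/2\pi)(\varphi/2\pi)\Phi_{N+1}^{\mathrm{CUE}}$ as linear combinations of the half-interval moments $\mathcal{I}_q(N+1;z)=(N+1)\int_0^\pi(d\varphi/2\pi)(\varphi/2\pi)^q\Phi_{N+1}^{\mathrm{CUE}}$ ($q=0,1$) and their complex conjugates multiplied by $z^{N+1}$, exactly in the spirit of Corollary~\ref{ps-master-symmetry}. Next I would compute $z\partial_z\Phi_{N+1}^{\mathrm{CUE}}$ by differentiating the uniform asymptotic Eq.~(\ref{eq:Edge-2}) in $\tilde\omega=\omega/2\pi$ via $z\partial_z=(2\pi i)^{-1}\partial_{\tilde\omega}$; since the leading behavior of the exponent in Eq.~(\ref{eq:Edge-2}) is $i\tilde\omega(N+1)\varphi$ by Eqs.~(\ref{eq:bc-inf-Th}) and (\ref{eq:global}), one obtains $z\partial_z\Phi_{N+1}^{\mathrm{CUE}}=\frac{(N+1)\varphi}{2\pi}\Phi_{N+1}^{\mathrm{CUE}}+O\bigl(\log((N+1)\varphi)\,\Phi_{N+1}^{\mathrm{CUE}}\bigr)$. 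Substituting into $\mathcal{A}_N$ and collecting terms, the oscillatory $z^{N+1}\overline{\mathcal{I}_0}$ contributions cancel between $N\mathcal{P}_N$ and $-z\partial_z\mathcal{P}_N$, while $-(z+z^{-N})\mathcal{P}_N/(1-z)=O(1/N)$ because $\mathcal{P}_N=O(1/N)$. One is left with
\begin{equation*}
\mathrm{Re}\,\mathcal{A}_N(z)=\mathrm{Re}\!\left[\mathcal{I}_0(N+1;z)-\bigl(1-z^{-(N+1)}\bigr)\mathcal{I}_1(N+1;z)\right]+o(1),
\end{equation*}
which is precisely $-T_1(z;N+1)$ in the notation of Section~\ref{cue-pv-proof}. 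Consequently Eq.~(\ref{ps-03-limit}) together with Lemmas~\ref{lemma-tech} and \ref{lemma-tech-2} (applied at $N+1$ in place of $N$) yield the limit, and multiplying by the limiting prefactor $-2z/(1-z)^2$ reproduces the same combination that appears in Theorem~\ref{cue-pv}, establishing $S_\infty^{\mathrm{TCUE}}(\omega)=S_\infty^{\mathrm{CUE}}(\omega)$.

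The main technical obstacle I anticipate is a rigorous control of the $z$-derivative inside the Claeys--Krasovsky uniform asymptotics: Theorem~\ref{thm:CK} is stated at fixed $\tilde\omega$ with an error $O(N^{-1+2\tilde\omega})$ that is not \emph{a priori} $C^1$ in $\tilde\omega$ with a derivative small enough to survive the multiplicative factor of $N$ produced by differentiating the leading exponential piece $\exp(i\tilde\omega N\varphi)$. The cleanest ways around this are either to verify analyticity in $\tilde\omega$ of the Riemann--Hilbert parametrix underlying Ref.~\cite{CK-2015} (upgrading the differentiated error to $O(N^{-1+2\tilde\omega}\log N)$, still subleading relative to $N$), or to reinterpret $z\partial_z\Phi_{N+1}^{\mathrm{CUE}}((0,\varphi);1-z)=-i\langle\mathcal{N}_{N+1}(0,\varphi)\,e^{i\omega\mathcal{N}_{N+1}(0,\varphi)}\rangle$ via Remark~\ref{MGF-CF-remark} and carry out an analogous Fisher--Hartwig/Toeplitz analysis for this modified generating function, paralleling the treatment of Section~\ref{CK-section}.
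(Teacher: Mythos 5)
Your proposal is correct in substance and runs along essentially the same lines as the paper's own proof: both start from the ``CUE-only'' representation Eq.~(\ref{ps-tcue-reduced}), use the mirror symmetry to fold the $\varphi$-integral onto $[0,\pi]$, invoke the uniform Claeys--Krasovsky asymptotics and Lemma~\ref{lemma-tech-2}, and rely on the (explicitly acknowledged) hypothesis that the Claeys--Krasovsky error can be differentiated in $\tilde\omega$. The one place where your execution differs is in how the $z$-derivative is handled. The paper applies $z\partial_z$ directly to the $\varphi$-integrated quantity $I_{N+1,0}(z)=L_\infty(z)\big(1+\mathcal{O}(\Omega(N)^{-2\tilde\omega^2})\big)$ and carries the footnote caveat at that level, landing immediately on $S_N^{\rm TCUE}=-\frac{2z}{(1-z)^2}\,{\rm Re}\,L_\infty(z)+o(1)$. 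You differentiate the integrand $\Phi_{N+1}^{\rm CUE}$ pointwise, isolate the dominant $\frac{(N+1)\varphi}{2\pi}\Phi$ piece, and show that $\mathcal{A}_N$ collapses to $-T_1(z;N+1)+o(1)$, so the answer is read off from the identical quantity that already appeared in Section~\ref{cue-pv-proof}. That buys a slightly more transparent accounting of why the $\mathcal{O}(N)$ terms from $N\mathcal{P}_N$ and $z\partial_z\mathcal{P}_N$ cancel and how $I_{N+1,1}$ enters, at the price of having to check that the $\mathcal{O}(\log((N+1)\varphi))$ correction integrates to $o(1)$ (which it does, contributing $\mathcal{O}(N^{-2\tilde\omega^2}\log N)$).

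One small slip in the final paragraph: the probabilistic identity should read $z\partial_z\Phi_{N+1}^{\rm CUE}((0,\varphi);1-z)=\big\langle \mathcal{N}_{N+1}(0,\varphi)\,e^{i\omega\mathcal{N}_{N+1}(0,\varphi)}\big\rangle$ without the $-i$, since $z\partial_z=-i\,\partial_\omega$ and $\partial_\omega\langle e^{i\omega\mathcal{N}}\rangle=i\langle\mathcal{N}e^{i\omega\mathcal{N}}\rangle$. This is peripheral to the main argument, but the alternative strategy you sketch there (a separate Fisher--Hartwig analysis for the differentiated generating function) would indeed be the honest way to close the same gap the paper leaves open.
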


\begin{proof}
We use the notation Eq.~(\ref{ps-alter-master}) to rewrite Eq.~(\ref{ps-tcue-reduced}), as $N\rightarrow\infty$ in the form
\begin{eqnarray}  \label{ps-tcue-reduced-st-1} \fl \quad
    S_N^{\rm{TCUE}}(\omega) =  -\frac{2}{N} \frac{z}{(1-z)^2} \nonumber\\
    \times{\rm Re} \left( N- z \frac{\partial}{\partial z}  - \frac{z+z^{-N}}{1-z}\right)
        \, \left(I_{N+1,0}(z) + z^{N+1} \overline{I_{N+1,0}(z)} \right) + {\mathcal O}(N^{-1}).\nonumber\\
\end{eqnarray}
To derive Eq.~(\ref{ps-tcue-reduced-st-1}), we have split the integration domain in Eq.~(\ref{ps-tcue-reduced}) into two intervals $(0,\pi) \cup (\pi, 2\pi)$ and made use of the symmetry relation Eq.~(\ref{comp-PhiN}), see the proof of Theorem~\ref{cue-pv} for a similar calculation. The ${\mathcal O}(N^{-1})$ term in Eq.~(\ref{ps-tcue-reduced-st-1}) originates from the last term in Eq.~(\ref{ps-tcue-reduced}).

It follows from Eq.~(\ref{eq:IN0-2}) and Lemma~\ref{lemma-tech-2} that
\begin{eqnarray} \label{LN0-T2}
    I_{N,0}(z) = L_\infty(z)\left( 1 + {\mathcal O}\big(
     \Omega(N)^{-2\tilde\omega^2}\big)\right),
\end{eqnarray}
where
\begin{eqnarray} \label{LN0-T3}
    L_\infty(z)= \frac{1}{2\pi}
    \int_{0}^{\infty} d\lambda
     \exp\left(\int_0^{-i \lambda} \frac{ds}{s} \sigma(s)\right).
\end{eqnarray}
Finally, we substitute Eq.~(\ref{LN0-T2}) into Eq.~(\ref{ps-tcue-reduced-st-1}) to derive
\begin{eqnarray}
    S_N^{\rm TCUE}(\omega) = -\frac{2z}{(1-z)^2}\, {\rm Re\,} L_\infty(z) + o(1).
\end{eqnarray}
In doing so, we have explicitly assumed~\footnote[4]{This assumption is not unreasonable in view of Remark~1.4 in Ref.~\cite{DIK-2014}.} that the differential operator $\partial/\partial z$ can directly act onto the asymptotic expansion of $I_{N,0}(z)$ given by Eq.~(\ref{LN0-T2}). As $N\rightarrow\infty$, this brings the statement of the Theorem.
\end{proof}

\section*{Acknowledgments}
The authors thank F.~Bornemann for providing us with the MATLAB package for numerical evaluation~\cite{B-2010} of Fredholm determinants. This work was supported by the Israel Science Foundation through the Grants No.~648/18 (E.K. and R.R.) and No.~2040/17 (R.R.). Some of the computations presented in this work were performed on the Hive computer cluster at the University of Haifa, which is partially funded through the ISF grant No.~2155/15.

\newpage
\renewcommand{\appendixpagename}{\normalsize{Appendices}}
\addappheadtotoc
\appendixpage
\renewcommand{\thesection}{\Alph{section}}
\renewcommand{\theequation}{\thesection.\arabic{equation}}
\setcounter{section}{0}

\section{Boundary conditions for Painlev\'e VI function ${\sigma}_N (t; \zeta)$ as $t \rightarrow \infty$} \label{A-1}
To derive the $t \rightarrow \infty$ boundary condition for ${\sigma}_N (t; \zeta)$ satisfying Eq.~(\ref{pvi}) of Theorem \ref{Th-PVI}, we make use of Eqs.~(\ref{phin-cue-painleve-6}) and (\ref{Phin-cue}) to observe
the relation
\begin{eqnarray}\label{sn-bc}
    {\sigma}_N(t;\zeta) = - 2 \frac{d}{d\varphi} \log \Phi_N((0,\varphi);\zeta)\Big|_{\varphi= 2\arctan(1/t)}
\end{eqnarray}
which holds true for $t >0$ and $0\le \varphi < \pi/2$. Since $\varphi \rightarrow 0$ as $t \rightarrow \infty$, we shall consider a small-$\varphi$
expansion of the generating function $\Phi_N((0,\varphi);\zeta)$ represented as
\begin{eqnarray} \label{ps-tcue-2A}
    \Phi_N((0,\varphi);\zeta) &=&
    \prod_{j=1}^N  \left( \int_0^{2\pi} - \zeta \int_0^\varphi \right) \frac{d\theta_j}{2\pi}
    P_N(\theta_1,\dots,\theta_N) \nonumber\\
    &=& 1+ \sum_{\ell=1}^N \frac{(-\zeta)^\ell}{\ell!} \left( \prod_{j=1}^\ell \int_{0}^{\varphi} \frac{d\theta_j}{2\pi} \right)
    R_{\ell,N} (\theta_1,\dots,\theta_\ell).
\end{eqnarray}
Here, the JPDF $P_N(\theta_1,\dots,\theta_N)$ is that of ${\rm CUE}(N)$ [Eq.~(\ref{jpdf-cue})], and $R_{\ell,N} (\theta_1,\dots,\theta_\ell)$ stands for
the $\ell$-th order correlation function [Eqs.~(\ref{RLN-CUE}) and (\ref{RLN-det})].

A small-$\varphi$ expansion of $\Phi_N((0,\varphi);\zeta)$ in Eq.~(\ref{ps-tcue-2A}) produces a sought boundary condition as $t\rightarrow \infty$. Several initial terms of the expansion read:
\begin{eqnarray} \fl \label{ps-tcue-2AA}
    \Phi_N((0,\varphi);\zeta) = 1 - \frac{\zeta}{2\pi} \left(
        R_{1,N} (0) \, \varphi + \frac{1}{2!} R_{1,N}^\prime (0) \, \varphi^2 + \frac{1}{3!} R_{1,N}^{\prime\prime} (0) \, \varphi^3
    \right) \nonumber\\
     +  \frac{1}{2!} \left(\frac{\zeta}{2\pi}\right)^2 \left( R_{2,N} (0,0) \, \varphi^2 +
        \frac{1}{2} \left[
            R_{2,N}^{[0,1]}(0,0) +  R_{2,N}^{[1,0]}(0,0)
        \right] \varphi^3
    \right) \nonumber\\
    - \frac{1}{3!} \left(\frac{\zeta}{2\pi}\right)^3  R_{3,N} (0,0,0) \, \varphi^3 +
    {\mathcal O}(\varphi^4).
\end{eqnarray}
Here, ${(\dots)}^\prime$ denotes a derivative; ${(\dots)}^{[\ell,m]}$ stands for a mixed derivative (of orders $\ell$ and $m$) of a function of two variables with respect to its first and second argument, respectively.

Equations (\ref{RLN-det}) and (\ref{CUEN-Kernel}) show that only two, out of seven, coefficients in the expansion are nontrivial,
\begin{eqnarray}
    R_{1,N}(0)&=&N, \nonumber\\
    R_{1,N}^{\prime\prime} (0) &=& -\frac{N(N^2-1)}{12}
\end{eqnarray}
yielding
\begin{eqnarray} \label{ps-tcue-2AAA}
    \Phi_N((0,\varphi);\zeta) = 1 - \frac{\zeta}{2\pi} \left( N\varphi -\frac{N(N^2-1)}{72}\varphi^3 \right) +
    {\mathcal O}(\varphi^4).
\end{eqnarray}
By virtue of Eq.~(\ref{sn-bc}), the boundary condition for ${\sigma}_N (t; \zeta)$ as $t \rightarrow \infty$ reads
\begin{eqnarray}\label{1texp} \fl \qquad\qquad
    {\sigma}_N (t; \zeta) =  \frac{N\zeta}{\pi} + \frac{N^2\zeta^2}{\pi^2 t} + \frac{N \zeta}{\pi t^2}\left[
    N^2 \left( \frac{\zeta^2}{\pi^2} - \frac{1}{6}\right) +\frac{1}{6} \right]
    + {\mathcal O}(t^{-3}).
\end{eqnarray}

\section{Boundary conditions for Painlev\'e V transcendent ${\sigma}_0 (t;\zeta)$ as $t \rightarrow 0$} \label{A-2}
To derive the $t \rightarrow 0$ boundary condition for ${\sigma}_0 (t;\zeta)$ satisfying Eq.~(\ref{PV-eq-0}), we make use of the celebrated formula relating $\sigma_0(t;\zeta)$ to the Fredholm determinant
\begin{eqnarray} \label{B-1}
    \exp\left({\int_{0}^{\lambda} \frac{dt}{t} \sigma_0(t;\zeta)}\right) = {\rm det\,} \left[
        \mathds{1} - \zeta \hat{K}_{1/2\pi,\infty}^{(0,\lambda)}
    \right],
\end{eqnarray}
see Eqs.~(\ref{relation-to-sine-process}), (\ref{sine-operator}) and (\ref{Sine-Kernel}) taken at $\rho=1/2\pi$. Notice that the Fredholm
determinant Eq.~(\ref{B-1}) is an entire function in $\zeta$ (see Ref.~\cite{F-1903}) as well as in $\lambda \in \mathbb{C}$ (see Ref.~\cite{AGZ-2010}).

Combined with the integral expansion Eq.~(\ref{FD-inf}) of the Fredholm determinant,
\begin{eqnarray} \fl \quad
   {\rm det\,} \left[
        \mathds{1} - \zeta \hat{K}_{1/2\pi,\infty}^{(0,\lambda)}
    \right] = 1 + \sum_{\ell=1}^{\infty} \frac{(-\zeta)^\ell}{\ell!} \prod_{i=1}^{\ell} \int_{0}^{\lambda} dx_i \,
    {\rm det}_{1 \le j,k \le \ell} \left[
        {K}_{1/2\pi,\infty} (x_j - x_k)
    \right]\nonumber\\
    \fl \quad\qquad = 1 -\zeta \int_{0}^{\lambda} dx_1 {K}_{1/2\pi,\infty} (0) \nonumber\\
    \fl \quad\qquad \qquad + \frac{\zeta^2}{2}
    \int_{0}^{\lambda} \int_{0}^{\lambda} dx_1 dx_2 \left[
       {K}_{1/2\pi,\infty}^2(0)  - {K}_{1/2\pi,\infty}^2(x_1 -x_2)
    \right] + \dots \nonumber\\
    \fl \quad\qquad \qquad = 1 - \frac{\zeta\lambda}{2\pi} + \frac{\zeta^2}{8\pi^2}
    \int_{0}^{\lambda} \int_{0}^{\lambda} dx_1 dx_2 \left[
       1 - \left(\frac{\sin[(x_1-x_2)/2]}{(x_1-x_2)/2}\right)^2
    \right] + \dots,
\end{eqnarray}
this brings a Taylor series expansion
\begin{eqnarray} \label{s_0_polynomial_ex} \fl\qquad
    \sigma_0(\lambda;\zeta) = \lambda \frac{d}{d\lambda} \log {\rm det\,} \left[
        \mathds{1} - \zeta \hat{K}_{1/2\pi,\infty}^{(0,\lambda)}
    \right] = -\frac{\zeta\lambda}{2\pi} - \frac{\zeta^2\lambda^2}{4\pi^2} +{\mathcal O}(\zeta^3\lambda^3)
\end{eqnarray}
as $\lambda\rightarrow 0$.

Calculation of higher order terms in Eq.~(\ref{s_0_polynomial_ex}) becomes increasingly cumbersome. This difficulty can nevertheless be circumvented by substitution of the sought Taylor series~\footnote[5]{Since the Fredholm determinant in Eq.~(\ref{B-1}) is an analytic function of $\lambda$ that equals unity at $\lambda=0$, its logarithm in Eq.~(\ref{s_0_polynomial_ex}) is well defined and admits a power series expansion in the vicinity of $\lambda=0$ with a non-zero convergence radius. This allows us to differentiate the log in that vicinity thus justifying a small-$\lambda$ expansion of $\sigma_0(\lambda;\zeta)$ in a vicinity of $\lambda=0$.}
\begin{eqnarray}\label{s0-pol}
    \sigma_0(\lambda;\zeta) = \sum_{k=1}^{\infty} \left(\frac{\lambda}{2\pi}\right)^k g_k(\zeta),
\end{eqnarray}
where
\begin{eqnarray}
g_1(\zeta)=-\zeta,\\
g_2(\zeta)=-\zeta^2,
\end{eqnarray}
see Eq.~(\ref{s_0_polynomial_ex}), into the Chazy form~\cite{C-1911,C-2000,OK-2010} of the Painlev\'e V equation [Eq.~(\ref{PV-eq-0})]
\begin{eqnarray} \label{PV-Chazy}
    t^2 \sigma_0^{\prime\prime\prime} + t \sigma_0^{\prime\prime} + 6 t (\sigma_0^\prime)^2 -4 \sigma_0\sigma_0^\prime +t^2 \sigma_0^\prime -t\sigma_0 = 0.
\end{eqnarray}
This yields a recurrence relation
\begin{eqnarray} \fl \quad
    k(k-1)^2 g_k(\zeta) + 4\pi^2 (k-3) g_{k-2}(\zeta) + 2 \sum_{j=1}^{k-1}(3j-2)(k-j) g_j(\zeta) g_{k-j}(\zeta) = 0
\end{eqnarray}
for the functions $g_k(\zeta)$ for $k\ge 3$. Below we quote several of them:
\begin{eqnarray}
    g_3(\zeta)=-\zeta^3,\\
    g_4(\zeta)=-\zeta^4 + \frac{\pi^2}{9}\zeta^2,\\
    g_5(\zeta)=-\zeta^5 + \frac{5\pi^2}{36}\zeta^3,
\end{eqnarray}
etc.

\section{Cumulants of the counting function for the ${\rm Sine}_2$ point process} \label{A-3}
\subsection{Small-$\omega$ expansion of Painlev\'e V transcendent ${\sigma}_0 (t;\zeta)$}\label{A-3-1}
To analyse the ${\rm CUE}(\infty)$ power spectrum at small frequencies, we follow Ref.~\cite{ROK-2020,ROK-2017} to adopt the small-$\omega$ ansatz~\footnote[6]{To justify the expansion Eq.~(\ref{sigma-expan-app}), let us notice the relation
$$
    \sigma_0(\lambda;\zeta) = \lambda \frac{d}{d\lambda} \log \Phi_\infty(\lambda;\zeta).
$$
Since $\Phi_\infty(\lambda;\zeta)$ is the Fredholm determinant Eq.~(\ref{FD-inf-FD}), it is an entire function in each variable~\cite{F-1903,AGZ-2010} $\zeta\in{\mathbb C}$ and $\lambda\in{\mathbb C}$, separately. By Hartogs' theorem and Osgood's lemma, the Fredholm determinant is a holomorphic function on ${\mathbb C}^2$. As such, it can be written as a power series in two variables,
$$
    \Phi_\infty(\lambda;\zeta) = 1 + \sum_{j,k=1}^{\infty} \alpha_{j,k} \lambda^j \zeta^k
$$
with an infinite convergence radius in both $\lambda$ and $\zeta$. Its logarithm, $\log \Phi_\infty(\lambda;\zeta)$, can then be written down as a double power
series with a non-zero convergence radius $R_{\lambda,\zeta}$ in both variables. Within the convergence domain, differentiation w.r.t. $\lambda$ is allowed; this will result in a double power series for $d/d\lambda \log \Phi_\infty(\lambda;\zeta)$ with the same convergence radius $R_{\lambda,\zeta}$. In view of the first relation of this footnote, this
implies that $\sigma_0(\lambda;\zeta)$ admits a power series expansion in $\zeta$ with a non-zero convergence radius $R_\zeta \neq 0$. This justifies a small-$\omega$ expansion Eq.~(\ref{sigma-expan-app}) as $\zeta= 1- e^{i\omega}$.
}
\begin{eqnarray} \label{sigma-expan-app}
    \sigma_0(t;\zeta) = \sum_{k=1}^\infty \tilde{\omega}^k f_k(t)
\end{eqnarray}
for the solution to the Painlev\'e V equation Eq.~(\ref{PV-eq-0}) supplemented by the boundary condition Eq.~(\ref{bc-zero-0}), see Theorem~\ref{cue-pv}.

The main objective of this subsection is to study the expansion coefficients $f_k(t)$. In particular, we shall calculate the functions $f_1(t)$, $f_2(t)$ and $f_3(t)$ explicitly, find an integral representation for $f_4(t)$, study asymptotic properties of $f_1(t)$, $f_2(t)$, $f_3(t)$ and $f_4(t)$ as $t\rightarrow 0$ and $t\rightarrow\infty$, and provide non-sharp estimates for an asymptotic behavior of $f_k(t)$ for $k\ge 5$.

To start with, we substitute the ansatz Eq.~(\ref{sigma-expan-app}) into the Chazy form [Eq.~(\ref{PV-Chazy})] of the fifth Painlev\'e transcendent to generate a set of linear differential equations for the coefficients $f_k(t)$ with $k=1, 2, \dots$,
\begin{eqnarray} \label{fk-eqns}
    t f_k^{\prime\prime\prime} + f_k^{\prime\prime} + t f_k^{\prime} -  f_k = \frac{F_k(t)}{t},
\end{eqnarray}
where
\begin{eqnarray}
    F_k(t) = 2 \sum_{\ell=1}^{k-1} \left\{
        \frac{d}{dt}(f_\ell f_{k-\ell}) - 3 t f_\ell^\prime f_{k-\ell}^\prime
    \right\}.
\end{eqnarray}
In particular,
\begin{eqnarray}\label{FL-1}
    F_1(t) &=& 0, \\
    \label{FL-2}
    F_2(t) &=& 4 f_1\, f_1^\prime
    - 6 t (f_1^\prime)^2, \\
    \label{FL-3}
    F_3(t) &=& 4 \left( f_1 f_2^\prime +  f_1^\prime f_2\right) - 12 t f_1^\prime f_2^\prime,\\
    \label{FL-4}
    F_4(t) &=& 4 \left( f_1 f_3^\prime + f_2 f_2^\prime + f_1^\prime f_3\right)
            -6 t \left(  (f_2^\prime)^2 + 2 f_1^\prime f_3^\prime\right).
\end{eqnarray}
Differential equations Eq.~(\ref{fk-eqns}) have to be supplemented by boundary conditions.

(i) The function $f_1(t)$ can be found from the general solution to Eqs.~(\ref{fk-eqns}) and (\ref{FL-1})
\begin{eqnarray} \label{fk-gs}
\fl \qquad
    f_G (t; c_1,c_2,c_3) = c_1 t  + c_2 a(t)
    + c_3 b(t),
\end{eqnarray}
where
\begin{eqnarray}\label{at}
a(t) &=&  \cos t + t {\rm Si} (t) -\frac{\pi t}{2}, \\
\label{bt}
b(t) &=&  \sin t - t {\rm Ci}(t).
\end{eqnarray}
The coefficients $c_{1}, c_2,$ and $c_3$ should be fixed by meeting specific boundary conditions. Here, ${\rm Si}(t)$ and ${\rm Ci}(t)$ are the sine and cosine integrals. Prior to doing so, let us notice the following limiting behaviors:
\begin{eqnarray}\label{Y2-L}
        a(t) =
        \left\{
                   \begin{array}{ll}
                     \displaystyle 1+ {\mathcal O}(t), & \hbox{$t\rightarrow 0$;} \\
                     \displaystyle -\frac{\sin t}{t}+{\mathcal O} (1/t^2), & \hbox{$t\rightarrow \infty$,}
                   \end{array}
                 \right.
\end{eqnarray}
and
\begin{eqnarray}\label{Y3-L}
        b(t)  =
        \left\{
                   \begin{array}{ll}
                     \displaystyle - t\log t +{\mathcal O}(t), & \hbox{$t\rightarrow 0$;} \\
                     \displaystyle \frac{\cos t}{t}+{\mathcal O}(1/t^2), & \hbox{$t\rightarrow \infty$.}
                   \end{array}
                 \right.
\end{eqnarray}
The boundary conditions for $f_1(t)$ can be read off from Eqs.~(\ref{eq:bc-inf-Th}) and (\ref{bc-zero-0}): $f_1(t)\rightarrow 0$ as $t\rightarrow 0$ and $f_1(t) = it + o(t)$ as $t\rightarrow\infty$. Owing to Eqs.~(\ref{Y2-L}) and (\ref{Y3-L}), the boundary condition at infinity yields $c_1 = i$ while that at zero brings $c_2=0$ leaving a constant $c_3$ unidentified. To fix it, we turn to our observation Eq.~(\ref{s0-pol}) implying that $\sigma_0(t)$, and thus $f_1(t)$, admits a {\it polynomial} in $t$ expansion in a vicinity of $t=0$. This makes us set $c_3=0$ to avoid appearance of $t\log t$ terms in $f_1(t)$ as $t\rightarrow 0$, see Eq.~(\ref{Y3-L}). As the result, we obtain
\begin{eqnarray} \label{f1-final}
    f_1(t) = f_G (t; i,0,0)= it.
\end{eqnarray}

(ii) The function $f_2(t)$ is a solution to the nonhomogeneous differential equation Eq.~(\ref{fk-eqns}) with $F_2(t) = 2t$, see Eqs.~(\ref{FL-2}) and (\ref{f1-final}), which can be represented as a sum of the particular solution
\begin{eqnarray}\label{f2-part}
    f_2^{(p)} (t) = -2+ 2 \left(
        \cos t + t{\rm Si}(t) -\frac{\pi t}{2}
    \right)
\end{eqnarray}
and of a general solution $f_G(t; c_1, c_2, c_3)$. To fix the three constants, we turn to the boundary conditions for $f_2(t)$ that follow from Eqs.~(\ref{eq:bc-inf-Th}) and (\ref{bc-zero-0}): $f_2(t)\rightarrow 0$ as $t\rightarrow 0$ and $f_2(t) \rightarrow -2$ as $t\rightarrow\infty$. Making use of the expansion Eq.~(\ref{Y2-L}), we see that they are satisfied by the particular solution Eq.~(\ref{f2-part}). This implies that the constants in the general solution $f_G(t; c_1, c_2, c_3)$ should be chosen in such a way that $f_G \rightarrow 0$ for both $t\rightarrow 0$ and $t\rightarrow\infty$, yet its small-$t$ behavior should be of the polynomial type. This brings $c_1=c_2=c_3=0$. Hence,
\begin{eqnarray}\label{f2-final}
    f_2 (t) = f_2^{(p)}(t).
\end{eqnarray}
For further reference, we quote asymptotic expansions of $f_2(t)$:
\begin{eqnarray}\label{f2-at-zero}
    f_2(t) = -\pi t + t^2 -\frac{1}{36}t^4 +{\mathcal O}(t^6)\quad {\rm as} \quad t\rightarrow 0,
\end{eqnarray}
\begin{eqnarray}\label{f2-at-infty}
    f_2(t) = -2 -2 \frac{\sin t}{t}+ 4\frac{\cos t}{t^2}  +{\mathcal O}(t^{-3}) \quad {\rm as} \quad t\rightarrow \infty.
\end{eqnarray}

(iii) The function $f_3(t)$ is a solution to the nonhomogeneous differential equation Eq.~(\ref{fk-eqns}) with $F_3(t)$ given by Eqs.~(\ref{FL-3}), (\ref{f2-part}) and (\ref{f1-final}) subject to the boundary conditions $f_3(t) \rightarrow 0$ for both $t\rightarrow 0$ and $t\rightarrow \infty$; yet its small-$t$ behavior should be of the polynomial type. Following the reasoning in (ii), we represent the sought solution as a sum of the particular solution
\begin{eqnarray}\label{f3-part}
    f_3^{(p)} (t) &=& i t^3 \, {}_3F_4\left(1,1,1;\frac{3}{2},2,2,2;-\frac{t^2}{4}\right) - 4i\pi \left(1 - \cos t \right) \nonumber \\
                  &+& 8 i \sin t \left(  {\rm Ci}(t) -\log t -\gamma \right) + 8 i {\rm Si} (t)\, \left(1 - \cos t \right) \nonumber \\
                  &-& 8 i t {\rm Ci}(t) \left( {\rm Ci}(t) -\log t -\gamma \right)\nonumber \\
                  &+& 4 i t \left( {\rm Ci}(t)^2 - {\rm Si}(t)^2 -\gamma^2 -\log^2 t - 2 \gamma\log t\right) \nonumber\\
                  &+& 4 i \pi  t {\rm Si}(t) -\frac{2}{3} i \pi ^2 t
\end{eqnarray}
and of a general solution $f_G(t; c_1, c_2, c_3)$. Here, ${}_p F_q$ is the hypergeometric function. Since $f_3^{(p)}(t) \rightarrow 0$ for both $t\rightarrow 0$ and $t\rightarrow \infty$,
we conclude that all three constants of a general solution should nullify. Hence,
\begin{eqnarray}\label{f3-final}
    f_3 (t) = f_3^{(p)}(t).
\end{eqnarray}
The following asymptotic expansions hold:
\begin{eqnarray}\label{f3-at-zero}\fl\quad
    f_3(t) = -\frac{2\pi^2}{3} i t + 2\pi i t^2 -i t^3 -\frac{\pi}{18}it^4+ {\mathcal O}(t^5) \quad {\rm as} \quad t\rightarrow 0,
\end{eqnarray}
\begin{eqnarray}\label{f3-at-infty}\fl\quad
    f_3(t) = \frac{4i}{t} -  8i\frac{\cos t}{t}(\gamma +\log t)  - 16 i \frac{\sin t}{t^2}(\log t +\gamma-1)
    +{\mathcal O}\left(\frac{\log t}{t^3}\right) \quad {\rm as} \quad t\rightarrow \infty. \nonumber\\
    {}
\end{eqnarray}

(iv) The function $f_4(t)$ can be calculated along the same lines albeit more technical effort is required. We present the final answer in the form
\begin{eqnarray} \fl \qquad \label{f4t-def}
    f_4(t) = \frac{\pi^3}{3} t + a(t)  \int_{0}^{t} \frac{dx}{x}\, F_4(x) \sin x - b(t) \int_{0}^{t} \frac{dx}{x}\, F_4(x) \cos x \nonumber\\
    \fl\qquad\qquad\qquad
    - t \int_{0}^{t} \frac{dx}{x}\, F_4(x) \left[
    \cos x {\rm Ci}(x) + \sin x {\rm Si}(x) - \frac{\pi}{2}\sin x\right],
\end{eqnarray}
where $a(t)$ and $b(t)$ exhibit an asymptotic behavior specified in Eqs.~(\ref{Y2-L}) and (\ref{Y3-L}), and $F_4(x)$ is defined by Eqs.~(\ref{FL-4}), (\ref{f3-part}), (\ref{f2-part}) and (\ref{f1-final}).

(iv-a)~The small-$t$ expansion of $f_4(t)$ can readily be evaluated:
\begin{eqnarray}\label{f4-at-zero}
    f_4(t) = \frac{\pi^3}{3} t - \frac{7\pi^2}{3} t^2 + \left(
        \frac{7\pi^2}{108} -1
    \right) t^4 + {\mathcal O}(t^5) \quad {\rm as} \quad t\rightarrow 0.
\end{eqnarray}

(iv-b) Finding an expansion of $f_4(t)$ at infinity requires some effort. To start with, we notice that
\begin{eqnarray}\label{F4-est-1}
    F_4(x) &=& F_\infty(x) - \frac{32 \cos x \log x}{x}+ {\mathcal O}\left(
        \frac{1}{x}
    \right) \quad {\rm as} \quad x\rightarrow \infty, \\
    \label{F4-inf}
    F_\infty(x) &=& 64  (\gamma+\log x) \sin x,
\end{eqnarray}
in order to re-group the terms in $f_4(t)$, Eq.~(\ref{f4t-def}), as follows:
\begin{eqnarray} \fl \qquad \label{f4t-02}
    f_4(t) = t\left\{ \frac{\pi^3}{3} -
    \int_{0}^{\infty} \frac{dx}{x}\, F_4(x) \left[
    \cos x {\rm Ci}(x) + \sin x {\rm Si}(x) - \frac{\pi}{2}\sin x\right] \right\}
    \nonumber\\
    + a(t)  \int_{0}^{\infty} \frac{dx}{x}\, \left[ F_4(x) -F_\infty(x)\right] \sin x  \nonumber\\
    - a(t)  \int_{t}^{\infty} \frac{dx}{x}\, \left[ F_4(x) -F_\infty(x)\right] \sin x\nonumber\\
    \fl\qquad\qquad\qquad
    + a(t) \int_{0}^{t} \frac{dx}{x}\, F_\infty(x) \sin x \nonumber\\
    - b(t) \int_{0}^{\infty} \frac{dx}{x}\, F_4(x) \cos x + b(t) \int_{t}^{\infty} \frac{dx}{x}\, F_4(x) \cos x\nonumber\\
    + t \int_{t}^{\infty} \frac{dx}{x}\, F_4(x) \left[
    \cos x {\rm Ci}(x) + \sin x {\rm Si}(x) - \frac{\pi}{2}\sin x\right].
\end{eqnarray}
Let us analyze each term in Eq.~(\ref{f4t-02}) separately.

The first, linear in $t$ term {\it must} vanish according to the boundary condition
Eq.~(\ref{eq:bc-inf-Th}). This implies, that
\begin{eqnarray}
    \int_{0}^{\infty} \frac{dx}{x}\, F_4(x) \left[
    \cos x {\rm Ci}(x) + \sin x {\rm Si}(x) - \frac{\pi}{2}\sin x\right]  = \frac{\pi^3}{3}.
\end{eqnarray}

The second term,
\begin{eqnarray}
    a(t)  \int_{0}^{\infty} \frac{dx}{x}\, \left[ F_4(x) -F_\infty(x)\right] \sin x  = C_1 \frac{\sin t}{t} + {\mathcal O}\left(
        \frac{1}{t^2}
    \right)
\end{eqnarray}
since the integral therein converges to some constant denoted by $-C_1$; for $a(t)$ as $t\rightarrow \infty$, see Eq.~(\ref{Y2-L}).

The third term,
\begin{eqnarray}
    a(t)  \int_{t}^{\infty} \frac{dx}{x}\, \left[ F_4(x) -F_\infty(x)\right] \sin x = {\mathcal O}\left(
        \frac{\log t}{t^2}\right),
\end{eqnarray}
where we have used the estimates Eqs.~(\ref{Y2-L}) and (\ref{F4-est-1}).

The integral in the fourth term can be evaluated explicitly and further expanded. This yields the estimate
\begin{eqnarray} \fl\quad
    a(t)\int_{0}^{t} \frac{dx}{x}\, F_\infty(x) \sin x =
    \frac{\sin t}{t}\left( C_2 (\log t)^2 + C_3 \log t + C_4\right) +{\mathcal O}\left(
    \frac{(\log t)^2}{t^2}
    \right),
\end{eqnarray}
where $C_2$, $C_3$ and $C_4$ are some constants.

The estimate of the fifth term reads
\begin{eqnarray}
    b(t) \int_{0}^{\infty} \frac{dx}{x}\, F_4(x) \cos x = C_5 \frac{\cos t}{t} + {\mathcal O}\left(\frac{1}{t^2}\right),
\end{eqnarray}
where we have used Eq.~(\ref{Y3-L}) and the fact that the integral itself converges to a constant $C_5$.

The sixth term brings
\begin{eqnarray}
    b(t) \int_{t}^{\infty} \frac{dx}{x}\, F_4(x) \cos x = {\mathcal O}\left(\frac{\log t}{t^2}\right).
\end{eqnarray}
Here, the expansion Eq.~(\ref{F4-est-1}) was used.

As $t\rightarrow \infty$, the last, seventh, term
\begin{eqnarray} \label{term-7}
    t \int_{t}^{\infty} \frac{dx}{x}\, F_4(x) \left[
    \cos x {\rm Ci}(x) + \sin x {\rm Si}(x) - \frac{\pi}{2}\sin x\right]
\end{eqnarray}
can be estimated by expanding the integrand
\begin{eqnarray}
    \cos x {\rm Ci}(x) + \sin x {\rm Si}(x) - \frac{\pi}{2}\sin x = {\mathcal O}(x^{-2}) \quad {\rm as} \quad x\rightarrow\infty,
\end{eqnarray}
see also Eqs.~(\ref{F4-est-1}) and (\ref{F4-inf}). Hence, as $t\rightarrow \infty$, we need to consider a simpler integral
\begin{eqnarray}
    t \int_{t}^{\infty} \frac{dx}{x^3} \log x \sin x = {\mathcal O}\left( \frac{\log t}{t^2}\right).
\end{eqnarray}

Finally, combining all the estimates together, we conclude that
\begin{eqnarray} \label{f4t-final-inf-est} \fl\qquad
    f_4(t) = \frac{\sin t}{t}\left( C_2 (\log t)^2 + C_3 \log t + (C_1+C_4)\right) \nonumber\\
    \fl \qquad \qquad\qquad\qquad\qquad\qquad + C_5 \frac{\cos t}{t} +{\mathcal O}\left(
    \frac{(\log t)^2}{t^2}
    \right)
    \quad {\rm as} \quad t\rightarrow \infty.
\end{eqnarray}

(v) The asymptotic behavior of $f_k(t)$ for $k\ge 5$ can also be obtained. As $t\rightarrow 0$, we make use of Eq.~(\ref{s0-pol}) to observe
\begin{eqnarray}\label{fkt-0}
    f_k(t) = {\mathcal O}(t) \quad {\rm as} \quad t\rightarrow 0.
\end{eqnarray}
At infinity, a non-sharp estimate of $f_k(t)$ follows from the boundary condition Eq.~(\ref{eq:bc-inf-Th}) for $\sigma(t)$ and the ansatz Eq.~(\ref{sigma-expan-app}),
\begin{eqnarray} \label{eq:fkinf}
	f_k(t)=\mathcal{O}\left(\frac{\log^k(t)}{t}\right) \quad {\rm as} \quad t\rightarrow \infty.	
\end{eqnarray}
\begin{remark}
As a consistency check, a reader may verify that small-$t$ expansions of $f_1$ [Eq.~(\ref{f1-final})], $f_2$ [Eq.~(\ref{f2-at-zero})], $f_3$ [Eq.~(\ref{f3-at-zero})] and $f_4$ [Eq.~(\ref{f4-at-zero})] are in concert with a small-$\omega$ expansion of Eq.~(\ref{s0-pol}).
\end{remark}

\subsection{Cumulants of the counting function}\label{A-3-2}
Complete information about the cumulants $\kappa_\ell(\lambda) = \langle\!\langle  n_{1/2\pi}^\ell(\lambda)\rangle\!\rangle$ of the counting function $n_{1/2\pi}(\lambda)$ is contained in the fifth Painlev\'e transcendent $\sigma_0(t;\zeta)$ specified in Theorem~\ref{cue-pv}. To extract it, we shall combine the cumulant expansion Eq.~(\ref{cum-exp}) with the small-$\omega$ ansatz Eq.~(\ref{sigma-expan-app}) to obtain
\begin{eqnarray}\label{Fk-integrated-kappa-L}
    \kappa_\ell(\lambda) = \frac{\ell!}{(2i\pi)^\ell} {\mathcal F}_\ell (\lambda),
\end{eqnarray}
where
\begin{eqnarray} \label{Fk-integrated}
     {\mathcal F}_\ell (\lambda)=\int_{0}^{\lambda} \frac{f_\ell(t)}{t}dt.
\end{eqnarray}
Notice that the cumulants $\kappa_\ell^{(\rho)}(\lambda) = \langle\!\langle  n_{\rho}^\ell(\lambda)\rangle\!\rangle$ of the counting function $n_{\rho}(\lambda)$ associated with the ${\rm Sine}_2$ determinantal point process with the mean local density $\rho$ can be calculated by simple rescaling,
\begin{eqnarray}\label{kappa-rho}
        \kappa_\ell^{(\rho)}(\lambda) = \kappa_\ell(2\pi \rho \lambda),
\end{eqnarray}
see Eq.~(\ref{CF-diff-densities}).

(i) For one, the values $\kappa_\ell(\infty)$ for $\ell \ge 3$, can readily be determined from the global integral condition Eq.~(\ref{eq:global}), rewritten in terms of the fifth Painlev\'e transcendent ${\sigma_0(t;\zeta)=\sigma(s=-it)}$,
\begin{eqnarray} \label{global-copy} \fl\qquad
    \lim_{\lambda\rightarrow \infty} \left(
        \int_{0}^{\lambda} \frac{dt}{t} \, \sigma_0(t;\zeta) - i\tilde{\omega} \lambda +2\tilde{\omega}^2 \log \lambda
    \right) =  2\log \left[G(1+\tilde\omega)G(1-\tilde\omega)\right].
\end{eqnarray}
Indeed, substituting the small-$\omega$ expansion Eq.~(\ref{sigma-expan-app}) into Eq.~(\ref{global-copy}), and taking into account
Eq.~(\ref{f1-final}) along with the definition Eq.~(\ref{Fk-integrated}), we obtain:
\begin{eqnarray} \fl\qquad\label{Fk-exp-w}
   \lim_{\lambda\rightarrow \infty}\left\{ \tilde\omega^2  \left( {\mathcal F}_2(\lambda) + 2 \log\lambda\right)
    + \sum_{\ell=3}^{\infty} \tilde\omega^\ell {\mathcal F}_\ell(\lambda)
    \right\} \nonumber\\
    \fl\qquad\qquad\qquad\qquad\qquad
    = -2(1+\gamma)\tilde\omega^2 - 2 \sum_{k=2}^{\infty} \frac{\zeta(2k-1)}{k}\tilde\omega^{2k}.
\end{eqnarray}
The r.h.s.~of~Eq.~(\ref{Fk-exp-w}) follows from the Taylor expansion of the Barnes $G$ function~\cite{S-1988}:
\begin{eqnarray} \fl \qquad
    \log G(1+w) = \frac{w}{2}\log(2\pi) -\frac{w+(1+\gamma)w^2}{2} +\sum_{j=2}^{\infty} (-1)^j \frac{\zeta(j)}{j+1} w^{j+1},\; |w|<1.\nonumber
    \\{}
\end{eqnarray}
Equation~(\ref{Fk-exp-w}) immediately implies that
\begin{eqnarray} \label{F2k1-inf}
    {\mathcal F}_{2\ell+1}(\infty) = 0, \;\; \ell=1, 2, \dots
\end{eqnarray}
whilst
\begin{eqnarray} \label{F2k-inf}
    {\mathcal F}_{2\ell}(\infty) = -\frac{2}{\ell}\zeta(2\ell-1), \;\; \ell=2, 3, \dots,
\end{eqnarray}
where $\zeta(x)$ is the Riemann zeta function. By virtue of Eq.~(\ref{Fk-integrated-kappa-L}), this translates to
\begin{eqnarray} \label{kappa-odd-inf}
    \kappa_{2\ell+1}(\infty) = 0, \;\; \ell=1, 2, \dots
\end{eqnarray}
and
\begin{eqnarray} \label{kappa-even-inf}
    \kappa_{2\ell}(\infty) = \frac{2}{\pi}(-1)^{\ell-1}\frac{(2\ell-1)!}{(2\pi)^{2\ell-1}}\, \zeta(2\ell-1), \;\; \ell=2, 3, \dots.
\end{eqnarray}

(ii) As far as the second cumulant $\kappa_2(\lambda)=-(1/2\pi^2) {\mathcal F}_2(\lambda)$ is concerned, due to its $\log$-divergency at infinity (see Eq.~(\ref{f2-integral-as-inf}) below), we found it useful to isolate this divergency by defining
the function
\begin{eqnarray} \label{f2-integrated-reg}
    \tilde{\mathcal F}_2(\lambda) = {\mathcal F}_2(\lambda) + 2\log\lambda
\end{eqnarray}
to deduce from $\tilde\omega^2$ terms in Eq.~(\ref{Fk-exp-w}) that
\begin{eqnarray} \label{F2tilde-inf}
    \tilde{{\mathcal F}}_{2}(\infty) = -2(1+\gamma).
\end{eqnarray}
In the language of cumulants this translates to
\begin{eqnarray}
    \tilde{\kappa}_2(\lambda) = \kappa_2(\lambda) - \frac{1}{\pi^2} \log\lambda
\end{eqnarray}
with
\begin{eqnarray} \label{F2tilde-inf}
    \tilde{\kappa}_{2}(\infty) = \frac{1}{\pi^2}(1+\gamma).
\end{eqnarray}

(iii) Some of the functions ${\mathcal F}_k(\lambda)$ can be calculated explicitly, based on Eqs.~(\ref{f1-final}), (\ref{f2-final}) and (\ref{f3-final}).

(iii-a) The function ${\mathcal F}_1(\lambda)$ is a linear function
\begin{eqnarray}\label{f1-integrated}
    {\mathcal F}_1(\lambda) = i\lambda.
\end{eqnarray}
Consequently,
\begin{eqnarray}
    \kappa_1^{(\rho)}(\lambda) = \langle n_{1/2\pi}(2\pi \rho\lambda) \rangle = \rho \lambda.
\end{eqnarray}

(iii-b) The function ${\mathcal F}_2(\lambda)$ equals
\begin{eqnarray} \label{f2-integrated}
    {\mathcal F}_2(\lambda) &=& -2(1-\cos \lambda) \nonumber\\
    &+& 2 \lambda \left( {\rm Si}(\lambda) - \frac{\pi}{2} \right) + 2 \left(
    {\rm Ci}(\lambda) - \gamma -\log \lambda
    \right).
\end{eqnarray}
Its behavior at zero and infinity is described by two expansions:
\begin{eqnarray}\label{f2-integral-as-0}
    \mathcal{F}_2(\lambda) = - \pi \lambda + \frac{\lambda^2}{2} +{\mathcal O}(\lambda^4) \quad {\rm as} \quad \lambda\rightarrow 0
\end{eqnarray}
and
\begin{eqnarray} \label{f2-integral-as-inf}
    {\mathcal F}_2(\lambda) &=& -2 \log \lambda -2(1+\gamma) + \frac{2\cos\lambda}{\lambda^2} \nonumber\\
    &+& \frac{8\sin\lambda}{\lambda^3} + {\mathcal O}
    \left(
        \frac{1}{\lambda^4}
    \right)
     \quad {\rm as} \quad \lambda\rightarrow \infty.
\end{eqnarray}

The results above imply that the second cumulant equals
\begin{eqnarray} \fl\qquad \label{2nd-cum}
    \kappa_2^{(\rho)}(\lambda) &=& \frac{1}{\pi^2} \big( 1+\gamma + \log(2\pi\rho\lambda) \big)
    \nonumber\\
    &-&\frac{1}{\pi^2} \left(
        \cos(2\pi\rho\lambda) + {\rm Ci}(2\pi\rho\lambda) + 2\pi\rho\lambda
        \left(
            {\rm Si}(2\pi\rho\lambda) -\frac{\pi}{2}
        \right)
    \right).
\end{eqnarray}
This is, of course, a well known result~\cite{M-2004} for the number variance. Its asymptotic behavior can be read off from Eqs.~(\ref{f2-integral-as-0}) and (\ref{f2-integral-as-inf}),
\begin{eqnarray} \fl\qquad\qquad\qquad \label{2nd-cum-as}
    \kappa_2^{(\rho)}(\lambda) = \left\{
                                         \begin{array}{ll}
                                           \displaystyle \rho\lambda +{\mathcal O}(\lambda^2), & \hbox{$\lambda \rightarrow 0$;} \\
                                          \displaystyle \frac{1}{\pi^2}\big[ \log(2\pi\rho\lambda) + 1+ \gamma
                                            \big] + {\mathcal O}(\lambda^{-2}), & \hbox{$\lambda \rightarrow\infty$.}
                                         \end{array}
                                       \right.
\end{eqnarray}

(iv) The function ${\mathcal F}_3(\lambda)$ is purely imaginary; it equals
\begin{eqnarray} \label{f3-integrated}
    {\rm Im\,}{\mathcal F}_3(\lambda) &=&  16 \lambda \; _2F_3\left(\frac{1}{2},\frac{1}{2};\frac{3}{2},\frac{3}{2},\frac{3}{2};-\frac{\lambda^2}{4}\right) \nonumber\\
    &+&\frac{1}{3} \lambda^3 \; _3F_4\left(1,1,1;2,2,2,\frac{5}{2};-\frac{\lambda^2}{4}\right) \nonumber\\
    &-& 2 \pi ^{3/2} G_{2,4}^{2,1}\left(\frac{\lambda^2}{4}\Bigg|
       \begin{array}{c}
    \frac{1}{2},1_{\Big.} \\
    {\frac{1}{2}}^{\,},{\frac{1}{2}}^{\,},0,0 \\
    \end{array} \right) - 4 \lambda {\rm Ci}(\lambda)^2 +4 \pi  {\rm Ci}(\lambda)\nonumber\\
    &-& 8  (1-\gamma)\lambda {\rm Ci}(\lambda)
    +8  \lambda {\rm Ci}(\lambda) \log \lambda + 8 {\rm Ci}(\lambda) \sin \lambda \nonumber \\
    &-& 4  \lambda {\rm Si}(\lambda)^2+4  \pi  \lambda {\rm Si}(\lambda)+8  (1-\gamma){\rm Si}(\lambda) \nonumber\\
    &-& 8  {\rm Si}(\lambda) \log \lambda -8  {\rm Si}(\lambda) \cos \lambda -4  \left( \gamma^2 + \frac{\pi^2}{6}\right) \lambda \nonumber\\
    &-& 8  (1-\gamma)\lambda -4  \lambda \log^2 \lambda +8 (1-\gamma) \lambda  \log \lambda
    \nonumber\\
    &-& 4  \pi  \log \lambda +8 (1-\gamma) \sin \lambda  +4  \pi  \cos \lambda \nonumber\\
    &-& 8  \sin\lambda \log \lambda  - 4 (1+\gamma) \pi.
\end{eqnarray}
Here, $G_{2,4}^{2,1}$ is the Meijer $G$-function. The behavior of ${\mathcal F}_3(\lambda)$ at zero and infinity is described by two expansions:
\begin{eqnarray} \label{f3-integral-as-0}
    {\rm Im\,}{\mathcal F}_3(\lambda) = - \frac{2}{3} \pi^2\lambda + \pi \lambda^2 +{\mathcal O}(\lambda^3) \quad {\rm as} \quad \lambda\rightarrow 0,
\end{eqnarray}
and
\begin{eqnarray} \label{f3-integral-as-inf}
    {\rm Im\,}{\mathcal F}_3(\lambda) &=& - \frac{4}{\lambda} - \frac{8}{\lambda^2} \sin\lambda \, (\gamma+\log \lambda) \nonumber\\
    &+& \frac{4}{\lambda^3} \left(
        \frac{1}{3} - 6 \cos\lambda + 8 \gamma \cos\lambda + 8 \cos\lambda \log\lambda
    \right) \nonumber\\
    &+& {\mathcal O} \left(
        \frac{\log\lambda}{\lambda^4}
    \right) \quad {\rm as} \quad \lambda\rightarrow \infty.
\end{eqnarray}

This can readily be translated to the third cumulant $\kappa_3^{(\rho)}(\lambda)$ of the ${\rm Sine}_2$ counting function. Equations (\ref{Fk-integrated-kappa-L}) and (\ref{kappa-rho}) yield:
\begin{eqnarray}\label{kappa-3-new}
    \kappa_3^{(\rho)}(\lambda) = -\frac{3}{4\pi^3} {\rm Im\,} {\mathcal F}_3(2\pi\rho\lambda).
\end{eqnarray}
Notice that Eqs.~(\ref{kappa-3-new}) and (\ref{f3-integrated}) provide an explicit formula for the third cumulant of the ${\rm Sine}_2$ counting function. To the best of our knowledge, this result has never been reported in the random-matrix-theory literature.

The asymptotic behavior of the third cumulant can be read off from Eqs.~(\ref{f3-integral-as-0}) and (\ref{f3-integral-as-inf}); it has been announced in the introductory Section~\ref{m-res-dis}, see Eq.~(\ref{3rd-cum-as-intro}) and a brief discussion therein.

(v) Since the function $f_4(t)$ is quite cumbersome, see Eq.~(\ref{f4t-def}), we failed to determine ${\mathcal F}_4(\lambda)$ explicitly. Yet, its asymptotic behavior at zero and infinity is clearly within the reach. In particular,
\begin{eqnarray} \label{f4-integral-as-0}
    {\mathcal F}_4(\lambda) =  \frac{\pi^3}{3}\lambda -\frac{7\pi^2}{6} \lambda^2 +\pi \lambda^3  +{\mathcal O}(\lambda^4) \quad {\rm as} \quad \lambda\rightarrow 0.
\end{eqnarray}
To estimate ${\mathcal F}_4(\lambda)$ as $\lambda\rightarrow\infty$, we re-write it as follows
\begin{eqnarray}\label{F4-2-parts}
    {\mathcal F}_4(\lambda) = \int_0^\lambda \frac{f_4(t)}{t} dt = {\mathcal F}_4(\infty) - \int_\lambda^\infty \frac{f_4(t)}{t} dt,
\end{eqnarray}
where ${\mathcal F}_4(\infty)=-\zeta(3)$, see Eq.~(\ref{F2k-inf}). Further, we substitute Eq.~(\ref{f4t-final-inf-est}) into Eq.~(\ref{F4-2-parts}) to derive:
\begin{eqnarray} \label{f4-integral-as-inf}
    {\mathcal F}_4(\lambda) =  -\zeta(3) + {\mathcal O}\left(\left(\frac{\log \lambda}{\lambda}\right)^2 \right) \quad {\rm as} \quad \lambda\rightarrow \infty.
\end{eqnarray}

(vi) For $k\ge 5$, we notice that
\begin{eqnarray} \label{eq:Fk0}
\mathcal{F}_k(\lambda)=\mathcal{O}(\lambda) \quad {\rm as} \quad \lambda\rightarrow 0, 	
\end{eqnarray}
see Eq.~(\ref{fkt-0}).  At infinity, it holds:
\begin{eqnarray} \label{eq:Fkinf} \fl \qquad
	\mathcal{F}_k(\lambda)=\mathcal{F}_k(\infty)-\int_\lambda^\infty \frac{f_k(t)}{t}dt=\mathcal{F}_k(\infty)+\mathcal{O}\left(\frac{\log^k(\lambda)}{\lambda}\right), \qquad \lambda\rightarrow \infty. 	
\end{eqnarray}
Here we have used a non-sharp estimate Eq.~(\ref{eq:fkinf}).

\smallskip\smallskip\smallskip

\newpage

\section*{References}
\fancyhead{} \fancyhead[RE,LO]{References}
\fancyhead[LE,RO]{\thepage}

\end{document}